\theoremstyle{plain}
\newtheorem{theorem}{Theorem}[section]
\newtheorem{lemma}[theorem]{Lemma}
\newtheorem{proposition}[theorem]{Proposition}
\theoremstyle{definition}
\newtheorem{definition}[theorem]{Definition}
\newtheorem*{example}{Example}
\theoremstyle{remark}
\newtheorem{algorithm}{Algorithm}
\newtheorem{remark}{Remark}
\DeclarePairedDelimiter{\Norm}{\lVert}{\rVert}
\def\indist{\rightsquigarrow}
\def\ind{\perp\!\!\!\perp}
\newcommand{\var}{\text{var}}
\newcommand{\Pb}{\mathbb{P}}
\newcommand{\Pn}{\mathbb{P}_n}
\newcommand{\Un}{\mathbb{U}_n}
\newcommand{\E}{\mathbb{E}}
\newcommand{\R}{\mathbb{R}}
\newcommand{\pihat}{\widehat\pi}
\newcommand{\muhat}{\widehat\mu}
\newcommand{\phat}{\widehat{p}}
\newcommand{\mhat}{\widehat{m}}
\newcommand{\hhat}{\widehat{h}}
\newcommand{\rhat}{\widehat{r}}
\newcommand{\Qhat}{\widehat Q}
\newcommand{\ghat}{\widehat{g}}
\newcommand{\shat}{\widehat{s}}
\newcommand{\Ehat}{\widehat{\E}}
\newcommand{\fhat}{\widehat{f}}
\newcommand{\thetahat}{\widehat{\theta}}
\DeclareMathOperator*{\argmin}{arg\,min}
\DeclareMathOperator{\sgn}{sgn}
\newcommand{\one}{\mathbbm{1}}
\newcommand{\blind}{1}
\begin{document}

	\def\spacingset#1{\renewcommand{\baselinestretch}%
		{#1}\small\normalsize} \spacingset{1}

	
	\if1\blind
	{
		\title{\bf Fast convergence rates for dose-response estimation}
		\author{Matteo Bonvini\thanks{Department of Statistics, Rutgers, The State University of New Jersey. Email: mb1662@stat.rutgers.edu} \and Edward H. Kennedy\thanks{Associate Professor, Department of Statistics \& Data Science, Carnegie Mellon University. Email: edward@stat.cmu.edu.}}
		\maketitle
	} \fi
	
	\if0\blind
	{
		\bigskip
		\bigskip
		\bigskip
		\begin{center}
			{\bf Title}
		\end{center}
		\medskip
	} \fi
	
	\bigskip
\begin{abstract}
We consider the problem of estimating a dose-response curve. Continuous treatments arise often in practice, e.g. in the form of time spent on an operation, distance traveled to a location or dosage of a drug. Letting $A$ denote a continuous treatment variable, the target of inference is the expected outcome if everyone in the population takes treatment level $A=t$. Under standard assumptions, the dose-response function takes the form of a partial mean. Building upon the recent literature on nonparametric regression with estimated outcomes, our first contribution is to study global and local estimators of the dose-response based on empirical risk minimization. Our second and main contribution is to construct a $m^{\text{th}}$-order estimator based on the theory of higher-order influence functions. Under certain conditions, this higher order estimator achieves the fastest rate of convergence that we are aware of for this problem. However, the other two approaches are easier to implement using off-the-shelf software, since they are formulated as two-stage regression tasks. For each estimator, we provide an upper bound on the mean-square error and investigate its finite-sample performance through simulations and an empirical application. Finally, the supplementary material introduces a flexible, nonparametric approach for sensitivity analysis to violations of the no-unmeasured-confounding assumption with continuous treatments.
\end{abstract}

\section{Introduction}
\color{black}
In this work we study causal effects defined by a continuous treatment in non-randomized studies. Continuous or multi-valued treatments arise frequently in practice; examples include time, distance traveled, or dosage of a drug. Estimating the effect of such treatments on an outcome poses statistical challenges that do not arise with binary or discrete treatments. 

As in the discrete-treatment case, estimating causal effects in non-randomized studies typically requires estimation of complex nuisance functions. With discrete treatments, these nuisances consist of the outcome regression and the probability of treatment given the covariates (the propensity score). When these functions are estimated sufficiently accurately, average causal effects can be estimated at the fast parametric rate $n^{-1/2}$ even in nonparametric models. When the treatment is continuous, the propensity score is replaced by the conditional density of the treatment given the covariates. However, in this case even the average effect corresponding to a fixed treatment level is no longer $n^{-1/2}$-estimable in nonparametric models.

The estimand of interest is the dose–response curve $\theta_0(t)$, which maps each treatment level to the expected outcome in the population if every subject were assigned that level. Under causal assumptions, it equals $\E\{\E(Y \mid A = t, X)\}$, where $Y$ is the outcome, $A$ is the treatment and $X \in \R^p$ are measured confounders needed for identification. As noted in \cite{kennedy2017nonparametric}, $\theta_0(t)$ is a hybrid parameter sharing features of both functional estimation and nonparametric regression. In particular, their work shows that regressing a suitably constructed pseudo-outcome on $A$ yields an estimator of $\theta_0(t)$ with favorable statistical properties. Their theoretical analysis, however, was confined to local linear estimation. Our first contribution is therefore to build on recent results of \cite{foster2019orthogonal} and \cite{kennedy2020optimal} to extend the insights of \cite{kennedy2017nonparametric} to a broader class of estimators (Section \ref{sec:dr_est}). Section \ref{sec:lit_review_dr} reviews state-of-the-art approaches to estimate $\theta_0(t)$ by either regressing or averaging (locally) a carefully constructed signal.

An advantage of the estimators described in Section \ref{sec:lit_review_dr} and our proposals in Sections \ref{sec:dr_est} is that they do not require the nuisance functions to be estimated using a specific procedure or to belong to a particular function class. Borrowing terminology from \cite{balakrishnan2023fundamental}, they are suitable for \textit{structure-agnostic} estimation. However, their convergence rates can be potentially suboptimal when the nuisances possess additional structure. A canonical example--- also the focus of our work--- arises when the conditional density of $A$ given $X$ and the outcome regression are H\"{o}lder smooth. Our second and main contribution is to construct an estimator of $\theta_0(t)$ that builds upon, and extends the use of, the theory of higher-order influence functions \citep{robins2008higher, robins2017higher} (Section \ref{sec:higher_order}). To our knowledge, this estimator achieves the fastest convergence rate currently available in the literature for estimating $\theta_0(t)$ in H\"{o}lder models. We conjecture that the derived rate is minimax optimal, at least under certain conditions, but the construction of a tight lower bound in this setting remains elusive. Deriving the minimax rate for estimating $\theta_0(t)$ in smoothness models represents a natural and important next step, as recognized also in the recent review of open challenges in causality and causal inference \citep{cinelli2025challenges}.
\color{black}
\subsection{Notation \& setup}\label{section:notation}
\color{black}
We let  $A \in \mathcal{A} \subset \R$  denote the continuous treatment and $Y \in \mathcal{Y} \subset \R$ the outcome. Within the potential outcomes framework \citep{rubin1974estimating}, the expected outcome if everyone in the population takes $A = t$ is $\E(Y^t)$. Because $A$ is continuous, $t \mapsto \E(Y^t)$ is referred to as the \textit{dose-response function} (DRF). Under standard assumptions (see e.g. \cite{kennedy2017nonparametric}), the DRF takes the form:
\begin{align*}
		\theta_0(t) = \E\{\E(Y \mid A = t, X)\} = \int \E(Y \mid A = t, X = x) d\Pb(x),
\end{align*}
where $X \in \mathcal{X} \subset \R^d$ denotes measured confounders. We consider estimating $\theta_0(t)$ having access to $n$ iid copies of $Z = (Y, A, X)$ sampled from some distribution $\Pb$ with density $p$ with respect to the Lebesgue measure. To simplify the notation, we write $p(u)$ for the density of a generic random variable $U$ with respect to the Lebesgue measure and define
\begin{align*}
&  \pi(a \mid x) = \frac{p(a, x)}{p(x)}, \quad \mu(a, x) = \E(Y \mid A = a, X = x), & w(a, x) = \frac{p(a)}{\pi(a \mid x)}.
\end{align*}
That is, $\mu(a, x)$ is the outcome regression, and $\pi(a \mid x)$ is the conditional density of $A$ given $X = x$. We will sometimes denote the main nuisance functions by $\eta = \{p(a), \mu(a, x), \pi(a \mid x)\}$. With this notation, we have
\begin{align*}
	\theta_0(t) = \E\{\mu(t, X)\} = \E \left\{w(t, X) Y \mid A = t\right\}.
\end{align*}
For a kernel function $K(u)$, we define $K_{ht}(a) = h^{-1} K((a - t) / h)$. 

We use the notation $\Pb\{g(Z)\} = \int g(z) d\Pb(z)$ and $\Pn\{g(Z)\} = n^{-1} \sum_{i = 1}^n g(Z_i)$ to denote means (given $g$) and sample means. Further, we let $\Norm{f}^2 = \int f^2(z) d\Pb(z)$. We write $a \lesssim b$ if there exists a constant $C$ independent of the sample size such that $a \leq C b$. For shorthand notation, we also let $a \land b = \min(a, b)$, $a \lor b = \max(a,b)$ and $a \asymp b$ whenever $a \lesssim b$ and $b \lesssim a$.

\medskip
Throughout the paper, we will rely on the following assumptions. Additional assumptions will be introduced as needed. 
\begin{enumerate}
    \item Positivity: $\pi(a \mid x)$ and its estimator $\pihat(a \mid x)$ are bounded above and away from zero for all $a \in \mathcal{A}$ and $x \in \mathcal{X}$;
    \item Boundedness: $Y$, $A$ and $\muhat(a, x)$ (the estimator of $\mu(a, x)$) are uniformly bounded.
\end{enumerate}
Positivity is sufficient to ensure that $\theta_0(t)$ is well-defined, but it is not sufficient to interpret $\theta_0(t)$ causally. To interpret $\theta_0(t)$ as the average effect of assigning $A = t$ on $Y$, one needs to impose additional causal assumptions such as $Y^t \ind A \mid X$ and $A = t \implies Y^t = Y$, i.e., no unmeasured confounding and consistency (e.g. no interference). This paper is about estimating $\theta_0(t)$, regardless of its interpretation, and we refer to \cite{kennedy2017nonparametric} and references therein for more details on identification. 

Finally, our focus will be on estimation of the dose-response in nonparametric models where the dose-response itself and the nuisance functions possess varying degrees of smoothness. In particular, we will distinguish between the smoothness levels of the dose response $a \mapsto \theta_0(a)$, the conditional density of the treatment given the measured confounders $(a, x) \mapsto \pi(a \mid x)$, and the outcome regression $(a, x) \mapsto \mu(a, x)$. We will further refine this distinction when introducing the $m^{\text{th}}$-order estimator in the sense that we will consider models where $a \mapsto \mu(a, x)$, $x \mapsto \mu(a, x)$, $a \mapsto \pi(a \mid x)$ and $x \mapsto \pi(a \mid x)$ may have different smoothness levels. Note that it is reasonable to expect the smoothness of $a \mapsto \mu(a, x)$ to match that of $a \mapsto \theta_0(a) = \int \mu(a, x)d\Pb(x)$ in most applications.
\color{black}
\subsection{Literature review}
Because $A$ is continuous, $\theta_0(t)$ is generally not
$n^{-1/2}$-estimable in nonparametric models. In particular,
it can be written as $\theta_0(t) = \E\{ w(t,X)Y \mid A = t \}$, where $w(t,X)$ is a weighting function depending on the conditional treatment density. Thus, even in an idealized randomized study where the signal $w(t,X)Y$ would be fully observed, estimating $\theta_0(t)$ would reduce to a nonparametric regression problem with response $w(t,X)Y$ and predictor $A$. Consequently, the fastest convergence rate attainable would be the rate for nonparametric regression.

To compare the performance of different estimators, it will be useful to identify regimes under which they behave similarly to an oracle estimator that regresses $\varphi(Z)$ on $A$, where
	\begin{align}\label{eq:varphi}
		\varphi(Z) = w(A, X) \{Y - \mu(A, X)\} + \int \mu(A, x) , d\Pb(x).
	\end{align}
	It can be verified that $\theta_0(t) = \E\{\varphi(Z) \mid A = t\}$, so that regressing $\varphi(Z)$ on $A$ indeed recovers the dose–response function up to the ordinary nonparametric regression error.
\begin{definition}[Oracle rate]\label{def::oracle_rate}
Given an iid sample $Z_1, \ldots, Z_n$, let $\widetilde{\theta}(\cdot)$ be the (infeasible) estimator regressing $\varphi(Z)$ on $A$ and let $r_n$ be its convergence rate measured under a given loss. We refer to $r_n$ as the \textit{oracle rate}. 
\end{definition}
The definition of the oracle rate above depends on how the error in estimating the dose--response curve is measured. For example, $r_n$ my correspond to the to the risk of $\widetilde{\theta}(t)$ under squared loss, either pointwise, $\E[\{\widetilde\theta(t) - \theta_0(t)\}^2]$, or integrated, $\int \E[\{\widetilde\theta(t) - \theta_0(t)\}^2] p(t) dt$ where $p(t)$ denotes the density of $A$. If $\theta_0(t)$ is H\"{o}lder-smooth of order $\alpha$, and $\widetilde\theta(t)$ is a minimax optimal estimator of $\theta_0(t)$, then the square risk $r_n$ (either pointwise or integrated) is of order $n^{-2\alpha/(2\alpha + 1)}$ \citep{tsybakov2008introduction}.

One popular approach to dose–response estimation is to specify a \textit{marginal structural model} $\theta_0(t) = m(t; \beta)$ \citep{robins2000marginal}, or to change the target of inference from $\theta_0(t)$ to a \textit{projection} of $\theta_0(t)$ onto a finite-dimensional model $g(t; \beta)$. We refer to \cite{neugebauer2007nonparametric} and \cite{ai2018unified} for discussions of efficient estimation in the latter case. Estimation of $\beta$ in these settings is typically an easier statistical problem than estimating $\theta_0(t)$ fully nonparametrically. Another approach is to impose structural assumptions on the dose–response function. For instance, if it is known that the treatment cannot harm patients, one may impose a monotonicity assumption \citep{westling2020causal, westling2020unified}. Yet another approach is to select a candidate estimator of $\theta_0(t)$ by minimizing an estimate of its risk. The key insight is that, while $\theta_0(t)$ is generally not estimable at $n^{-1/2}$-rates, the integrated risk of a candidate estimator can be estimated at the parametric rate under certain conditions \citep{diaz2013targeted, van2003unified}.

In the context of nonparametric estimation of $\theta_0(t)$, \cite{newey1994kernel} derive sufficient conditions under which a two-stage kernel estimator is asymptotically normal and unbiased. Their estimator is of the plug-in type and takes the form $\widehat{\theta}(t) = n^{-1}\sum_{i=1}^n \widehat{\mu}(t, X_i)$, where $\widehat{\mu}(t,x)$ is a kernel-smoothed estimator of $\mu(t,x)$ based on a bandwidth $h$. Building on the estimator considered in \cite{newey1994kernel}, \cite{flores2007estimation} develop plug-in estimators for the maximum of $\theta_0(t)$ and the value of $t$ at which this maximum is attained. \cite{galvao2015uniformly} study estimation and testing of continuous treatment effects in general settings using inverse-probability-weighted estimators. Finally, \cite{singh2020reproducing} analyze plug-in estimators of general causal functions using reproducing kernel methods.

Conceptually aligned with the approach to dose-response estimation developed in this work are estimators derived from representing $\theta_0(t)$ as $\theta_0(t) =  \E\{\varphi(Z) \mid A = t\}$, where $\varphi(Z)$ is defined Eq. \eqref{eq:varphi}. This representation motivates estimators that regress the \textit{pseudo-outcome} $\varphi(Z)$ onto $A$. Because $\varphi(Z)$ depends on unknown nuisance functions, it must be estimated from the data. We thus refer to the regression of $\varphi(Z)$ on $A$ as a second-stage regression. The key observation is that $\varphi(Z)$ is constructed so that an estimator $\widehat\E_n\{\widehat\varphi(Z) \mid A = t\}$ can behave like the oracle $\widehat\E_n\{\varphi(Z) \mid A = t\}$ even when $\widehat\varphi(Z)$ converges to $\varphi(Z)$ at a slower rate than the rate at which $\widehat\E_n\{\varphi(Z) \mid A = t\}$ converges to $\theta_0(t)$. Here, the notation $\widehat{\E}_n(\cdot \mid A = t)$ refers to an estimator of the regression of $\widehat\varphi(Z)$ or $\varphi(Z)$ onto $A$, treating $\widehat\varphi(Z)$ as just an ordinary outcome. We conclude this section with a review of the use of $\varphi(Z)$ in the estimators proposed in \cite{kennedy2017nonparametric}, \cite{semenova2017debiased} and \cite{colangelo2020double}.

\subsubsection{Review of existing doubly-robust estimators \label{sec:lit_review_dr}}
We briefly review three doubly robust estimation strategies based on $\varphi(Z)$ defined in Eq.  \eqref{eq:varphi}, whose associated risk bounds decompose as \textit{oracle rate + second-order doubly robust remainder terms}. The doubly robust nature of these remainders is discussed below.

The quantity $\varphi(Z)$ depends on the collection of nuisance functions $\eta = \{p(A), \pi(A \mid X), \mu(A, X)\}$ and satisfies a Neyman-orthogonality condition:
 \begin{align*}
 	\partial r \E\{\varphi(Z; \eta + r \cdot(\overline\eta - \eta )) \mid A = a\}|_{r = 0} = 0 \text{ for all } a, \overline\eta.
 \end{align*}
 where $\eta$ denotes the true nuisance functions and $\eta + r \cdot (\overline\eta - \eta)$ denotes a perturbation in the direction $(\overline\eta - \eta)$.\footnote{More specifically, we have that \begin{align*}
	\frac{\partial}{\partial r_j} \E\left(\frac{Y - [\mu(A, X) + r_1\{\overline\mu(A, X) - \mu(A, X)\}]}{\pi(A \mid X) + r_2\{\overline\pi(A \mid X)- \pi(A \mid X)\}} \left[p(a) + r_3\{\overline{p}(A) - p(A)\}\right] \mid A = a\right)\Big|_{r_j = 0} = 0.
	\end{align*} 
 }
 This implies that the loss $\{\varphi(Z;\eta) - \theta(A)\}^2$ is universally Neyman-orthogonal in the sense that
 \begin{align*}
		& \partial r _2	\partial r_1 \E[\varphi(Z; \eta + r_2(\overline\eta - \eta)) - \theta(A) - r_1\{\theta(A) - \overline\theta(A)\}]^2|_{r_1 = r_2 = 0} \\
		& = -2\int \partial r \E\{\varphi(Z; \eta + r(\overline\eta - \eta)) \mid A = a\}_{r = 0}\{\theta(a) - \overline\theta(a) \} d\Pb(a) \\
		& = 0
 \end{align*}
 for any $\theta, \overline\theta$. See also Definition 2 and Assumption 1 in \cite{foster2019orthogonal}. This first-order insensitivity to perturbations of the nuisance functions explains why regressing $\widehat\varphi(Z)$ on $A$ can produce estimators of $\theta_0(t)$ whose remainder terms are second order.
\begin{remark}
	Constructing estimators satisfying Neyman-orthogonality conditions has a long history in statistics, albeit under different names. For example, in functional estimation, and, in particular, estimation of average treatment effects, estimators that are ``Neyman-orthogonal,'' ``bias-corrected,'' ``augmented-inverse-probability-weighted,'' or, more generally, constructed according to the ``double machine learning'' framework are all based on first-order functional Taylor expansions, also known as von-Mises expansions \citep{kennedy2022semiparametric}. In fact, underlying Neyman orthogonality is a first-order expansion of the target estimand $\psi(\Pb)$, viewed as a function of the unknown distribution $\Pb$, around an estimator $\widehat\Pb$ of $\Pb$. If the derivative term, say $\psi^{'}(\Pb - \widehat\Pb; \widehat\Pb)$, exists then the estimator consisting of the (estimated) derivative term plus the initial estimator $\psi(\widehat\Pb)$ should exhibit second-order error rates. For smooth functionals, the derivative term can be written as $\psi^{'}(\Pb - \widehat\Pb; \widehat\Pb) = \int \phi(z; \widehat\Pb) d\Pb(z)$, where $\phi(z)$ is the influence function and is mean-zero. For more complex parameters, such as the dose-response curve, this representation is generally not possible. However, one may try to express the derivative as an integral with respect to the conditional distribution of the observations given, for example, the treatment. 
\end{remark}

\cite{kennedy2017nonparametric} show that, when the second stage regression $\widehat\E_n(\widehat\varphi(Z) \mid A = a)$ is a local linear regression, the oracle rate is attained under negligibility of a remainder term, which can be upper bounded as
\begin{align}\label{eq::edward_r}
	\sup_{a: |a - t| \leq h}\| \pihat(a \mid X) - \pi(a \mid X)\| \|\muhat(a, X) - \mu(a, X)\|,
\end{align}
where $h$ is the bandwidth used in the second-stage regression. A similar requirement appears in \cite{colangelo2020double}. Notice that this term is \textit{second order}, as it takes the form of a product of estimation errors. As such, one nuisance function may be estimated at a relatively slow rate, provided the other is estimated sufficiently quickly to compensate. This property is often referred to as \textit{rate double robustness}. Borrowing terminology from \cite{kennedy2020optimal}, we refer to estimators constructed from $\varphi(Z)$ as \textit{DR-learners} (doubly robust learners), and say that their remainder terms exhibit \textit{doubly robust error}.

\cite{semenova2017debiased} study a DR-learner, in which the second-stage regression is implemented using a sieve estimator. Their estimator uses \textit{cross-fitting}, whereby, for a given fold $k$, the nuisance functions are estimated on all folds but $k$, and the second-stage regression is computed using observations from $k$. This construction bypasses the need to impose Donsker conditions on the nuisance functions.

The approach of \cite{colangelo2020double} is different. Instead of regressing $\widehat\varphi(Z)$ onto $A$, the estimator is
\begin{align}\label{eq:colangelo}
\thetahat_1(t) = \frac{1}{n}\sum_{i = 1}^n \left[\frac{K_{ht}(A_i)\{Y_i - \muhat(t, X_i)\}}{\pihat(t \mid X_i)} + \muhat(t, X_i)\right].
\end{align}
This estimator still enjoys second order rates, but it is less clear how it adapts to different levels of smoothnesss of $\theta_0(t)$. That is, their error rates may be of the form ``oracle + second-order terms'' only in certain smoothness regimes for $\theta_0(t)$. This is in contrast to estimators based on regressing $\widehat\varphi(Z)$ on $A$, which would behave like an oracle, and thus adapt to the smoothness of $\theta_0(t)$, as long as the second-order remainder terms are negligible. Their analysis focuses on low-smoothness regimes; viewed as a function of $a$, they assume that the joint density of the observations is three-times differentiable. Notice that this implies that both the outcome regression $a \mapsto \mu(a, x)$ and the conditional density $a \mapsto \pi(a \mid x)$ are three-times differentiable. In practice, however, it could be that $a \mapsto \mu(a, x)$ and thus the dose-response curve are smoother than $a \mapsto \pi(a \mid x)$. Our $m^{\text{th}}$-order estimator is an extension of \eqref{eq:colangelo} and appears to track the smoothness of the dose-response only in cases when this is no-greater than the smoothness of $a \mapsto \pi(a \mid x)$, which appears to be consistent with the results in \cite{colangelo2020double}.
\subsection{Our main contributions}
\color{black}
In this work, we study two approaches to dose-response estimation: one based on regressing an estimate of the pseudo-outcome $\varphi(Z)$ (Eq. \eqref{eq:varphi}) onto $A$ (Section \ref{sec:dr_est}) and one based on higher-order corrections (Section \ref{sec:higher_order}). For the first approach, building upon the work of \cite{kennedy2017nonparametric, semenova2017debiased, foster2019orthogonal} and \cite{kennedy2020optimal}, we analyze a broad class of empirical risk minimization estimators of $\theta_0(t)$ that correct the first-order bias of plug-in estimators. We study both global estimation of the function over its entire support and local estimation at a fixed treatment level.  The resulting estimators enjoy the double-robustness property, and we provide an explicit characterization of their remainder terms. This characterization implies faster convergence rates than those directly obtainable from the results of \cite{foster2019orthogonal} whenever the treatment and outcome models are estimated at different rates.

In the context of H\"{o}lder smooth models, our second and main contribution is to show how the convergence rates of these estimators can be substantially improved using kernel-smoothed, approximate higher-order influence functions (HOIFs) \citep{robins2008higher, robins2009quadratic, robins2017higher}. To the best of our knowledge, our higher-order estimator represents the first application of HOIFs to the estimation of a dose-response curve. Moreover, the resulting estimator achieves the fastest convergence rate currently available in the model considered.

Finally, extending the work of \cite{bonvini2022MSM} on sensitivity analysis in marginal structural models, the supplementary material describes a simple yet flexible framework for assessing the impact of potential unmeasured confounders on the dose-response estimates. We then analyze the performance of DR-learners for the bounds on $\theta_0(t)$ obtained under this sensitivity model.

\color{black}
\section{Doubly-robust estimators}\label{sec:dr_est}
\subsection{General doubly-robust estimation procedure}
In this section, we build upon the general procedure proposed in \cite{foster2019orthogonal}, as well as the work of \cite{kennedy2020optimal} on heterogeneous (binary) treatment effect estimation, to expand the class of estimators of $\theta_0(t)$ whose remainder terms are second-order and doubly robust. The work by \cite{foster2019orthogonal} already yields estimators with second-order remainder terms, but their analysis characterizes the rates in terms of $\Norm{\widehat\eta - \eta}_{\mathcal{F}}$, where $\Norm{f}_{\mathcal{F}}$ denotes a norm on the function space $\mathcal{F}$ in which the nuisance functions $\eta$ lie. We tailor their analysis to the dose-response setting and show that it is possible to obtain estimators whose remainders are also doubly robust. Establishing this property is particularly important when the nuisance estimators converge at different rates, since the product of the errors would be of smaller order than the sum of the squared errors.

Let $Z_1^n, Z_2^n$ and $Z_3^n$ denote three independent samples of size $n$. We will work with estimates of the pseudo-outcome $\varphi(Z_j)$ of the form
\begin{align*}
\widehat\varphi(Z_j) = \widehat{w}(A_j, X_j)\{Y_j - \muhat(A_j, X_j)\} + \frac{1}{n}\sum_{i = 1}^n \muhat(A_j, X_i),
\end{align*}
where $\muhat(a, x)$ and $\widehat{w}(a, x)$ are estimated using observations in $Z_1^n$, the observations $(X_i)_{i = 1}^n$, appearing in the second term, belong to $Z_2^n$ and $Z_j$ belongs to $Z_3^n$. An alternative approach, taken in \cite{semenova2017debiased}, is to consider only two samples, say $Z_1^n$ and $Z_2^n$, and compute
\begin{align*}
\widehat\varphi(Z_j) = \widehat{w}(A_j, X_j)\{Y_j - \muhat(A_j, X_j)\} + \frac{1}{n-1}\sum_{i \neq j}^n \muhat(A_j, X_i)
\end{align*}
for $Z_j$ and $(X_i)_{i = 1}^n$ in the same sample $Z_2^n$. We proceed by considering three separate samples to simplify the analysis of all our estimators, as we have $\widehat\varphi(Z_k) \ind \widehat\varphi(Z_l) \mid (Z_1^n, Z_2^n)$ for $k \neq l$. The roles of $Z_1^n$, $Z_2^n$ and $Z_3^n$ can be swapped, which results in three estimators of $\theta_0(t)$. One can then take their average as the final estimator. From a sample of iid observations, it is possible to obtain separate independent samples simply by randomly splitting the data into sub-samples. To keep the notation as light as possible, we analyze the theoretical properties of the estimators based a single split into three subsamples. An alternative strategy could be to divide the sample into $S$ folds. Using all but observations in fold $s$, one would estimate the nuisances $\omega(a, x)$ and $\mu(a, x)$. Then, an estimate of $\int \widehat\mu(a, x) p(x) dx$ would be computed using half of the observations in fold $s$, while the remaining half would be used to evaluate $\widehat\varphi(Z_j)$ and regress it on $A$. Repeating this procedure for each fold, one obtains $S$ estimates (or $2S$ estimates if the sub-folds in fold $s$ are swapped) that would be averaged. The further split of fold $s$ into two sub-splits could be replaced by the leave-one-out average considered in \cite{semenova2017debiased} mentioned above. \textcolor{black}{We expect the same results to hold for constructions based on multiple splits, akin to the one just described, although the arguments would need to be adjusted to account for the resulting dependence structure if the leave-one-out approach is employed.}

Our estimation procedure is summarized in the following algorithm. 
\begin{algorithm} \label{alg:first_order}
	Let $Z_1^n$, $Z_2^n$ and $Z_3^n$ denote three independent samples of $n$ iid observations of $Z = (Y, A, X)$. 
	\begin{enumerate}
		\item Nuisance training
		\begin{itemize}
			\item Using only observations in $Z_1^n$, estimate $\mu(A, X)$ with $\muhat(A, X)$ and $w(A, X)$ with $\widehat{w}(A, X)$;
			\item Using only observations in $Z_2^n$, estimate $m(a) = \int \mu(a, x) p(x) dx$ with
			$\mhat(a) = n^{-1}\sum_{i = 1}^n \muhat(a, X_i)$.
		\end{itemize}
		\item Pseudo-outcome construction: using observations in $Z_3^n$, construct the pseudo-outcome 
		\begin{align*}
			\widehat\varphi(Z) = \widehat{w}(A, X)\{Y - \muhat(A, X)\} + \mhat(A)
		\end{align*} 
		\item Second stage regression, either of the following:
		\begin{itemize}
		    \item[(a)] Global empirical-risk-minimization: Define $\thetahat$ to be the empirical risk minimizer 
		\begin{align*}
			\thetahat_{\text{erm}} = \argmin_{\theta \in \Theta} \frac{1}{n}\sum_{i \in Z_3^n}\{\widehat\varphi(Z_i) - \theta(A_i)\}^2
		\end{align*}
		where $\Theta$ is some function class.
		\item[(b)] DR-learner based on linear smoothing: Define
		\begin{align*}
		    \thetahat_{\text{ls}}(t) = \frac{1}{n} \sum_{i \in Z_3^n} W_i(t; A^n) \widehat\varphi(Z_i)
		\end{align*}
		where $W_i(t; A^n)$ are weights depending on the evaluation point $t$, $A^n = (A_1, \ldots, A_n) \subset Z_3^n$, and the choice of the linear smoother.
		\end{itemize}
		\item (Optional) Cross-fitting: swap the role of $Z_1^n$, $Z_2^n$ and $Z_3^n$ and repeat steps 1 and 2. Use the average of the estimators as an estimate of $\theta$. 
	\end{enumerate}
\end{algorithm}
In Algorithm \ref{alg:first_order}, Step 3b, the estimator of the dose-response consists of a weighted average of the pseudo-outcome $\widehat\varphi(Z)$, with weights depending on the choice of the second-stage regression estimators. Series methods, local polynomials, smoothing splines, and kernel ridge regression estimators are examples that can be written as linear smoothers in the form above for some choices of the weights; see also Example \ref{ex:poly}. In Proposition \ref{thm:dr}, we focus on the case where the weights localize the average around $A = t$ in order to derive a bound on the estimator's pointwise mean-square-error.

In the following two sections, we give error bounds for a procedure that generalizes Algorithm \ref{alg:first_order}. With some abuse of notation, we re-define $\theta_0(v)$ to denote a generic regression function of some pseudo-outcome $f(Z)$, which will be estimated by $\fhat(Z)$, onto some random variable $V \subset Z$. That is, $\theta_0(v) \equiv \E\{f(Z) \mid V = v\}$.  One can see that Algorithm \ref{alg:first_order} fits exactly this framework where $f(Z) = \varphi(Z)$, $V=A$ and $\theta_0(t)$ is the dose-response function. The estimator is $\thetahat(v) = \Ehat_n\{\fhat(Z) \mid V = v\}$, where $\Ehat_n(\cdot \mid V = v)$ is either computed as in Step 3a or 3b from a sample independent of that used to construct $\fhat(\cdot)$.

The bound on the integrated square risk (Proposition \ref{thm:erm}) and that on the pointwise risk (Proposition \ref{thm:dr}) will involve a bias term, \color{black}$\rhat(v) := \int \fhat(z) d\Pb(z \mid V = v)  - \theta_0(v)$, \color{black}that would need to be analyzed on a case-by-case basis. Lemma \ref{lemma:rhat} below shows that, for $f(Z) = \varphi(Z)$, $\rhat(v)$ is rate doubly-robust because it is a product of the error incurred in estimating $\omega$ times the error incurred in estimating $\mu$ (plus a smaller $O_\Pb(n^{-1/2})$ term).
\subsection{Upper bound on the risk of the global ERM-based estimator}
We start by considering estimating $\theta_0(t)$ via empirical loss minimization as in Algorithm \ref{alg:first_order} (a). We view this as a ``global'' method, as we estimate the function on its entire support. The error bound we describe in this section will be on the square loss and will be a specialization of the results described in \cite{foster2019orthogonal} and \cite{wainwright2019high}. \cite{foster2019orthogonal} provides a general framework for doing empirical risk minimization in the presence of nuisance components. Here, we take their approach and find that the oracle rate is achievable if $\E\|\widehat{r}\|^2$ is simply of smaller order. In particular, as shown in Lemma \ref{lemma:rhat} below, if the orthogonal signal $\varphi(Z)$ is used, $\widehat{r}$ consists of a product of errors, as opposed to simply being of second order, and thus the bound on the MSE of our procedure can be viewed as a refinement upon the bound from \cite{foster2019orthogonal}. 

The next proposition provides a bound on the error incurred by an estimator that uses an estimated outcome $\fhat(Z)$ in place of the true (unobservable) outcome $f(Z)$, when doing empirical risk minimization with the square loss to estimate a regression function $\theta_0(v) = \E\{f(Z) \mid V = v\}$. Let $\theta^* = \argmin_{\theta \in \Theta} \Norm{\theta - \theta_0}$ and $\Theta^* = \{\theta  - \theta^*: \theta \in \Theta\}$. For $\epsilon_1, \ldots, \epsilon_n$ iid Rademacher random variables (i.e., $P(\epsilon_i = -1) = P(\epsilon_i = 1) = 0.5$), which are independent of the sample, the local Rademacher complexity is defined as (see, e.g., Chapter 14 in \cite{wainwright2019high}):
\begin{align*}
	\mathcal{R}_n(\Theta^*, \delta) = \E\left\{\sup_{g \in \Theta^*: \|g\| \leq \delta} \left| \frac{1}{n} \sum_{i = 1}^n  \epsilon_i g(V_i) \right| \right\}.
\end{align*}
Let $\delta_n$ be defined as the smallest solution that satisfies
\begin{align}
	& \mathcal{R}_n(\Theta^*, \delta_n) \leq \delta_n^2 \quad \text{ and } \quad \delta_n^2 \gtrsim \frac{\log\log(n)}{n} \lor \frac{1}{2n}. \label{delta_n_def}
\end{align}
Following \cite{foster2019orthogonal}, we consider the case where $\Theta^*$ is \textit{star-shaped}. A class is star-shaped around the origin if, for any $\widetilde\theta \in \Theta^*$ and $\alpha \in [0, 1]$, it is the case that $\alpha \cdot \widetilde\theta \in \Theta^*$. For example, a convex set is star-shaped.
\begin{proposition}\label{thm:erm}
Consider two independent samples, $D^n = (Z_{01}, \ldots, Z_{0n})$ and $Z^n = (Z_1, \ldots, Z_n)$, consisting of $n$ iid copies of some generic observation $Z$ distributed according to $\Pb$. Let $V$ denote a generic random variable such that $V \subset Z$. Suppose $\fhat(\cdot)$ is constructed using only observations in $D^n$. Consider the estimator
	\begin{align*}
		\widehat\theta_{\rm{erm}} \equiv \argmin_{\theta \in \Theta} \frac{1}{n} \sum_{i = 1}^n \{\fhat(Z_i) - \theta(V_i)\}^2.
	\end{align*}
Suppose  $\Theta^*$ is star-shaped and $S \equiv \sup_{z \in \mathcal{Z}} |\fhat(z)| \ \lor \sup_{\theta \in \Theta} \| \theta \|_\infty$ is finite. Then, 
	\begin{align*}
		\E(\Norm{\thetahat_{\rm{erm}} - \theta_0}^2) \lesssim \Norm{\theta^* - \theta_0}^2 + \delta_n^2 + \E(\Norm{\rhat}^2)
	\end{align*}
	where $\Norm{f}^2 = \int f^2(z) d\Pb(z)$ and $\delta_n$ is defined in \eqref{delta_n_def}.
\end{proposition}
The error bound from Proposition \ref{thm:erm} takes the form of an oracle rate plus a term involving $\rhat$, which is controlled by Lemma \ref{lemma:rhat} when $f(Z) = \varphi(Z)$.

The assumptions underlying Proposition \ref{thm:erm} are rather mild. Appendix D in \cite{foster2019orthogonal} and Chapters 13 and 14 in \cite{wainwright2019high} describe common classes of functions for which the proposition applies, e.g., linear functions with constraints on the coefficients, functions satisfying Sobolev-type constraints or Reproducing Kernel Hilbert spaces. In order to apply Proposition \ref{thm:erm}, the class of functions considered has to be star-shaped. Following the discussion on page 424, Chapter 13, in \cite{wainwright2019high}, if the star-shaped condition is not met, statements similar to the proposition below can be established for $\delta_n$ defined in terms of the star-hull of $\Theta^*$ at zero, defined as
\begin{align*}
	\text{star}(\Theta^*) = \left\{\alpha\cdot(\theta - \theta^*), \ \theta \in \Theta, \alpha \in[0, 1] \right\}.
\end{align*} 
The boundedness assumption on $\widehat\varphi(Z)$ and $\Theta$ is used in various places in the proof, including in ensuring that the square-loss is globally Lipschitz; we expect this assumption to hold when the observations are bounded. Finally, the lower bound on $\delta_n$ from \eqref{delta_n_def} should often be satisfied; for instance, $\delta^2_n \geq 1 / (2n)$ as long as $\Theta^*$ contains the constant function $\theta(v) = 1$.\footnote{To see this, suppose that, for the sake of contradiction, $\delta_n < 1/\sqrt{2n}$. To start, because $\Theta^*$ is star-shaped, we have $g(V) = \delta_n \in \Theta^*$ because $\theta(V) = 1 \in \Theta^*$ and $\delta_n \in [0, 1]$. Then, $\Norm{g} = \delta_n$ so that
		$$\mathcal{R}_n(\Theta^*, \delta_n) \geq \delta_n \E\left(\left|\frac{1}{n}\sum_{i = 1}^n \epsilon_i \right|\right) \geq \frac{\delta_n}{\sqrt{2n}} > \delta_n^2$$
		where the second inequality is an application of the Khintchine inequality.
		 This is a contradiction because $\delta_n$ satisfies $\mathcal{R}_n(\Theta^*, \delta_n)  \leq \delta_n^2$.}
\begin{example}[Orthogonal series, Examples 13.14 and 13.15 in \cite{wainwright2019high}]\label{ex:basis}
	Suppose $\theta_0(v)$ is $\alpha$-times differentiable with $\theta^{(\alpha)}(v)$ satisfying $\int \{\theta^{(\alpha)}(v)\}^2 d\Pb(v) \leq B$ for some constant $B$. Let $\{p_j\}_{j=1}^\infty$ be an orthonormal basis of $L_2(\Pb)$, such as the sine / cosine basis (see \cite{belloni2015some} for a discussion on different basis choices). Consider estimating $\theta_0$ via ERM over the function class 
	\begin{align*}
		\Theta(k, b) = \left\{\theta_c(\cdot): \sum_{j=1}^k p_j(\cdot)c_j, \ \sum_{j = 1}^k c_j^2 \leq 1,  \text{ and } | \theta_c(\cdot) | \  \leq b \right\}
	\end{align*}
Writing $\theta_0(v) = \sum_{j = 1}^\infty p_j(u)c_{0j}$, we have $\theta^*(v) = \sum_{j = 1}^k p_j(u)c_{0j}$ and $\Norm{\theta^*-\theta_0}^2=\sum_{j=k+1}^\infty c_{0j}^2$. It can be shown that $\Norm{\theta^*-\theta_0}^2 \leq k^{-2\alpha}$. Furthermore, the function class $\Theta^*(k) = \{\theta - \theta^*, \theta \in \Theta(k)\} = \Theta(k, 2b)$ is convex and thus star-shaped and can be shown to satisfy $\delta^2_n \lesssim k / n$. Thus, Proposition \ref{thm:erm} provides an upper bound on the mean-square error of the order
\begin{align*}
	\E(\Norm{\thetahat_{\text{erm}} - \theta}^2) \lesssim k^{-2\alpha} + \frac{k}{n} + \E(\Norm{\rhat}^2)
\end{align*}
If $k$ is chosen optimally, i.e. $k \asymp n^{1 / (2\alpha + 1)}$, Proposition \ref{thm:erm} shows that the oracle rate is attained as long as $\E(\Norm{\rhat}^2)$ is of order $O(n^{-2\alpha / (2\alpha + 1)})$.
\end{example}
\subsection{Upper bound on the risk of the linear smoothing-based estimator}
In this section, we consider a DR-learner based on linear smoothing, focusing on the case where the linear smoother is localized (cf. \cite{van2006statistical} and \cite{kennedy2020optimal} for heterogeneous effects of binary treatments). Regressing $\widehat\varphi(Z)$ on $A$ via local polynomial regression represents an archetype of a localized DR-learner. \cite{kennedy2017nonparametric} propose using generic learners to regress the estimated pseudo-outcome $\widehat\varphi(Z)$ on $A$ but only analyze local linear estimators. Thus, our next proposition is an extension to their work, in the spirit of analyzing more general linear smoothers. Theorem 1 and Proposition 1 in \cite{kennedy2020optimal} yield the following proposition.
\begin{proposition}\label{thm:dr}
Consider two independent samples, $D^n = (Z_{01}, \ldots, Z_{0n})$ and $Z^n = (Z_1, \ldots, Z_n)$, consisting of $n$ iid copies of some generic observation $Z$ distributed according to $\Pb$. Let $\textcolor{black}{V}$ denote a generic variable such that $V \subset Z$. Let $\theta_0(v) \equiv \E\{f(Z) \mid V = v\}$ and suppose $\fhat(\cdot)$ is constructed using only observations in $D^n$. Consider the following estimator:
\begin{align*}
\thetahat_{\rm{ls}}(t) = n^{-1}\sum_{i = 1}^n W_i(t; A^n)\fhat(Z).
\end{align*}
Further suppose that the following regularity conditions hold:
	\begin{itemize}
		\item Minimum variance: $\var\{f(Z) \mid V = v\} \geq c > 0$ for all $v \in \mathcal{V}$ and some constant $c$;
		\item Consistency of nuisance estimators: $\sup_z|\fhat(z) - f(z)| = o_\Pb(1)$;
		\item Localized weights: $n^{-1}\sum_{i = 1}^n |W_i| \ \leq C$, for some constant $C$, and there exists a neighborhood $N_t$ around $V = t$ such that $W_i(t; V^n) = 0$ if $V_i \not\in N_t$. 
	\end{itemize}
	Then, letting $\widetilde{\theta}_{\text{ls}}(t) = n^{-1}\sum_{i = 1}^n W_i(t; V^n)f(Z_i)$ denote the oracle estimator:
	\begin{align*}
		|\widehat{\theta}_{\rm{ls}}(t) - \theta_0(t)|  & \leq \left|\widetilde{\theta}_{\rm{ls}}(t) - \theta_0(t)\right| + \sup_{u \in N_t} \left|\rhat(u)\right| + \color{black}{o_\Pb\left( \sqrt{\E\left[\left\{\widetilde{\theta}_{\text{ls}}(t) - \theta_0(t)\right\}^2\right]} \right)}.
	\end{align*}
\end{proposition}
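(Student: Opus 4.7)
The plan is to begin with the familiar oracle decomposition
\begin{align*}
\widehat\theta(t) - \theta_0(t) = \bigl\{\widetilde\theta(t) - \theta_0(t)\bigr\} + \bigl\{\widehat\theta(t) - \widetilde\theta(t)\bigr\},
\end{align*}
which reduces the problem to controlling the difference between the feasible and oracle estimators,
$$
\widehat\theta(t) - \widetilde\theta(t) = n^{-1}\sum_{i=1}^n W_i(t; U^n)\bigl\{\fhat(Z_i) - f(Z_i)\bigr\}.
$$
Because $\fhat$ is trained on the independent sample $D^n$, I would condition on $D^n$ throughout, treating $\fhat$ as a fixed function. The natural next move is to add and subtract the conditional bias $\rhat(U_i) = \E\{\fhat(Z_i) - f(Z_i) \mid U_i, D^n\}$, yielding
$$
\widehat\theta(t) - \widetilde\theta(t) = \underbrace{n^{-1}\sum_{i=1}^n W_i(t;U^n)\,\rhat(U_i)}_{(\text{I})} + \underbrace{n^{-1}\sum_{i=1}^n W_i(t;U^n)\,\varepsilon_i}_{(\text{II})},
$$
with $\varepsilon_i = \fhat(Z_i) - f(Z_i) - \rhat(U_i)$ satisfying $\E(\varepsilon_i\mid U_i, D^n)=0$.

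Term (I) is handled deterministically by the localized-weights hypothesis: since $W_i(t;U^n)=0$ when $U_i \notin N_t$,
$$
|(\text{I})| \le \sup_{u \in N_t} |\rhat(u)| \cdot n^{-1}\sum_{i=1}^n |W_i| \le C\sup_{u \in N_t}|\rhat(u)|,
$$
which is exactly the middle term in the claimed bound.

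Term (II) is the step I expect to be the main obstacle, as it requires transferring a noise bound into an $o_\Pb$ statement relative to the oracle error. Conditional on $(D^n, U^n)$, the $\varepsilon_i$ are independent with mean zero, so
$$
\mathrm{Var}\Bigl((\text{II}) \,\Big|\, D^n, U^n\Bigr) = n^{-2}\sum_{i=1}^n W_i^2\,\mathrm{Var}(\varepsilon_i \mid U_i, D^n) \le \Norm{\fhat - f}_\infty^{\,2}\cdot n^{-2}\sum_{i=1}^n W_i^2.
$$
By the minimum-variance assumption, the oracle has conditional variance $\mathrm{Var}(\widetilde\theta(t)-\theta_0(t)\mid U^n) = n^{-2}\sum_i W_i^2\, \mathrm{Var}(f(Z_i)\mid U_i) \ge c\cdot n^{-2}\sum_i W_i^2$, so the variance of (II) is at most $c^{-1}\Norm{\fhat - f}_\infty^{\,2}\cdot \mathrm{Var}(\widetilde\theta(t)-\theta_0(t)\mid U^n)$. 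Since $\sup_z|\fhat(z)-f(z)|=o_\Pb(1)$, a conditional Chebyshev/Markov argument followed by unconditioning (using that $\E[(\widetilde\theta(t)-\theta_0(t))^2]$ is the unconditional version of the same $n^{-2}\sum_i W_i^2$-type quantity) delivers $(\text{II}) = o_\Pb(\E[\{\widetilde\theta(t)-\theta_0(t)\}^2])$, matching the third term in the bound. Combining the three pieces by the triangle inequality completes the proof.
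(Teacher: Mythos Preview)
Your approach is essentially the paper's: the paper simply invokes Theorem~1 and Proposition~1 of \cite{kennedy2020optimal} as a black box to obtain the decomposition into oracle error, smoothed conditional bias $n^{-1}\sum_i W_i\,\rhat(U_i)$, and an $o_\Pb$ noise term, and then bounds the bias term via the localized-weights assumption exactly as you do for term~(I); your variance-comparison argument for term~(II) is precisely the content of the cited Kennedy result that the paper does not unpack. One small slip worth noting: your Chebyshev step actually delivers $(\text{II}) = o_\Pb\bigl(\sqrt{\E[\{\widetilde\theta(t)-\theta_0(t)\}^2]}\bigr)$, i.e.\ $o_\Pb$ of the oracle \emph{root}-MSE rather than the MSE as stated---but this discrepancy is inherited from the paper's own statement, which carries the same apparent typo relative to the Kennedy (2020) result it cites.
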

As discussed in \cite{kennedy2020optimal}, the assumptions underlying Proposition \ref{thm:dr} are easily satisfied for linear smoothers of the local polynomial regression variety. In particular, the weights of the local polynomial regression satisfy the assumptions (\cite{tsybakov2008introduction}, Lemma 1.3). This proposition follows from the results contained in \cite{kennedy2020optimal} that apply to general linear smoothers, e.g. it does not require the weights to be localized. The use of localized weights simplifies the analysis of the point-wise risk. In Section 3 of the supplement, we also provide a bound on the $L_p$ risk of $\widehat\theta_{\rm ls}(t)$.
\begin{example}(DR-Learner based on local polynomial regression)\label{ex:poly}
Suppose $\theta_0(t) \equiv \E\{f(Z) \mid A = t\}$ belongs to a H\"{o}lder class of order $\alpha$. A DR-learner can be based upon local polynomial regression of order $p$, where $p$ is the largest integer strictly less than $\alpha$. The weights are $$W_i(t; A^n) = s(t)^T \Qhat^{-1} K_{ht}(A_i)s(A_i)^T,$$ where $K(\cdot)$ is a kernel function, $\Qhat = \Pn\{s(A)s(A)^T\}$ and 
\begin{align*}
    s(a) = \begin{bmatrix} 1 & \frac{a - t}{h} & \ldots & \left(\frac{a - t}{h}\right)^p \end{bmatrix}^T.
\end{align*} 
A standard calculation (see, for example, \cite{tsybakov2008introduction}), yields that
\begin{align*}
\E\left[\left\{\widetilde{\theta}_{\text{ls}}(t) - \theta_0(t)\right\}^2\right] = O(n^{-2\alpha / (2\alpha + 1)}).
\end{align*}
This means that the oracle rate is attainable if $\sup_{u \in N_t} \rhat^2(u)  = O_\Pb(n^{-2\alpha / (2\alpha + 1)})$, 
which is essentially the same requirement as for the estimator based on empirical-risk-minimization, see Example \ref{ex:basis}. 
\begin{remark}[Inference]
    From Proposition \ref{thm:dr}, inference can be carried out in the oracle regime, i.e., under the assumption that $\sup_{u \in N_t} |\rhat(u)|$ is of smaller order than $|\widetilde\theta_{\text{ls}}(t) - \theta(t)|$. If this holds, all inference tools for standard local nonparametric regression can be used. For example, let $\sigma^2(t)$ be the asymptotic variance of the local polynomial $\widetilde\theta_{\text{ls}}(t)$, $\widehat\sigma^2(t)$ its consistent estimator and $b_h(t)$ the asymptotic smoothing bias. If $\sup_{u \in N_t}\widehat{r}(u) = o_\Pb((nh)^{-1/2})$, we have
    \begin{align*}
    \frac{\sqrt{nh}\{\widehat{\theta}_{\text{ls}}(t) - \theta(t) - b_h(t)\}}{\widehat\sigma(t)} \indist N(0, 1),
    \end{align*}
    as shown, for instance, in Section 4 of \cite{fan2018local}. Without undersmoothing or bias-correction, a Wald interval based on the asymptotic statement above will cover the smoothed dose-response curve $\E\{\widetilde\theta_{\text{ls}}(t)\}$, rather than $\theta_0(t)$. We refer to Section 5.7 in \cite{wasserman2006all} for additional discussion, as well as recent work by \cite{calonico2018effect} and \cite{takatsu2024debiased} on correcting the smoothing bias of local polynomial estimators.
\end{remark}
\end{example}
\subsection{Upper bound on the conditional bias of the estimated pseudo outcome}
Both Propositions \ref{thm:erm} and \ref{thm:dr} yield bounds on the MSE that are of the form the oracle rate plus a term of the order of $\rhat^2(t)$. Lemma \ref{lemma:rhat} shows that, when $\fhat(Z) = \widehat{\varphi}(Z)$, $\rhat(v)$ is second-order and doubly robust.
\begin{lemma}\label{lemma:rhat}
    With $\widehat{r}(t) = \E\{\widehat\varphi(Z) \mid A = t, D^n\} - \theta_0(t)$, it holds that
    \begin{align*}
    |\widehat{r}(t)| \ \lesssim \Norm{w - \widehat{w}}_t \Norm{\mu - \muhat}_t + \left|(\Pn^{(2)} - \Pb)\{\muhat(t, X)\}\right|,
    \end{align*}
    where $\Norm{f}^2_t = \int f^2(z) d\Pb(z \mid A = t)$ and $\Pn^{(2)}$ denotes an average over observations in sample $Z_2^n$. 
\end{lemma}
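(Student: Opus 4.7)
The plan is to compute $\E\{\widehat\varphi(Z) \mid A = t, D^n\}$ directly and then rearrange algebraically to expose the product-of-errors structure. Since $Z \in Z_3^n$ is independent of $D^n = (Z_1^n, Z_2^n)$, conditional on $D^n$ the functions $\widehat{w}(t,\cdot)$, $\muhat(t,\cdot)$, and $\mhat(t)$ are non-random, so expanding the definition of $\widehat\varphi(Z)$ gives
\begin{align*}
\E\{\widehat\varphi(Z) \mid A = t, D^n\} = \E\bigl[\widehat{w}(t, X)\{\mu(t, X) - \muhat(t, X)\} \mid A = t\bigr] + \mhat(t),
\end{align*}
after using $\E(Y \mid A = t, X) = \mu(t, X)$.

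Next I would split $\mhat(t) - \theta_0(t)$ into an empirical-process piece plus a bias piece. Writing $\mhat(t) - \theta_0(t) = (\Pn - \Pb)\{\muhat(t, X)\} + \Pb\{\muhat(t, X) - \mu(t, X)\}$ (where $\Pn$ is over $Z_2^n$ and $\Pb\{\muhat(t,X)\}$ treats $\muhat$ as fixed given $D^n$), I can then combine the remaining non-stochastic pieces. Using the identity $p(x \mid A = t) = p(x)/w(t,x)$, which follows from the definitions $w(t,x) = p(t)/\pi(t \mid x)$ and $p(x \mid A = t) = \pi(t \mid x) p(x)/p(t)$, I rewrite
\begin{align*}
\E\bigl[\widehat{w}(t, X)\{\mu(t, X) - \muhat(t, X)\} \mid A = t\bigr] + \Pb\{\muhat(t, X) - \mu(t, X)\} = \int \frac{\widehat{w}(t,x) - w(t,x)}{w(t,x)}\bigl\{\mu(t,x) - \muhat(t,x)\bigr\}\, p(x)\, dx,
\end{align*}
where the cancellation arises precisely because $\widehat w / w - 1$ collects the two contributions. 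This is the step I expect to be the main (though mild) obstacle: the algebraic identity must be arranged so that the leading-order terms cancel, leaving only a product of the two nuisance errors.

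Finally, I would apply Cauchy--Schwarz to the last display, using the positivity assumption to bound $1/w(t,x)$ uniformly, and change measure back to $p(x \mid A = t)$ to land in the norm $\Norm{\cdot}_t$. This yields
\begin{align*}
\left|\,\int \frac{\widehat{w}(t,x) - w(t,x)}{w(t,x)}\bigl\{\mu(t,x) - \muhat(t,x)\bigr\} p(x)\, dx\,\right| \lesssim \Norm{w - \widehat{w}}_t\, \Norm{\mu - \muhat}_t.
\end{align*}
Combining with the empirical-process term via the triangle inequality gives the claimed bound on $|\widehat r(t)|$. No cross-fitting machinery beyond the independence of $Z_3^n$ from $(Z_1^n, Z_2^n)$ is needed, and the boundedness assumptions in the paper's global hypotheses suffice to absorb the $1/w(t,x)$ factor into the implicit constant.
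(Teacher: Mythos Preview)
Your proposal is correct and follows essentially the same route as the paper: compute the conditional mean of $\widehat\varphi(Z)$, add and subtract $\Pb\{\muhat(t,X)\}$ to split off the empirical-process piece, use the Bayes-rule identity $p(x)=w(t,x)\,p(x\mid A=t)$ to combine the remaining terms into $\int\{\widehat w(t,x)-w(t,x)\}\{\mu(t,x)-\muhat(t,x)\}\,d\Pb(x\mid A=t)$, and finish with Cauchy--Schwarz. The only cosmetic difference is that the paper does the measure change first (so the $1/w$ factor never appears and no positivity bound is needed), whereas you carry the $1/w$ and then cancel it---both are fine.
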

\begin{proof}
Recall that $\theta_0(t) = \E\{\varphi(Z) \mid A = t\}$. We have
\begin{align*}
& \Pb\{\muhat(t, X)\} - \theta_0(t) = \int w(t, x) \{\muhat(t, x) - \mu(t, x)\} d\Pb(x \mid A = t), \quad \text{and} \\
& \E\{\widehat\varphi(Z) \mid A = t, D^n\} = \int \widehat{w}(t, x) \{\mu(t, x) - \muhat(t, x)\} d\Pb(x \mid A = t) \\
& \hphantom{\E\{\widehat\varphi(Z) \mid A = t, D^n\} = } + \Pn^{(2)} \{\widehat\mu(t, X)\}.
\end{align*}
Adding and subtracting $\Pb\{\muhat(t, X)\}$ and applying the Cauchy-Schwarz inequality yield the result. 
\end{proof}
The result from Lemma \ref{lemma:rhat} shows that $|\rhat(t)|$ can be bounded by the product of the $L_2$ errors in estimating $w(a, x)$ and $\mu(a, x)$ (yielding rate double robustness) plus a centered sample average, which would generally be of the smaller order $O_\Pb(n^{-1/2})$ if, for instance, the second moment of $\muhat(t, X)$ (conditional on $Z_1^n$) is bounded. In this respect, this term is effectively asymptotically negligible in nonparametric models where the rate of convergence is of slower order than $n^{-1/2}$. 

Standard results are generally calculated for $L_2(d\Pb(a, x))$ errors defined by the joint distribution of $(A, X)$, for example
\begin{align*}
\Norm{w - \widehat{w}}^2 \equiv \int \{w(a, x) - \widehat{w}(a, x)\}^2 d\Pb(a, x).
\end{align*}
In this case, optimal convergence rates for estimating $\mu(a, x)$ are well-understood for many classes. For instance, if $\mu(a, x)$ belongs to a H\"{o}lder class of order $\gamma$, then minimax-optimal convergence rates in $L_2(d\Pb(a, x))$ are of order $n^{-2\gamma /(2\gamma + d + 1)}$, where $d = \dim(X)$. The bound from Lemma \ref{lemma:rhat} is actually on an $L_2$ error with weight given by the conditional density of $X$ given $A = t$. In most settings, we expect the more conventional rate based on $L_2(d\Pb(a, x))$ to match that based on $L_2(d\Pb(x \mid A = t))$. For example, Result 1 from \cite{colangelo2020double} shows that the rate in $L_2(d\Pb(x \mid A = t) p(t))$ matches that for the point-wise risk (in $(A, X)$) under a mild boundedness assumption. Alternatively, we note that one can always upper bound (up to constants) $\Norm{f}_t$ by the supremum norm $\Norm{f}_\infty$ and the rate for estimating a regression function in $L_\infty$ matches that for estimating the function in $L_2(d\Pb(a, x))$ up to log factors in H\"{o}lder-smoothness models \citep{tsybakov2008introduction}. 

There are fewer results available for conditional density estimation compared to regression estimation. Recently \cite{ai2018unified} proposed a method to estimate $w(a, x)$ directly that, under certain conditions, exhibits a convergence rate in $L^2_2$ of order $n^{-2\gamma /(2\gamma + d + 1)}$ if $w(a, x)$ is $\gamma$-smooth (see their Theorem 3). Alternatively, one can estimate $p(a)$ and $\pi(a \mid x)$ and compute their ratio to estimate $w(a, x)$. We refer to \cite{colangelo2020double} for a discussion on ways to estimate $\pi(t \mid x)$. In particular, one approach is to estimate $\E\{G_{h_1 t}(A) \mid X = x\}$, where $G(u)$ is some kernel and $h_1$ some bandwidth of choice. As a third approach, because $w(a, x) = p(a)p(x) / p(a, x)$, one can estimate the marginals $p(x)$ and $p(a)$ and the joint density $p(a, x)$ and take the ratio as an estimate of $w(a, x)$. Estimating a joint density of $d$ variables that belongs to a H\"{o}lder-class of order $\gamma$ can be done with error scaling as $n^{-2\gamma /(2\gamma + d)}$. Thus, the MSE of this ratio would trivially be upper bounded by the MSEs for estimating $p(a, x)$, $p(x)$ and $p(x)$, which would depend on their respective smoothness levels. 

As investigated in more detail in the next section, an interesting setting is where $a \mapsto \mu(a, x)$ has a different smoothness level than $x \mapsto \mu(a, x)$, where we expect the former to match the smoothness of the dose-response $\theta_0(a)$ in many applications. This is an example of anisotropic regression. The optimal rate for estimating a $d+1$-dimensional regression in a H\"{o}lder class where each coordinate has its own level of smoothness $\gamma_j$ is of order $n^{-2\gamma / (1 + 2\gamma)}$, where $\gamma$ satisfies $\gamma^{-1} = \sum_{j = 1}^{d+1} \gamma_j^{-1}$ \citep{hoffman2002random, bertin2004asymptotically}. If $\mu(a, x)$ is in an anisotropic H\"{o}lder class of order $(\alpha, b, \ldots, b)$, the rate simplifies to $n^{-2b / (2b + b / \alpha + d)}$, where $d = \dim (X)$. If the treatment $A$ is categorical or $\alpha$ is much larger than $b$, the rate is essentially $n^{-2b / (2b + d)}$, i.e. the optimal rate for estimating a $d$-dimensional regression function that is $b$-smooth. In a similar fashion, we may think of $a \mapsto \pi(a \mid x)$ and $x \mapsto \pi(a \mid x)$ as having different smoothness levels; optimal convergence rates in this context typically depend too on the harmonic means of the smoothness levels of each coordinate \citep{efromovich2007conditional}. 

Finally, we conclude this section by deriving sufficient conditions under which the bounds from Propositions \ref{thm:erm} and \ref{thm:dr} yield the oracle rate. We focus on the case where $\theta_0(t)$ is H\"{o}lder of order $\alpha$, and $w$ and $\mu$ belong to anisotropic H\"{o}lder classes such that the smoothness with respect to the treatment variable is $\alpha$, while that with respect to the covariates $X$ is $s$. In this setting, the oracle mean square error rate is $n^{-1/(1 + 1/(2\alpha))}$. Based on the bounds on the mean square error derived above, the estimators $\widehat\theta_{\text{erm}}$ and $\widehat\theta_{\rm ls}$ behave like oracle estimators provided that
	\begin{align}\label{eq:dr_oracle_bar}
		\frac{1}{2}\left( 1 + \frac{1}{2\alpha}  + \frac{d}{2s} \right) \leq 1 + \frac{1}{2\alpha} 
		\quad \iff \quad 
		s \geq d \cdot \frac{\alpha}{2\alpha + 1}.
	\end{align}
	In Section \ref{sec:higher_order}, we will show that this oracle efficiency requirement can be relaxed, under certain conditions, by using higher-order estimators.
\begin{remark}
    Our discussion on the rates attained by the doubly-robust estimators discussed in Section \ref{sec:dr_est} is driven by the bound computed in Lemma \ref{lemma:rhat}. If $\muhat$ and $\widehat{w}$ are designed to optimally estimate $\mu$ and $w$, e.g. by selecting tuning parameters to minimize estimates of their MSEs, then generally the bound based on the Cauchy-Schwarz inequality is the best available. However, there are other techniques, such as particular forms of sample splitting coupled with undersmoothing, whereby the nuisance functions are estimated optimally with respect to the target of inference, and so the selected tuning parameters for the nuisance estimators may not minimize their corresponding MSEs. This approach has favorable theoretical properties, see e.g. \cite{newey2018cross, mcclean2024double, kennedy2020optimal}, although it can be challenging to implement in practice. We leave studying undersmoothing in the context of continuous treatments for future work. 
\end{remark}
\begin{remark}
    Compared to Theorem 2 in \cite{kennedy2017nonparametric}, Theorem \ref{thm:dr} and Lemma \ref{lemma:rhat} provide the same error bound, but under substantially weaker conditions. Sample-splitting circumvents the need to impose Donsker-type conditions on the nuisance functions' classes in the form of bounded uniform entropy integrals. Moreover, the use of local polynomial regression allows the estimator to track the smoothness of $\theta(t)$, thereby achieving the oracle rate in higher smoothness regimes (provided that the remainder term is negligible). 
\end{remark}
\section{Higher-order estimators}\label{sec:higher_order}
\subsection{Preliminaries}
Inspired by the seminal work of \cite{robins2008higher, robins2009quadratic, robins2017higher}, in this section we investigate the use of higher-order influence functions (HOIFs) for estimating continuous treatment effects. To the best of our knowledge, this is the first application of HOIFs in this context. We show that, under suitable smoothness conditions, higher-order estimators can improve the convergence rates of the estimators from Section \ref{sec:dr_est}.

For an introduction to higher-order influence functions, we refer to the main papers \citep{robins2009quadratic, robins2017higher} and give a brief overview here. Informally, an $m^{\text{th}}$-order estimator of a functional $\chi(p)$ (where $p$ is the density of the observations) takes the form, for some unknown function $\varphi_j$ estimated by $\widehat\varphi_j$ and whose specification depends on $\chi(p)$:
\begin{align}\label{gen_u_expansion}
\widehat{\chi}(p) = \chi(\widehat{p}) + \sum_{j = 1}^m \mathbb{U}_n\{ \widehat\varphi_j(Z_1, \ldots, Z_j)\},
\end{align}
where $\Un$ is the $U$-statistic measure so that
\begin{align*}
    \mathbb{U}_n\{\varphi_j(Z_1, \ldots, Z_j)\} = \frac{1}{n (n -1) \cdots (n - j + 1)} \sum_{1 \leq i_1 \neq i_2 \ldots \neq i_j \leq n} \varphi_j(Z_{i1}, \ldots, Z_{ij}).
\end{align*}
Letting $\Pb^j\{f(Z_1, \ldots, Z_j)\}= \int f(z_1, \ldots, z_j)d\Pb(z_1) \ldots d\Pb(z_j)$ denote the corresponding population measure, this implies an expansion:
\begin{align*}
\widehat{\chi}(p) - \chi(p) & = \chi(\widehat{p}) - \chi(p) + \sum_{j = 1}^m \Pb^j\{ \widehat\varphi_j(Z_1, \ldots, Z_j)\} \\
& + \sum_{j = 1}^m (\mathbb{U}_n - \Pb^j)\{ \widehat\varphi_j(Z_1, \ldots, Z_j)\}.
\end{align*}
Following \cite{robins2009quadratic},  \cite{van2014higher}, \cite{robins2017higher}, if $\varphi_j$ is chosen such that $\Pb^j\{\widehat\varphi_j(Z_1, \ldots, Z_j)\}$ acts as the $j^{\text{th}}$-order term in the functional Taylor expansion of $p \mapsto \chi(p)$ around $\widehat{p}$,   then 
\begin{align*}
\chi(\widehat{p})+ \sum_{j = 1}^m \Pb^j\{ \widehat\varphi_j(Z_1, \ldots, Z_j)\}  - \chi(p) = O(d(p - \phat)^{m + 1})
\end{align*}
for some distance $d(\cdot)$. The quantity $\varphi_j$ is referred to as the $j^{\text{th}}$-order influence function of $\chi(p)$. If the following holds
\begin{align*}
\var\left[\sum_{j = 1}^m (\mathbb{U}_n - \Pb^j)\{ \widehat\varphi_j(Z_1, \ldots, Z_j)\}\right] = O(n^{-1}),
\end{align*}
then this calculation would suggest that $\widehat{\chi}(p)$ is always root-$n$ consistent if $m$ is large enough. However, higher order influence functions do not exist for many functionals of interest, including the average treatment effect of a binary treatment. In our setting, the dose-response does not possess influence functions of \textit{any order}, in nonparametric models.While this means that generally it is not possible to construct root-$n$ consistent estimators, we show that an $m^{\text{th}}$-order estimator of $\theta_0(t)$ of the form \eqref{gen_u_expansion} can outperform the doubly-robust estimators from Section \ref{sec:dr_est} under certain smoothness conditions. Our estimator is tailored to models where $a \mapsto \mu(a, x)$ and $a \mapsto \pi(a \mid x)$ are H\"{o}lder smooth of order $\alpha$ and $\beta$, respectively.
\subsection{Notation}
Before describing our $m^{\text{th}}$-order estimator of the dose-response, we need to introduce some additional notation. Let $K(a)$ denote a kernel of order $l$, where $l$ is the largest integer strictly less than $\alpha \land \beta$. That is, $\int K(a) a^j da = 0$, for $j = 1, \ldots, l$, and $\int K(a) da = 1$ (see, e.g., Chapter 1 in \citep{tsybakov2008introduction}). Let $b(x)$ denote a $k$-dimensional vector representing an orthonormal basis of some user-specified finite-dimensional space; as clarified below, the space will hopefully well-approximate the space of functions where $\widehat\mu(t, x) - \mu(t, x)$ and $1 / \pihat(t \mid x) - 1 / \pi(t \mid x)$ reside. Define
\begin{align*}
\Pi_{i, j} \equiv \Pi(x_i, x_j) = b(x_i)^T \Omega^{-1} b(x_j)
\end{align*} 
where, for $g(x) = \int K_{ht}(a)p(a, x)da$:
\begin{align*}
    \Omega = \int b(x) b(x)^T g(x) dx.
\end{align*}
Thus, provided that $g(x)$ is positive and bounded away from zero and infinity, $\Pi_{i, j}$ is effectively the kernel of an orthogonal projection in $L_2(g)$ onto a $k$-dimensional subspace. That is, for a function $f(x)$, $\int \Pi(x_i, x) f(x) g(x) dx = b(x_i)^T \beta^*$, where $\beta^*$ solves the minimization problem
\begin{align*}
\beta^* = \argmin_{\beta \in \R^k} \int \left\{ f(x) - b(x)^T\beta\right\}^2 g(x) dx.
\end{align*}
The kernel $\Pi(x_i, x_j)$ has to be estimated in practice because $g(x)$ depends on the true density $p(a, x)$. When $X$ is multivariate, the basis can be taken to be the tensor product basis. Following \cite{robins2017higher}, by a slight abuse of notation, we will denote the projection operator associated with the kernel above using the same symbol $\Pi$. This way, we have $\Pi(f)(x_i) = \int \Pi(x_i, x) f(x) g(x) dx$. We also define $I$ to be the identity operator so that $(I - \Pi)(f)$ denotes the residual function, i.e., $(I - \Pi)(f)(x_i) = f(x_i) - b(x_i)^T \beta^*$.
\subsection{The estimator}\label{sec:higher_order_est}
In this section, we describe an estimator of $\theta_0(t)$ based on approximate, $m^{\text{th}}$-order influence functions. Define the first approximate influence function:
\begin{align*}
	& f_0(Z) = \frac{K_{ht}(A)\{Y - \mu(t, X)\}}{\pi(t \mid X)} + \mu(t, X)
\end{align*}
and the functions
\begin{align*}
	& f_1(Z) = K_{ht}(A)\{Y - \mu(A, X)\}  \\
	& f_2(Z) = \frac{K_{ht}(A)}{\pi(A \mid X)} - 1
\end{align*}
The function $f_0(Z)$ is a sum of a residual term involving $Y - \mu(t, X)$ and the outcome model $\mu(t, X)$. If $A$ was binary and $K_{ht}(A) = A$, $f_0(Z)$ would be exactly the influence function of $\int \mu(1, x) d\Pb(x)$, which equals $\E(Y^1)$ under standard causal assumptions. The terms $f_1(Z)$ and $f_2(Z)$ are kernel-weighted residuals; $f_2(Z)$ is a residual term in the sense that $\E\{K_{ht}(A) / \pi(A \mid X) \mid X\} = 1$ whenever $\int K_{ht}(a) da = 1$.

The $m^{\text{th}}$-order estimator of $\theta_0(t)$ that we study is
\begin{align*}
\widehat{\theta}_m(t) = \Pn\{\widehat{f}_0(Z)\} + \sum_{j = 2}^m \mathbb{U}_n \{\widehat\varphi_j(Z_1, \ldots, Z_j)\}
\end{align*}
where
\begin{align*}
	& \varphi_j(Z_1, \ldots, Z_j) = (-1)^{j-1} \sum_{S \subset \{1, \ldots, j\}} (-1)^{j - |S|} \E\left\{\overline{\varphi}_j(Z_1, \ldots, Z_j) \mid Z_i, i \in S\right\} \\
	& \overline{\varphi}_j(Z_1, \ldots, Z_j) = f_1(Z_1) \Pi_{1,2}K_{ht}(A_2)  \cdots \Pi_{j-2, j-1}K_{ht}(A_{j-1})\Pi_{j-1, j}f_2(Z_j)
\end{align*}
are the $m^{\text{th}}$-order approximate influence functions. Notice that $\varphi_j(Z_1, \ldots, Z_j)$ is simply the degenerate version of $\overline{\varphi}_j(Z_1, \ldots, Z_j)$, which ensures that $$\int \varphi_j(z_1, \ldots, z_j) d\Pb(z_i) = 0$$ for every $i$ and $(z_l: l \neq i)$. In addition, it holds that
\begin{align*}
& \int \Pi(x_{i-1}, x_i) K_{ht}(a_i) \Pi(x_i, x_{i + 1}) d\Pb(z_i) \\
& = b(x_{i-1})^T \Omega^{-1} \int b(x_i)b(x_i)^TK_{ht}(a_i) d\Pb(z_i) \Omega^{-1} b(x_{i + 1}) \\
& = \Pi(x_{i-1}, x_{i + 1})
\end{align*}
and, by degeneracy of $f_1(z)$ and $f_2(z)$, 
\begin{align*}
    \int f_1(z_1) \Pi(x_1, x_2) d\Pb(z_1) = \int \Pi(x_{j - 1}, x_j) f_2(z_j) d\Pb(z_j) = 0.
\end{align*}
This means that the first few approximate HOIFs take the form:
\begin{align*}
& \varphi_2(Z_1, Z_2) = - f_1(Z_1) \Pi_{1,2} f_2(Z_2), \\
& \varphi_3(Z_1, Z_2, Z_3) = f_1(Z_1) \Pi_{1, 2} K_{ht}(A_2)\Pi_{2, 3} f_2(Z_3) - f_1(Z_1) \Pi_{1, 3} f_2(Z_3), \\
& \varphi_4(Z_1, Z_2, Z_3, Z_4) = \\
& = - f_1(Z_1) \Pi_{1, 2} K_{ht}(A_2)\Pi_{2, 3} K_{ht}(A_3)\Pi_{3, 4}f_2(Z_4) \\
	& \hphantom{= \{  - } + f_1(Z_1) \Pi_{1, 2} K_{ht}(A_2)\Pi_{2, 4}f_2(Z_4) + f_1(Z_1) \Pi_{1, 3}K_{ht}(A_3)\Pi_{3, 4}f_2(Z_4) \\
	& \hphantom{= \{  - } - f_1(Z_1) \Pi_{1, 4}f_2(Z_4).
\end{align*}
\color{black}
By construction, we have that $\E\{\varphi_2(Z_1, Z_2) \mid Z_i\} = 0$ for $i = 1, 2$. Similarly, $\E\{\varphi_3(Z_1, Z_2, Z_3) \mid Z_i, Z_j\} = 0$ for $i, j = 1, 2, 3$. The other terms satisfy analogous degeneracy statements.

For parameters like the counterfactual mean defined by a discrete treatment, there are at least two ways to view the higher-order corrections: they can be derived as \textit{approximate} influence functions of the estimand or as the \textit{exact} influence functions of an estimand approximating the target of interest. In our context, the latter interpretation can be derived as follows. For shorthand notation, let $q(A, X) = 1 / \pi(A \mid X)$ and define
\begin{align*}
	q_k(x) = b(x)^T \Omega^{-1} \E\left\{ K_{ht}(A) b(X) q(A, X)\right\} \equiv b(x)^T \Omega^{-1} R_\pi \equiv b(x)^T \beta_{\pi}, \\
	\mu_k(x) = b(x)^T \Omega^{-1} \E\left\{ K_{ht}(A) b(X) \mu(A, X)\right\} \equiv b(x)^T \Omega^{-1} R_\mu \equiv b(x)^T \beta_{\mu}.
	\end{align*}
If $q(A, X)$ and $\mu(A, X)$ were replaced by $q(t, X)$ and $\mu(t, X)$ in $R_\pi$ and $R_\mu$, respectively, then $q_k(x)$ and $\mu_k(x)$ would be the $g$-weighted orthogonal projections of $q(t, X)$ and $\mu(t, X)$ onto the space spanned by $b(x)$. Consider the following approximating target:
\begin{align*}
\theta_{hk}(t) = \int q_k(x) \mu_k(x) g(x) dx = \beta_{\pi}^T \Omega \beta_{\mu}.
\end{align*}
Under smoothness conditions on $q(a, x)$ and $\mu(a, x)$, the approximation error  $|\theta_{hk}(t) - \theta(t)|$ shrinks as $h \to 0$ and $k \to \infty$. In this light, one may shift the focus from estimating $\theta_0(t)$ to estimating $\theta_{hk}(t)$, relying on the fact that, for fixed $k$ and $h$, $\theta_{hk}(t)$ possesses influence functions of any order. In particular, letting ${\rm IF}(\cdot)$ return the (unique) influence function of its argument in a nonparametric model, calculations outlined, for example, in \cite{kennedy2022semiparametric} yield
	\begin{align*}
		& {\rm IF} (\Omega) = K_{ht}(A) b(X) b(X)^T - \Omega, \\
		&   {\rm IF} (\Omega^{-1}) = - \Omega^{-1} K_{ht}(A) b(X) b(X)^T \Omega^{-1} + \Omega^{-1},  \\
		& {\rm IF}(\beta_\mu) = {\rm IF} (\Omega^{-1}) R_{\mu}  + \Omega^{-1} {\rm IF}(R_\mu) \\
		& \hphantom{{\rm IF}(\beta_\mu)} = -\Omega^{-1} K_{ht}(A) b(X) \mu_k(X) + \Omega^{-1} K_{ht}(A) b(X) Y, \\
		& {\rm IF}(\beta_\pi) =   {\rm IF} (\Omega^{-1}) R_{\pi}  + \Omega^{-1} {\rm IF}(R_\pi) \\
		 & \hphantom{{\rm IF}(\beta_\pi)} = -\Omega^{-1} K_{ht}(A) b(X) q_k(X) + \Omega^{-1} b(X).
	\end{align*}
This means that the first-order influence function of the approximating functional is
\begin{align*}
	{\rm IF}(\theta_{hk}(t)) = {\rm IF}(\beta_\pi^T\Omega \beta_\mu) = K_{ht}(A) q_k(X) \{Y - \mu_k(X)\} + \mu_k(X)  - \theta_{hk}(t).
\end{align*}
The second-order influence function can be computed as the first-order influence function of ${\rm IF}(\theta_{hk}(t))$ \citep{robins2009quadratic} (Lemma 1). We have
\begin{align*}
	& {\rm IF}(\mu_k(x_1)) = b(x_1)^T \Omega^{-1} K_{ht}(A_2) b(X_2) \{Y_2 - \mu_k(X_2)\}  \\
	& \hphantom{{\rm IF}(\mu_k(x_1))} = \Pi(x_1, X_2) K_{ht}(A_2) \{Y_2 - \mu_k(X_2)\}, \\
	& {\rm IF}(q_k(x_1)) = - \Pi(x_1, X_2) \{K_{ht}(A_2) q_k(X_2) - 1\}.
\end{align*}
Therefore, the second-order influence function of $\theta_{hk}(t)$ is
\begin{align*}
	& {\rm IF}({\rm IF}(\theta_{hk}(t))) = - K_{ht}(A_1) \{Y_1 - \mu_k(X_1)\} \Pi(X_1, X_2) \{K_{ht}(A_2) q_k(X_2) - 1\} \\
	& \hphantom{{\rm IF}({\rm IF}(\theta_{hj}(t))) =} - \{K_{ht}(A_1) q_k(X_1) -1\}\Pi(X_1, X_2) K_{ht}(A_2) \{Y_2 - \mu_k(X_2)\}.
\end{align*}
The term $\Pb^2\{\widehat\varphi_2(Z_1, Z_2)\}$ should act as the second term in the functional Taylor expansion of $p \mapsto \chi(p)$ at $p = \widehat{p}$, so we can see that ${\rm IF}({\rm IF}(\theta_{hk}(t))) / 2 $ agrees with our definition of $\varphi_2(Z_1, Z_2)$. The higher-order terms can be similarly derived by computing the first-order influence function of the previously derived term. 

Just like in the standard application of HOIF theory to pathwise differentiable parameters \citep{robins2009quadratic, robins2017higher}, the second-order correction subtracts off an estimate of the bias of the first-order estimator. Letting $v(x) = \mu(t, x) - \widehat\mu(t, x)$ and $q(x) = 1/\widehat\pi(t \mid x) - 1/\pi(t \mid x)$, the first-order bias equals 
\begin{align}\label{eq:first_order_bias}
	& \int \widehat\varphi_0(z) d\Pb(z) - \theta(t) \nonumber \\
	& = \int\int\frac{\mu(a, x) - \widehat\mu(t, x) }{\widehat\pi(t \mid x)} \cdot K_{ht}(a) p(a, x) da dx - \int\frac{\widehat\mu(t, x) - \mu(t, x)}{\pi(t \mid x)} \cdot p(t, x) dx \nonumber \\
	& = O(h^{\alpha \land \beta}) + \int v(x) q(x) g(x) dx.
\end{align}
Expressing the second term in \eqref{eq:first_order_bias} using an orthonormal basis $b$ in $L_2(g)$ and then trucating it at order $k$, one has that
\begin{align*}
\int \widehat\varphi_0(z) d\Pb(z) - \theta(t) =	O(h^{\alpha \land \beta}) + \beta_v^T \Omega \beta_q + \text{truncation error},
\end{align*}
where $\beta_f = \Omega^{-1} \int b(x) f(x) g(x) dx$ for $f = \{v, q\}$. The truncation error appears in the statement of Theorem 3.1 as the first term in the bound on the bias. The second-order correction aims to estimate $\beta_v^T \Omega \beta_q$ (unbiasedly up to order $h^{\alpha \land \beta}$).
\color{black}
The discussion above highlights how terms higher than the second would not be needed if $\Omega$, and thus $\Pi_{1, 2} = b^T(x_1) \Omega^{-1} b(x_2)$, was known. In fact, one may view higher-than-two order corrections as aiming to reduce the error in estimating the projection kernel. For example, consider $\widehat\varphi_2$ and $\widehat\varphi_3$, with $r_i(x_i)$ defined in the proof of Theorem 3.1. A direct calculation reveals:
	\begin{align*}
		& \left| -\E\{\widehat{f}_1(Z_1) \widehat\Pi_{1,2} \widehat{f}_2(Z_2) \mid D^n\} + \E\{\widehat{f}_1(Z_1) \Pi_{1,2} \widehat{f}_2(Z_2) \mid D^n\} \right. \\
		& \left. \quad + \E\{\widehat{f}_1(Z_1) \widehat\Pi_{1, 2} K_{ht}(A_2)\widehat\Pi_{2, 3} \widehat{f}_2(Z_3)  \mid D^n\} - \E\{\widehat{f}_1(Z_1) \widehat\Pi_{1, 3} \widehat{f}_2(Z_3) \mid D^n\}  \right| \\
		& \lesssim  h^{\alpha \land \beta} \\
		& \hphantom{\lesssim } + \left| - \int r_1(x_1)b^T(x_1)\widehat\Omega^{-1}(\Omega - \widehat\Omega)\Omega^{-1} b(x_2) r_2(x_2) g(x_1)dx_1 g(x_2) dx_2 \right. \\
		& \hphantom{\lesssim} \quad\quad \left. + \int  r_1(x_1) b^T(x_1)\widehat\Omega^{-1}(\Omega - \widehat\Omega)\widehat\Omega^{-1} b(x_2) r_2(x_2) g(x_1) dx_1 g(x_2) dx_2 \right| \\
		& =h^{\alpha \land \beta} \\
		& \hphantom{=} + \left| \int r_1(x_1)b^T(x_1)\widehat\Omega^{-1}(\Omega - \widehat\Omega)(\widehat\Omega^{-1} - \Omega^{-1}) b(x_2) r_2(x_2) g(x_1)dx_1 g(x_2) dx_2 \right|. 
	\end{align*}
Therefore, adding a third-order term reduces the error in estimating $\Pi_{1,2}$ from $(\Omega - \widehat\Omega)$ to $(\Omega - \widehat\Omega)^2$. Notice that in the statement of Theorem 3.1 below this error term appears as $\|f\|_\infty^{m-1} \equiv \sup_x |\widehat{p}(t, x) - p(t, x)|^{m-1}$ in the bias. 

\color{black}
In the proof of Theorem 3.1, we assume that all nuisance functions, namely $\pi(a \mid x), \mu(a, x)$ and $g(x)$ are estimated using a separate independent sample $D^n$. We derive the statement under the assumption that $\Pi(x_i, x_j)$ is estimated by $b(x_i)^T \widehat\Omega^{-1} b(x_j)$, where $\widehat\Omega = \int b(x)b(x)^T \widehat{g}(x) dx$, for some suitable estimator $\widehat{g}$ of $g$. The weight $g(x) = \int K_{ht}(a)p(a, x) da$ can be estimated as $\int K_{ht}(a) \phat(a, x) da$, for example. We note that estimating the joint density $p(a, x)$ and then integrating it over a multidimensional domain can be computationally prohibitive even in moderate dimensions. We are not aware of any procedure that could avoid this computational burden completely and in general. Finally, notice that, relative to the first-order methods described in Section \ref{sec:dr_est}, the higher-order estimator $\widehat\theta_m(t)$, $m \geq 2$, requires choosing the additional tuning parameter $k$ governing the dimension of the basis vector and, potentially, the order $m$ as well. This complicates implementation because, to the best of our knowledge, choosing $k$ and $m$ in a data-driven way remains elusive; please see Section \ref{sec:tuning} for additional discussion on computation and hyperparameter tuning.
\color{black}
\begin{remark}
   The estimator $\widehat\theta_1(t) = \Pn\{\widehat{f}_0(Z)\}$, corresponding to $m = 1$, is precisely the estimator studied in \cite{colangelo2020double}. Thus, we may view the $m^{\text{th}}$-order estimator as a higher-order generalization of their approach. 
\end{remark}
\begin{remark}
    The $m^{\text{th}}$-order estimator that we study has the same form as the $m^{\text{th}}$-order estimator of the functional $\psi = \int \E(Y \mid A = 1, X = x) p(x) dx$ studied in \cite{robins2017higher} (Section 8) except that terms of the form $Af(Z)$ for some function $f$ of the observations are replaced by $K_{ht}(A) f(Z)$. In fact, the rate described in Theorem \ref{thm:mthorder} is similar to that for $\psi$ from Theorem 8.1 in \cite{robins2017higher} with $n$ replaced by $nh$. Finally, Section 9 in \cite{robins2017higher} presents an estimator that is a modified version of that presented in Section 8.1 where certain terms in the influence functions are ``cut out'' to decrease the variances without substantially increasing the bias. This results in a more complex estimator that exhibits a better, and in fact minimax optimal under certain conditions, bias-variance trade-off. We plan to apply this refinement to the dose-response settings in future work, with the idea of first calculating a candidate minimax lower bound. 
\end{remark}
\subsection{Upper bound on the (conditional) risk}
The following theorem bounds the risk of the estimator $\thetahat_m(t)$ conditional on the training sample $D^n$. 
\begin{theorem}\label{thm:mthorder}
	Suppose Assumptions 1-2 hold and the following assumptions also hold:
	\begin{enumerate}
		\item The functions $a \mapsto \mu(a, x)$ (as well as $a \mapsto \widehat\mu(a, x)$) and $a \mapsto \pi(a \mid x)$ (as well as $a \mapsto \widehat\pi(a \mid x)$) are H\"{o}lder smooth of orders $\alpha$ and $\beta$, respectively, for any $x \in \mathcal{X}$;
		\item The kernel $K$ of order $l$, for $l$ the largest integer strictly less than $\alpha \land \beta$, is uniformly bounded and supported in $[-1, 1]$. Further, it satisfies $\int K_{ht}(a) da = 1$ and $\int K_{ht}(a) p(a, x) da \in [\underline{M}, \overline{M}]$ for some $\underline{M}> 0$, $\overline{M} < \infty$ and all $x \in \mathcal{X}$. 
		\item The orthogonal projection kernel $\Pi$ and its estimator $\widehat\Pi$ satisfy 
  \begin{align*}
      \sup_{x \in \mathcal{X}} \Pi(x, x) \lesssim k \quad \text{and} \quad \sup_{x \in \mathcal{X}} \widehat\Pi(x, x) \lesssim k;
  \end{align*}
		\item Boundedness: $\int K_{ht}(a) p(a, x) da / \int K_{ht}(a) \phat(a, x) da \in [\underline{M}', \overline{M}']$ for some $\underline{M}' > 0$, $\overline{M}' < \infty$ and all $x \in \mathcal{X}$. 
	\end{enumerate}
	Then
	\begin{align*}
		& \left| \E\{\thetahat_m(t) - \theta_0(t) \mid D^n\} \right| \lesssim \Norm{(I - \Pi) (v)}_{2, g} \Norm{(I - \Pi)(q)}_{2, g}  + h^{\alpha \land \beta} \\
		& \hphantom{ \left| \E\{\thetahat_m(t) - \theta(t) \mid D^n\} \right| \lesssim } \quad + \Norm{q}_{2, g}\Norm{v}_{2, g}\Norm{f}^{m-1}_\infty \\
		& \var\{\thetahat_m(t) \mid D^n\} \lesssim \sum_{l = 1}^m \left(\sum_{j = l}^m c^j j^{2l} \epsilon_n^{2(j-l)} \right) \frac{k^{l-1}}{(nh)^l}.
	\end{align*}
	where $c$ is some constant, $v(x) = \mu(t, x) - \muhat(t, x)$, $q(x) = 1 / \pihat(t \mid x) - 1 / \pi(t \mid x)$, $f(x) = \phat(t, x) - p(t, x)$, $\Norm{f}^k_{k, g} = \int |f|^k(x) g(x) dx$, and $\epsilon_n = h^{\alpha \land \beta} \ \lor \ \|v\|_{4,g} \ \lor \ \|q \|_{4, g} \ \lor \ \|f\|_\infty.$
\end{theorem}
The assumptions underlying Theorem \ref{thm:mthorder} are similar to those made in Propositions \ref{thm:erm} and \ref{thm:dr}. The main difference is that the higher order estimator $\thetahat_m(t)$ is specifically designed for nonparametric models where $a \mapsto \mu(a, x)$ and $a \mapsto \pi(a \mid x)$ possess some smoothness, which we encode in condition 1. The second condition ensures that the kernel $K$ accurately tracks the least smooth function between $a \mapsto \mu(a, x)$ and $a \mapsto \pi(a \mid x)$. A better estimator or a tighter bound would track just the smoothness of $\theta_0(t)$ or, at least, the smoothness of $a \mapsto \mu(a, x)$, as that should match the smoothness of $\theta_0(t)$ in most applications. We leave this for future work. In particular, we conjecture it might be possible to derive a tighter bound that would have, in place of the term $h^{\alpha \land \beta}$, terms of order $h^{\alpha \land (\beta + 1)}$ plus terms of order $h^{\alpha \land \beta}(\Norm{v} + \Norm{q} + o(h))$. This refined bound would also not track the smoothness of the dose-response and thus we preferred the simpler and more interpretable bound in terms of $h^{\alpha \land \beta}$. 

Because higher order kernels can take negative values on sets of non-zero Lebesgue measure (see, e.g. Proposition 1.3 in \cite{tsybakov2008introduction}), we require $g(x) = \int K_{ht}(a) p(a, x) da$ to be bounded away from zero since this is the weight used in the projection onto the finite space of dimension $k$. Condition 3 requires the kernels $\Pi$ and $\widehat\Pi$ to be bounded on the diagonal. This would be satisfied, for instance, if the basis elements are bounded. Condition 4 is a mild regularity condition on the estimator $\phat(a, x)$.

The first term in the bias is an upper bound on the truncation bias. It arises from approximating the target functional $\theta_0(t) = \int \mu(t, x) dP(x)$ by truncating the basis expansion of $x \mapsto \mu(t, x)$ at level $k$. This term would be present even if the matrix $\Omega$ in the orthogonal projection kernel was known and the $U$-statistic empirical measure replaced by the true expectation. It can also be viewed as an upper bound on the error incurred in approximating the first-order estimator's bias by expressing the residuals $v(x)$ and $q(x)$ with a $k$-dimensional basis. The second term of order $h^{\alpha \land \beta}$ upper bounds the error in approximating $a \mapsto \pi(a \mid x)$ and $a \mapsto \mu(a, x)$ with a local smoothing kernel with bandwidth $h$. Ideally, the upper bound would be of order $h^{\alpha}$, since a term of this order would also appear in the error expansion of the oracle estimator with access to the true nuisance functions. However, our analysis effectively relies on the smoothness of $a \mapsto \pi(a \mid x) \mu(a, x)$ which matches the lower smoothness order between that of $a \mapsto \pi(a\mid x)$ and $a \mapsto \mu(a, x)$. The third term in the bias arises from the approximation of the orthogonal projection kernel using the matrix $\widehat\Omega$, which relies on an estimator of the joint density of the covariates and treatment at level $A = t$. Finally, each term in the bound on the variance comes from a direct calculation of the second moments of each term in the higher-order $U$-statistic, relying on the assumption that the projection kernel is upper bounded by a constant multiple of $k$ on the diagonal (Condition 4); our proof follows closely that of \cite{robins2017higher} (Lemma 10.2 in their corrected version of the manuscript) and keeps track of the fact that the $U$-statistic terms are localized averages within a neighborhood of size $h$ around $A = t$.

Next, we discuss a few implications of Theorem \ref{thm:mthorder}, under the assumptions that 1) $\alpha \leq \beta$, i.e. $a \mapsto \pi(a \mid x)$ is smoother than $a \mapsto \mu(a, x)$, and 2) the dose-response is also $\alpha$-smooth.

First, we consider the case where $x \mapsto \widehat\mu(t, x)$ and $x \mapsto \mu(t, x)$ are H\"{o}lder-$\gamma_1$ and $x \mapsto \widehat\pi(t \mid x)$ and $x \mapsto \pi(t \mid x)$ are H\"{o}lder-$\gamma_2$. Given an appropriate basis, we suppose the approximation error satisfies 
	\begin{align*}
		\Norm{(I - \Pi) (v)}_{g} \Norm{(I - \Pi)(q)}_{g} \lesssim k^{-(\gamma_1 + \gamma_2) / d }, \text{ where } d = \dim(X).
	\end{align*}
	Each term in the bound on the variance is of order $k^{j-1} / (nh)^j$. If $k$ is of order $nh$ the variance is of order $(nh)^{-1}$. With this choice of $k$, provided that $\|f\|_\infty = o_P(1)$, the term $\Norm{q}_{g}\Norm{v}_{g}\Norm{f}^{m-1}_\infty$ can be made arbitrarily small by choosing $m$ large enough (but constant) and thus it is negligible relative to the other terms.\footnote{How large $m$ would need to be to make this term negligible depends on $\|q\|_g$, $\|v\|_g$ and $\|f\|_\infty$. For instance, for $m=2$, this term would also be negligible if the joint density of $(X, A)$ is estimable at sufficiently fast rates. } If this term is negligible, the bound from Theorem \ref{thm:mthorder} simplifies to $O(k^{-2(\gamma_1 + \gamma_2) / d } + h^{2\alpha} + (nh)^{-1})$. Thus, if the average nuisance functions' smoothness satisfies $(\gamma_1 + \gamma_2) / 2 \geq d / 4$ and $h \asymp n^{-1 / (2\alpha + 1)}$, one obtains the oracle rate $n^{-2\alpha / (2\alpha + 1)}$. In this light, $\widehat\theta_m(t)$ behaves like the oracle estimator that uses the true nuisance regression functions if $\alpha \leq \beta$ and $(\gamma_1 + \gamma_2) / 2 \geq d / 4$, provided that $m$ is chosen large enough. If $s = \gamma_1 = \gamma_2$, this means that $\widehat\theta_m(t)$ is oracle efficient for $s \geq d / 4$. Comparing this result to the oracle efficiency bound $s \geq (d\alpha) / (2\alpha + 1)$ (derived in \eqref{eq:dr_oracle_bar}), it can be seen that higher-order corrections, at least in the case where $1/2 \leq \alpha \leq \beta$, effectively lower the bar for oracle efficiency.
Theorem \ref{thm:mthorder} also sheds light on the rate of convergence in the ``low smoothness'' regime when $(\gamma_1 + \gamma_2)/2 < d/4$. Consider the second-order estimator $\widehat\theta_2(t)$. Suppose $k$ and $h$ are chosen optimally to minimize the upper bound on the mean-square error:
\begin{align*}
	k \asymp (nh)^{2d / ( d + 2\gamma_1 + 2\gamma_2)} \quad \text{ and } \quad h \asymp n^{-2(\gamma_1 + \gamma_2) / [\alpha\{2(\gamma_1 + \gamma_2) + d\} + 2(\gamma_1 + \gamma_2)]}.
\end{align*}
Then, the MSE of $\thetahat_2(t)$ is of order $n^{-2r_2}$ for
\begin{align*}
	r_2 = \left\{1 + \frac{d}{2(\gamma_1 + \gamma_2)} + \frac{1}{\alpha}\right\}^{-1} \land \Norm{v}_g\Norm{q}_g\Norm{f}_\infty.
\end{align*}
If the first term in $r_2$ dominates the rate, then the rate obtained by the quadratic estimator $\thetahat_2(t)$ is a combination of the oracle rate $1 / (2 + 1 / \alpha)$ and the minimax rate for estimating the dose-response when $A$ is categorical in the non-root-$n$ regime, namely $n^{-2r_f}$, for $r_f = [1 + d /\{2(\gamma_1 + \gamma_2)\}]^{-1}$, which is recovered as $\alpha \to \infty$.  Negligibility of the third-order term $\Norm{v}_g\Norm{q}_g\Norm{f}_\infty$ can be achieved if $x \mapsto p(t, x)$ is estimated sufficiently fast.\footnote{An assumption of sufficient smoothness on the covariates' density to ensure third-order terms are negligible has been imposed, for example, in \cite{kennedy2024minimax} to show minimax optimality of their second-order conditional treatment effect estimator (condition 1, Theorem 2). It is also considered by \cite{liu2021adaptive} when studying data adaptive tuning parameter selection for estimation of pathwise differentiable functionals.} Notice that, in the regime $(\gamma_1 + \gamma_2)/2 < d / 4$, the estimators from Section \ref{sec:dr_est} are typically not oracle efficient as their rate is driven by $\rhat$. Without further corrections, $\rhat$ is bounded by the product of the MSEs for estimating $w$ and $\mu$; this is of bigger order than the term $\Norm{v}\Norm{q}\Norm{f}_\infty$, which is of the same order as $\Norm{v}_g\Norm{q}_g\Norm{f}_\infty$ because $\int K_{ht}(a) \pi(a \mid x) da$ is assumed uniformly bounded. 

Figure \ref{fig::rates} compares the rate achieved by the second-order estimator, assuming the third-order remainder term is negligible, to the rate of the estimators from Section \ref{sec:dr_est}. We plot the rates as a function of $s = \gamma_1 = \gamma_2$.  That is, $s$ refers to the smoothness of $x \mapsto \mu(a, x)$ and $x \mapsto \pi(a \mid x)$. For illustration, we set $\alpha = \beta = 2$ and $\dim(X) = 20$, where $\alpha$ is the smoothness of $a \mapsto \mu(a, x)$ and $a \mapsto \pi(a \mid x)$. In this setting, the optimal rate for estimating the anisotropic functions $\mu(a, x)$ and $\pi(a \mid x)$ is $n^{-2s / (2s + s / \alpha + d)}$. This is also the rate generally inherited by the plug-in estimator (black line) $\Pn\{\widehat\mu(t, X)\}$, without further corrections. The oracle rate is $n^{-2\alpha / (2\alpha + 1)}$. The estimators from Section \ref{sec:dr_est} (red line) achieve a rate of order $n^{-2\alpha/(2\alpha + 1)} \lor n^{-4s/(2s + s/\alpha + d)}$. The blue line refers to the rate $n^{-2 / \{1 + d/(4s) + 1/\alpha\}} \lor n^{-2\alpha/(2\alpha + 1)}$, which is obtained by the second-order estimator under the assumption that the covariates density is estimated well enough so that the term $\|v\|_g\|q\|_g\|f\|_\infty$ is negligible. Finally, for reference, we also plot the minimax lower bound for estimating the ATE, which is of order $n^{-2/\{1 + d/(4s)\}} \lor n^{-1}$ \citep{robins2009semiparametric}.
\begin{figure}[!h]
	\centering
	\includegraphics[scale = 0.3]{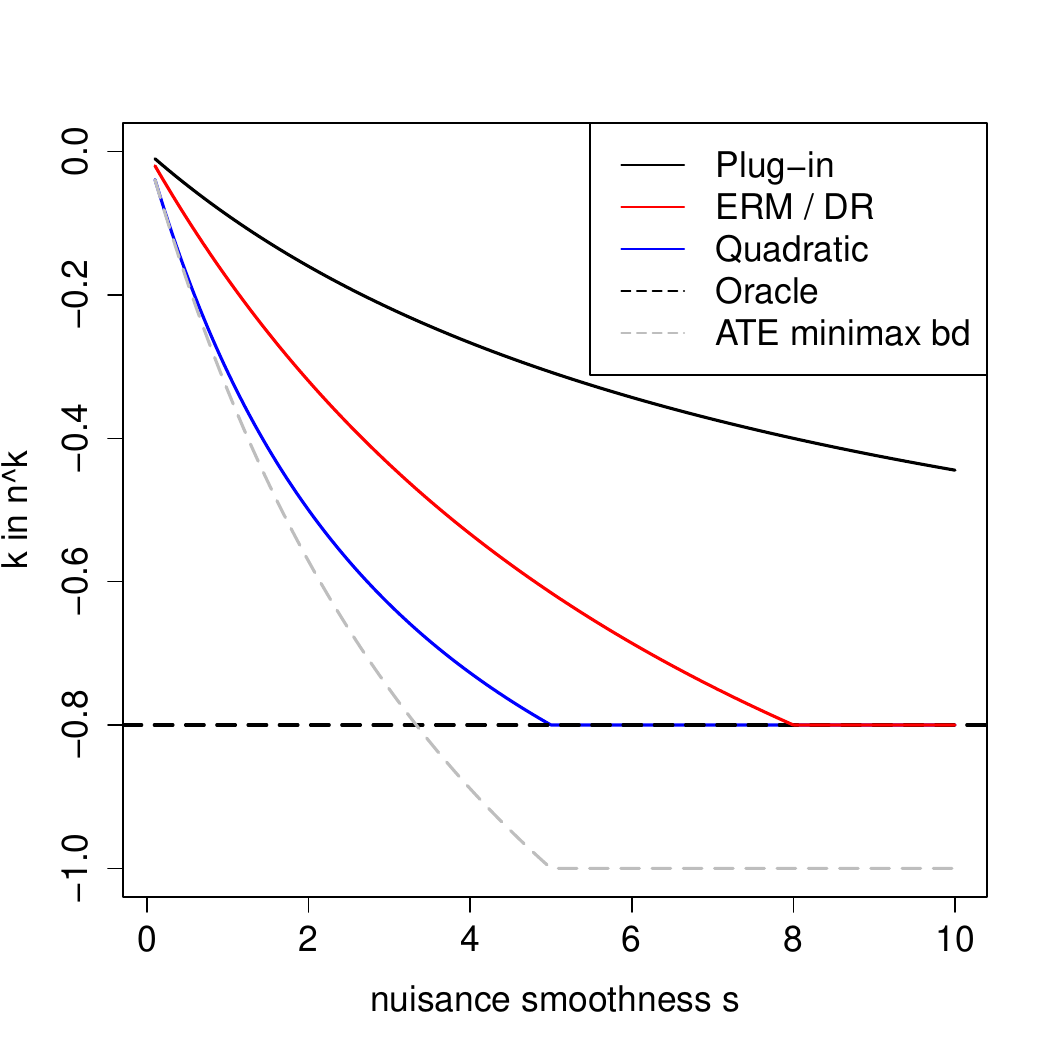}
	\caption{Illustration of the convergence rates in MSE for the estimator considered in this article, as a function of the smoothness $s = \gamma_1 = \gamma_2$. We take the smoothness of the dose-response to be $\alpha = 2$ and $\dim(X) = 20$.}
	\label{fig::rates}
\end{figure}
\subsection{Computational considerations and hyperparameter tuning}\label{sec:tuning}
In the previous section, we have derived the oracle efficiency bar from Theorem \ref{thm:mthorder} under the assumption that $m$, $\|q\|_g$, $\|v\|_g$ and $\|f\|_\infty$ are such that the term $\|q\|_g\|v\|_g\|f\|_\infty^{m-1}$ is negligible. In practice, if $\|q\|_g$, $\|v\|_g$ and $\|f\|_\infty$ are converging to zero slowly, one would need to compute $\widehat\theta_m(t)$ for large values of $m$. This would incur in substantial computational costs since the estimator would be a sum of $U$-statistics of order up to $m$. Remark 4.1 in \cite{liu2024assumption} also highlights the trade-off between bias reduction with higher-order corrections and computational costs. We refer to \cite{chen2025computing} for recent developments on efficient computation of higher-order $U$ statistics.

In our experience, an even more pressing issue to be addressed for widespread adoption of higher-order methods is that of hyperparameter selection. Choosing the projection dimension $k$ and the estimator of $\Omega^{-1}$ appears to be particularly important. To the best of our knowledge, this remains challenging. Even for pathwise differentiable estimands, work on data-adaptive implementations of higher-order methods remains scarce, with the notable exception of \cite{liu2021adaptive} studying hyperparameter tuning for second-order estimators via Lepski's method. Data adaptive estimation of order greater than two would be even more involved.
	
An important difference between pathwise differentiable parameters and the dose-response function is that the risk of a given estimator of the latter, e.g., $R(\theta) = \E[\{ \theta(A) - \theta_0(A)\}^2]$, is a pathwise differentiable parameter that can be estimated with parametric rate of convergence even though the dose-response function itself cannot generally be estimated at the parametric rate \citep{diaz2013targeted}. However, the error incurred in estimating the risk is comparable to that of a first-order estimator of $\theta_0(t)$ and so this strategy is not directly viable if the goal is to choose the hyperparameters entering the higher-order corrections.

Finally, choosing an appropriate estimator of $\Omega^{-1}$ that is computationally feasible and numerically stable is also nontrivial. For $k$ sufficiently small, an alternative to $\widehat\Omega$ is the empirical Gram matrix $\widehat\Omega_{emp} = \Pn\{b(X)b(X)^T K_{ht}(A)\}$. For our analysis to go through, $\widehat\Omega_{emp}$ would still need to be computed on the independent sample $D^n$, although we have found that estimating it \textit{in-sample} leads to numerical stability and good performance; see \cite{liu2023new} for a discussion on estimating $\Omega$ in-sample. The number of basis elements $k$ needs to be small enough so that $\widehat\Omega$ is invertible; because of the presence of $K_{ht}(A)$, $k$ would not be allowed to be of order greater than or equal to $nh$. In the context of pathwise differentiable functionals, \cite{mukherjee2017semiparametric} provide an in-depth discussion on using estimators akin to $\widehat\Omega_{emp}$ in place of those based on integration with estimated weight $\widehat{g}$. While a complete analysis of the error incurred by $\widehat\theta_m(t)$ when $\widehat\Omega$ is replaced by $\widehat\Omega_{emp}$ (whether in-sample or out-of-sample) is out of the scope of this paper, we provide the following heuristics. Consider $\widehat\theta_m(t)$ with $\widehat\Omega$ replaced by $\widehat\Omega_{emp}$ (estimated on $D^n$). Following the proof of Theorem 3.1, the conditional bias of $\widehat\theta_2(t)$ is of order:
	\begin{align*}
		\|(I - \Pi)(v)\|_g \|(I - \Pi)\|_g + h^{\alpha \land \beta} + \int \{\widehat\varphi_2(z_1, z_2) - \varphi_2(z_1, z_2)\}d\Pb(z_1, z_2),
	\end{align*}
where, for $r_1$ and $r_2$ defined in the proof:
\begin{align*}
 & \int \{\widehat\varphi_2(z_1, z_2)d\Pb(z_1, z_2) - \varphi_2(z_1, z_2)\}d\Pb(z_1, z_2) \\
 & = \int r_1(x_1) b^T(x_1)g(x_1)dx_1 \left(\widehat\Omega^{-1}_{emp} - \Omega^{-1}\right) \int b(x_2)r_2(x_2)g(x_2)dx_2.
\end{align*}
By the reasoning in Section 6.1 in \cite{mukherjee2017semiparametric} and our Theorem 3.1, if $\widehat\Omega_{emp}$ is invertible, the term above can be upper bounded in absolute value by
\begin{align*}
	\|r_1\|_g\|r_2\|_g\|\widehat\Omega_{emp} - \Omega\|_{\text{op}} \lesssim h^{\alpha \land \beta} + \|q\|_{2, g}\|v\|_{2, g}\|\widehat\Omega_{emp} - \Omega\|_{\text{op}}
\end{align*}
where $\|\cdot \|_{\rm op}$ is the matrix operator norm and $\|\widehat\Omega_{emp} - \Omega\|_{\text{op}} = O_P(\sqrt{k\log k / (nh)})$ by Rudelson's inequality, see, e.g., Lemma 17 in \cite{mukherjee2017semiparametric}. For $m \geq 2$, based on the analysis in \cite{mukherjee2017semiparametric}, we expect the bound on the conditional bias to be the same as the one above except that $\|\widehat\Omega_{emp} - \Omega\|_{\text{op}}$ is replaced by $\|\widehat\Omega_{emp} - \Omega\|_{\text{op}}^{m-1}$. Taking $k \asymp (nh) / (\log^2 (nh))$ and $m \asymp \log (nh)$ effectively makes $\|q\|_{2, g} \|v\|_{2, g} \|\widehat\Omega_{emp} - \Omega\|^{m-1}_{\text{op}}$ shrink at a rate faster than  $(nh)^{-c}$ for any constant $c$. A formal analysis would also need to carefully analyze the variance of the estimator based on $\widehat\Omega_{emp}$, particularly when $m$ grows with the sample size (albeit slowly). Finally notice that a larger $k$ should make $\|(I - \Pi)(v)\|_{2, g}\|(I - \Pi)(q)\|_{2, g}$ smaller, the extent of which depends on the approximation properties of the basis elements relative to the function class where $v$ and $q$ reside. Thus, choosing $k \asymp (nh) / (\log^2 (nh))$ will have consequences in terms of the rate achieved by this approximation term, which may or may not be smaller than the oracle rate. An important avenue for future work is thus to study data-driven estimation of $\Omega$ (possibly with regularization) in the context of higher-order estimators.
\section{Simulation experiment}\label{sec:sims}
We conduct a simulation experiment to evaluate the performance of the first-, second- and third-order estimators in finite samples. Additional results can be found in the supplement. We generate data according to the following process, which is loosely based on the simulation study presented in \cite{colangelo2020double}:
\begin{align*}
& X \sim N_{20}(0, \Sigma), \\
& A \mid X \sim \text{TruncNorm}(a_\text{min} = -1.5, a_\text{max} = 1.5, \text{ mean} = \mu_x(X), \text{ sd} = 2), \\
 & \text{and } Y \mid A, X \sim N(\mu_{ax}(A, X), 1).
\end{align*}
The covariance matrix $\Sigma$ has $(i, i)$-entry equal to 1 and $(i, j)$-entry equal to 0.5 if $|i-j|=1$ and equal to 0 otherwise. We set $\mu_x(x) = x^T\xi$ where the $j$-entry of $\xi$ is $j^{-2}$; notice that $\mu_x(x)$ refers to the mean of the underlying normal distribution before truncation, so that the mean of $A$ given $X$ is not necessarily linear. Finally,
\begin{align*}
	 \mu_{ax}(A, X) = & 1.2A + A^2 + 3A[s(X; \delta_1) - \E\{s(X; \delta_1)\}] \\
	 & + [s(X; \delta_2) - \E\{s(X; \delta_2)\}].
\end{align*}
The function $s(x; \delta)$ is defined as follows:
\begin{align*}
	s(x; \delta) = \sum_{j=1}^d 1.5 x_j + 2.5\cdot \delta^T \texttt{bs}(x_j)
\end{align*}
with \texttt{bs} denoting a $B$-spline basis with 10 inner knots equally spaced in $[-2, 2]$ and boundary knots at $-10$ and $10$. The parameters $\delta_1$ and $\delta_2$ are coefficients sampled uniformly in $[-2, 2]$. The dose-response function is thus $\theta(t) = 1.2 \cdot t + t^2$. Importantly, we have also generated data using $s(x; \delta)$ with 20 inner knots spaced in $[-2, 2]$ and found that the relative performance of the first- vs second- order methods reported below was largerly the same.

To construct the first-order estimators, we proceed as follows. We consider three different estimators of $\mu_{ax}(a, x)$: 1) additive B-spline regression with interaction term $A\cdot X$ and main term $A + A^2$ (i.e., $\sum_j \text{bs}(X_j) + A\cdot \sum_j\text{bs}(X_j) + A + A^2$), 2) B-spline regression without the interaction $A\cdot X$ term (simply an additive basis plus the term $A + A^2$), and 3) Random Forest estimated on a separate, independent sample. To estimate $\mu_x$, we regress $A$ on $X$ by B-spline regression or by Random Forest on an independent sample. Notice that this effectively means that $\mu_x$ is misspecified since the conditional mean of $A$ is not necessarily equal to $\mu_x(X)$ due to the truncation in $[-1.5, 1.5]$. All B-spline regressions are fitted using the function \texttt{bs} (\texttt{splines} package), with degrees of freedom in each term chosen, by leave-one-out-cross-validation, from the set $\{1, 2, 3, 6, 8, 12, 16, 32, 64\}$ (all terms are forced to have the same degrees of freedom). Random Forest was fitted with default parameters using the package \texttt{ranger} in \texttt{R}. Finally, the estimator of $\pi(a \mid x)$ is the density of a truncated normal with parameters $a_{\min}=-1.5$, $a_{\max}=1.5$, $\texttt{sd}=2$, but with the mean parameter replaced by $\widehat\mu_x(x)$. The marginal density of $A$, $p(a)$, is estimated by kernel density estimation using the function \texttt{density} with default parameters from the \texttt{R} package \texttt{stats}. Except for the Random Forest models, all estimators are computed on the same sample, i.e., no sample splitting is performed.

As an example of first-order methods, we estimate the dose-response function by regressing the pseudo-outcome $\widehat\varphi(Z)$ onto $A$ via local linear smoothing (Section \ref{sec:dr_est}); we use the function \texttt{lprobust} with default parameters from the package \texttt{nprobust} \citep{calonico2019nprobust}. We use the selected bandwidth from this step to also construct $\widehat\theta_1(t) = \Pn \widehat{f}_0$ from Section \ref{sec:higher_order_est}. 

To construct the second- and third-order estimators, we consider an additive $B$-spline basis where we expand each covariate using the first $k$ basis elements excluding the intercept. This yields a total of $20 \cdot k  + 1$ columns in the design matrix; we rely on the function \texttt{bs} to place the knots at the relevant quantiles. We vary $k$ in $\{4, 8, 16, 32\}$; since $d = 20$, this corresponds to $81$, $161$, $321$ and $641$ basis elements, respectively. We use the same bandwidth as the one used to construct the first-order estimators. The matrix $\Omega$ is estimated in-sample using the empirical Gram matrix $\Pn (b(X)b(X)^T K_{ht}(A))$; in the supplementary material, we also report the results when $\Omega$ is estimated by $\int b(x)b(x)^T \widehat{p}(t, x) dx$ and $X$ is one-dimensional. 

Finally, we consider a sample size $n=5000$; when the estimated nuisances rely on Random Forest, we employ an additional training sample of size $n_{\text{tr}} = 5000$.  We consider 1000 Monte Carlo simulations. We evalute the dose-response function on 21 points, equally spaced between the 0.1- and 0.9- quantiles of the distribution of $A$ (approximately $-1.17$ and $1.17$). 

Figure \ref{fig:mse} reports the root-mean-square error for the estimators considered. We have included the RMSE of an oracle local-linear estimator that has access to the true nuisance functions and regresses $\varphi(Z)$ (Section \ref{sec:dr_est}) onto $A$ using the \texttt{lprobust} function. As expected, the performance of all the estimators is similar when the model for $\mu_{ax}$ is correctly specified. When the model is misspecified or estimated with Random Forest (without any hyperparameter tuning), the second- and third-order estimators have lower RMSE in regions of the support of $A$ where the first-order estimator exhibit larger errors. Around $t = 0$, i.e., when the interaction term $A \cdot X$ is less relevant, first-order estimators perform well and the second-order corrections considered here are not able to further reduce the estimation error. We note that for $K = 4$, the higher-order corrections include fewer basis terms than the underlying truth ($s(X; \delta)$ depends on a B-spline basis with 10 inner knots); however, they are still able to reduce the error from first-order estimation. In addition, the projection kernel aims to approximate $1/\widehat\pi(t \mid x) - 1/\pi(t \mid x)$ and $\mu(t, x) - \widehat\mu(t, x)$. When the nuisances are estimated by Random Forest, it is not clear that a projection kernel based on B-splines is entirely appropriate. In this light, these results are encouraging insofar that they suggest a certain degree of robustness with respect to the projection kernel specification. In the appendix, we also consider a modified data generating process that is less favorable to higher-order corrections.

We conclude with a word of caution. The simulation study considered here, while showing encouraging results regarding higher-order corrections, presents also some limitations. First, our current setup considers first-order methods depending on nuisance estimators with a varying degree of misspecification (e.g., the model for the conditional density of the treatment given the covariates is misspecified). In practice, such misspecification might be detected by the analyst who would be able to proceed with better modeling choices, thus potentially making higher-order corrections less compelling. In future work, it would be critical to design simulation studies where no nuisance misspecification is detectable from the data during the first-order estimation step. In addition, the higher-order corrections considered here are for a relatively small number of basis elements compared to the sample size; we consider at most 681 terms relative to an effective sample size, $(nh)$, which is comparable to $n$ because the bandwidth $h$ is large as $\theta_0(t)$ is rather smooth. We have noticed that choosing $k$ too large relative to $(nh)$ leads to numerical instability. Finally, we refer the reader to \cite{li2005robust, mukherjee2017semiparametric, liu2024assumption} for comprehensive simulation studies illustrating the superior performance of estimators based on higher order influence functions in the context of pathwise differentiable parameters. Their simulation setups require substantial computational resources so that we leave extending them to our settings as future work.
\begin{figure}[!h]
	\centering
	\includegraphics[scale=0.30]{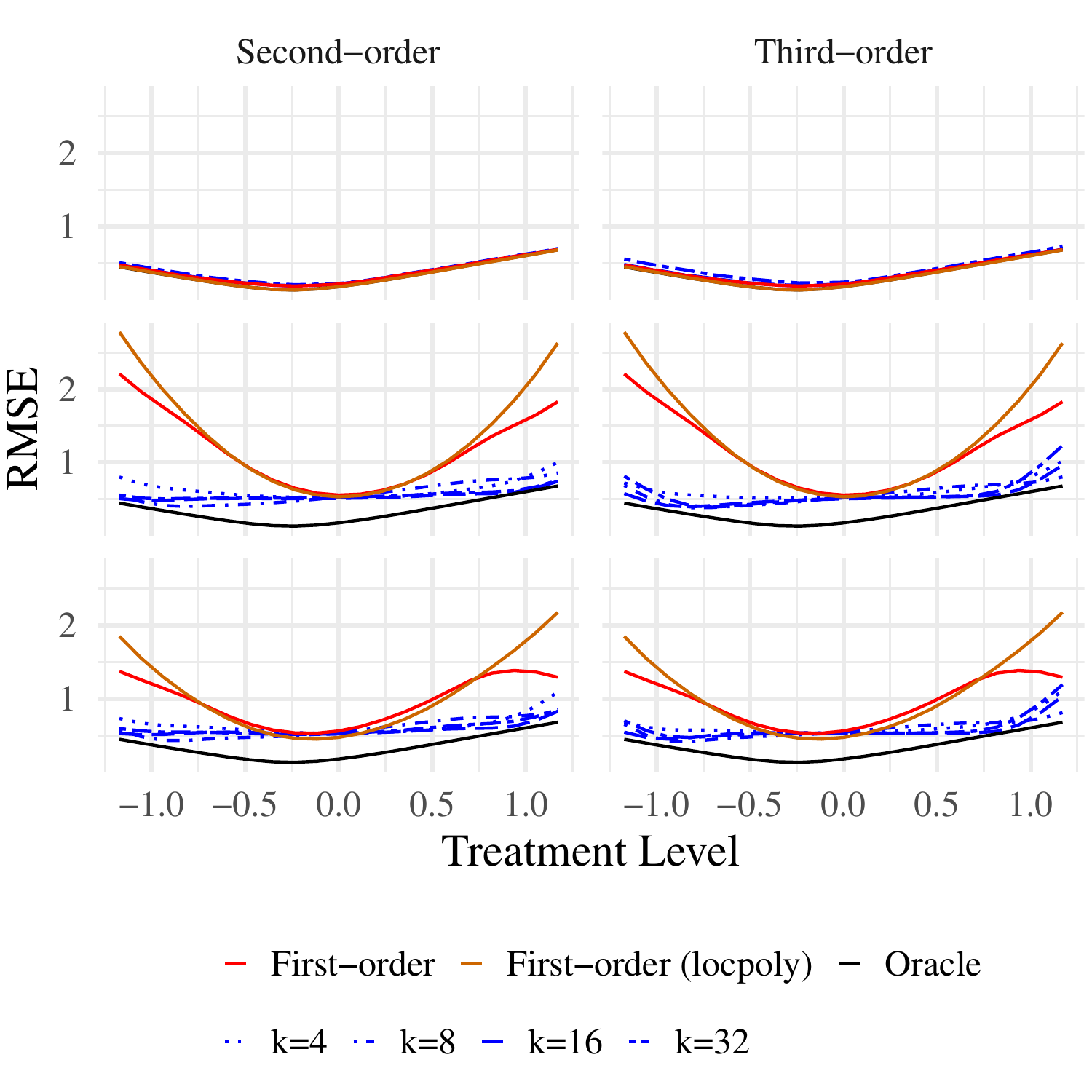}
	\caption{RMSE as a function of the evaluation points of the dose--response function in three settings. 
		Top row: the nuisance function estimators are correctly specified; middle row: the nuisance functions are estimated while omitting the interaction term; bottom row: the nuisance functions are estimated using Random Forest. 
		\textit{First-order (locpoly)} denotes the first-order estimator obtained by regressing $\widehat{\varphi}(Z)$ on $A$ using local linear smoothing, whereas \textit{First-order} denotes $\widehat{\theta}_1(t)=\Pn \widehat{f}_0$ from Section~\ref{sec:higher_order_est}. $k$ refers to the basis dimension in the higher-order corrections. \label{fig:mse}}
\end{figure}
\section{Data analysis}
In this section, we apply our methods to the dataset analyzed in \cite{colangelo2020double}\footnote{Available at \url{https://github.com/KColangelo/Double-ML-Continuous-Treatment}.} to study the effect of the Job Corps program. The goal is to understand the causal dose-response function linking the total number of hours of academic and vocational training that the participants attend in the first year of the program (treatment $A$) to the expected proportion of weeks employed in the second year (outcome $Y$). The dataset consists of 4,024 young individuals (age: 16-24 years) who completed at least one week of training (40 hours). We treat all the covariates $X$ from the dataset as baseline covariates and identify the dose-response function under conditional randomization, i.e., $Y^a \ind A \mid X$ for $Y^a$ the potential outcome if treatment is set to $A = a$ \citep{rubin1974estimating}. We refer to \cite{colangelo2020double} for further details.

We use Random Forest (from the \texttt{ranger} package in \texttt{R} with hyperparameter tuning by five-fold cross-validation, \citep{wright2019package}) as the main model for all nuisance functions and employ five-fold cross-fitting. For a fold $k$ consisting of a set of $n_k$ units $J_k$, we estimate $\int \mu(a, x) d\Pb(x)$ as $n_k^{-1} \sum_{i \in J_k} \widehat\mu^{(-k)}(a, X_i)$, where $\widehat\mu^{(-k)}$ denotes a Random Forest estimate of $\mu$ fitted excluding observations in fold $J_k$. That is, we simplify the implementation and avoid the additional splitting outlined in Algorithm \ref{alg:first_order} or the leave-one-out averaging discussed in \cite{semenova2017debiased}. We follow the conditional density model from \cite{kennedy2017nonparametric} whereby one assumes that $A = \E(A \mid X) + \sigma(X)\epsilon$ for some mean-zero, unit-variance $\epsilon$. On the training sample, we estimate $\E(A \mid X)$ and $\sigma(X)$ by Random Forest and compute $\widehat\epsilon = \{A - \widehat\E(A \mid X)\}/\widehat\sigma(X)$ on the test sample. We then estimate the density of $\widehat\epsilon$ by kernel density estimation (\texttt{kde} from the \texttt{ks} package in \texttt{R} with default tuning parameters, \citep{duong2024package}). The resulting estimate of $\pi(A \mid X)$ on the test sample is simply the density of $\widehat\epsilon$ rescaled by $\widehat\sigma(X)$. The marginal density of $A$ is estimated on the full sample by kernel density estimation. 

Next, we regress the cross-fitted pseudo outcome $\widehat\varphi(Z)$ onto $A$ by local linear regression using the function \texttt{lprobust} with default parameters from the \texttt{nprobust} package and Epanechnikov kernel. The package implements the bias-corrected and robust inferential procedures proposed in \cite{calonico2018effect}; their bias-correction is meant to subtract-off the smoothing bias arising from using local polynomials, which is a different bias than that arising from having to estimate the nuisances and that we propose reducing through higher-order methods. In this light, for simplicity, here we report only their standard estimates and not the bias-corrected ones. However, we note that, in subsequent work to ours, \cite{takatsu2024debiased} extended the works of \cite{calonico2018effect} to the dose-response setting.

In addition to the first-order method based on local linear smoothing, we consider a second- and a third-order estimator. To construct the higher-order estimators, we expand the continuous covariates using an additive B-spline basis (\texttt{bs} function in \texttt{R}) with degrees of freedom equal to 6 for all these covariates except for mean-gross-weekly-earnings for which we use 10 degrees of freedom. When the covariates appear to have a point mass at zero, we exclude this value when computing the quantiles to place the knots. For the categorical variables, we consider two options: they are either added to the B-spline basis as main terms or with all their two-way interactions. The first case leads to 79 terms in the basis, while second case to 1025, which makes the in-sample Gram matrix $\widehat\Omega$ non-invertible. In this non-invertibility setting, our theoretical analysis would prescribe estimating $\Omega$ by numerical integration against the estimated joint density of the covariates and the treatment at the evaluation point $A=t$. However, to avoid the substantial computational costs, we simply resort to the pseudo-inverse. While this lacks theoretical justification, it yields reasonable estimates that do not differ too much from those obtained with an additive basis.

The results are reported in Figure \ref{fig:da}. Consistent with the finding from \cite{colangelo2020double}, the estimated dose-response from regressing $\widehat\varphi(Z)$ on $A$ via local linear smoothing (black line) is roughly an inverted U-shape. However, the standard errors are large enough that the constant dose-response function, corresponding to a null effect, cannot be ruled out. The higher-order estimates are roughly in agreement with the first-order ones on most of the evaluation region, which we take to be $[200, \ 1800]$. As the first-order estimator is based on local linear smoothing, to construct the higher-order corrections we rely on $K_{ht}(a)$ being the first-order estimator's equivalent kernel, which can be negative. We estimate $\Omega$ by its empirical, in-sample counterpart; for constructing the weight in $\widehat\Omega$, to avoid certain numerical instability, we use the Epanechnikov kernel in place of the local linear equivalent kernel (using the same bandwidth). Finally, in the supplementary material, we also report the bounds on the dose-response function arising from the proposed sensitivity analysis to the no-unmeasured-confounding assumption (Section 5).
\begin{figure}[!h]
	\centering
	\includegraphics[scale=0.3]{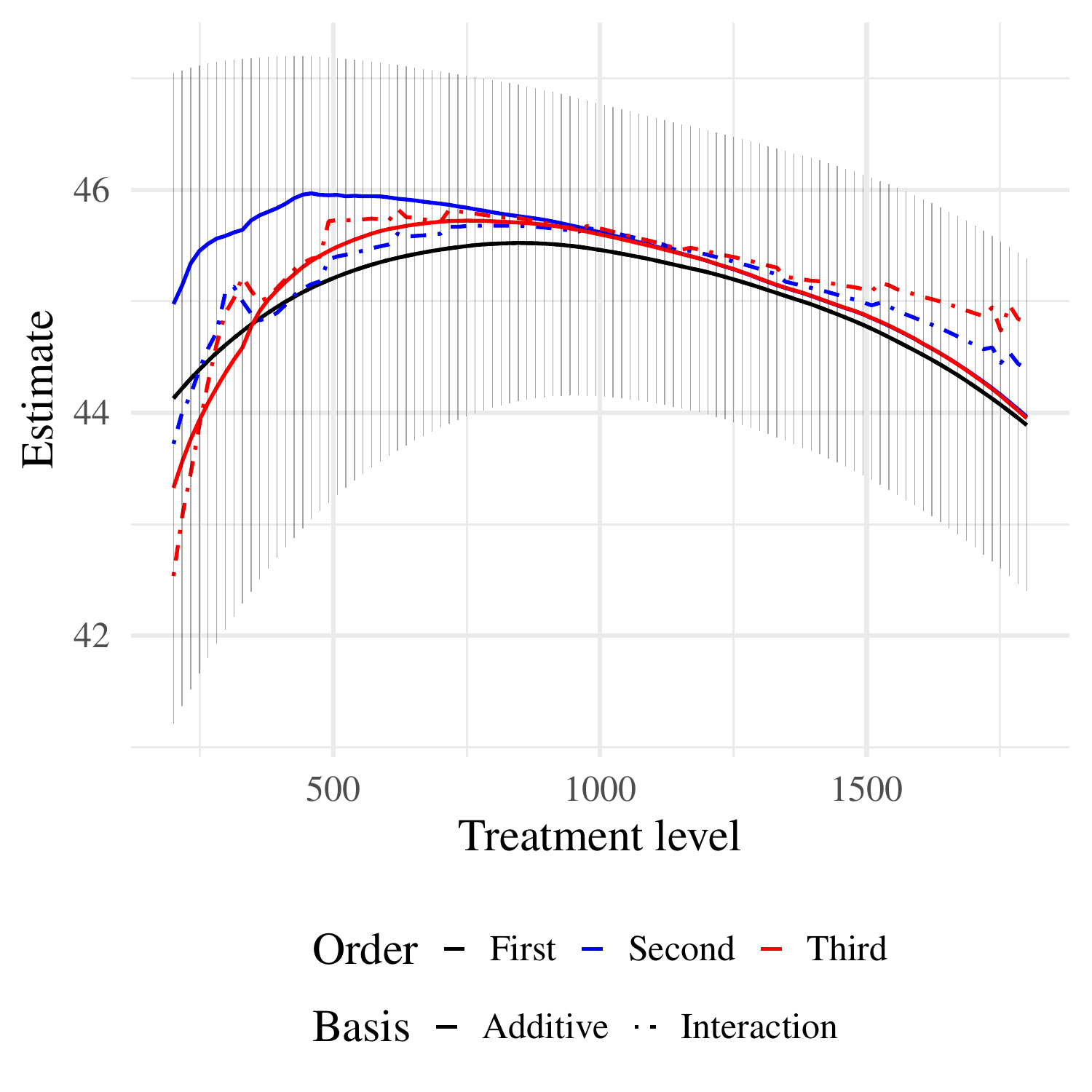}
	\caption{Estimates of $\E(Y^a)$ from our analysis of the Job Corps dataset from \cite{colangelo2020double}. The pointwise 95\% confidence bands are for the local linear estimator as calculated by the function \texttt{lprobust} with no smoothing bias correction. \label{fig:da}}
\end{figure}
\section{Conclusions and future directions}
In this work, we have explored the possibility of improving existing approaches to doubly-robust estimation of a dose–response curve. We specialize recent advances on regression with estimated outcomes to the dose–response setting, providing an analysis of a broad class of doubly-robust estimators. As our second and main contribution, we show that an estimator akin to the higher-order estimator of the average treatment effect described in \cite{robins2017higher} can improve upon existing estimators under certain smoothness conditions. In addition, a data analysis and simulation study illustrate and corroborate our theoretical findings in finite samples. In the supplementary material, we also describe a flexible method for bounding the causal dose–response function in the presence of unmeasured confounding.

Many open questions remain. First, and perhaps most importantly, a minimax lower bound for estimating the dose–response curve has not yet been established in the literature, to the best of our knowledge. Deriving such a bound would help clarify under what conditions, if any, the higher-order estimator proposed here can be improved.

Second, the higher-order estimator is currently unable to adapt to situations where the conditional density of the treatment given the covariates, viewed as a function of the treatment alone, is less smooth than the dose–response itself. It is unclear whether this reflects an intrinsic limitation of the estimator, a lack of tightness in the upper bound on the risk that we have derived, or a fundamental aspect of the minimax rate. A possible direction for future work is to investigate the construction of higher-order estimators based on regressions of suitable pseudo-outcomes onto $A$.

Finally, our results concern the convergence of the estimators in mean squared error. We leave the study of their inferential properties, as well as the development of data-driven procedures for selecting the hyperparameters, as important directions for future research.

\section{Acknowledgments}
The authors acknowledge the Texas Advanced Computing Center (TACC) at The University of Texas at Austin for providing computational resources that have contributed to the research results reported within this paper. URL: \url{http://www.tacc.utexas.edu}.
\appendix
\section{Proof of Proposition 2.1}
Suppose we observe two samples of $n$ iid observations from $\Pb$, say $D^n$ and $Z^n$. Denote $Z_1, \ldots, Z_n$ the observations in $Z^n$ that are iid copies of a generic random variable $Z$. Let $V$ denote a generic random variable such that $V \subset Z$. For example, in the dose-response settings, $Z = (Y, A, X)$ and $V = A$. Let $\theta_0(v) = \E\{f(Z) \mid V = v\}$ denote the true regression function that needs to be estimated. Recall that $\Norm{f}^2= \int f(z)^2 d\Pb(z) = \Pb\{f(Z)^2\}$ and $\theta^* \in \argmin_{\theta \in \Theta} \Norm{\theta - \theta_0}^2$, a fixed function. Let $\fhat(\cdot)$ denote an estimate of $f(\cdot)$ constructed using only observations in $D^n$. Let $\Pn$ denote the empirical average over sample $Z^n$. The estimator of $\theta_0(\cdot)$ considered is
\begin{align*}
	\thetahat \equiv \widehat\theta_{\text{erm}} = \argmin_{\theta \in \Theta} \frac{1}{n} \sum_{i = 1}^n \{\fhat(Z_i) - \theta(V_i)\}^2 \equiv \argmin_{\theta \in \Theta} \Pn\{\fhat(Z) - \theta(V)\}^2.
\end{align*}
Finally, let $\rhat(v) = \E\{\fhat(Z) \mid V  = v, D^n\} - \theta_0(v)$. 

The statement of the theorem follows after proving 
\begin{align} \label{eq:statement_1}
	\E(\Norm{\thetahat - \theta_0}^2 \mid D^n) \lesssim \Norm{\rhat}^2 + \Norm{\theta^* - \theta_0}^2 + \delta_n^2
\end{align}

Our proof is a specialization of that of Theorem 3 in \cite{foster2019orthogonal}. A useful reference for the arguments made in their proof is Chapter 14 in \cite{wainwright2019high}. To prove \eqref{eq:statement_1}, we need two lemmas. 
\begin{lemma}\label{lemma:base}
	The following inequality holds:
	\begin{align*}\label{eq:base}
		\Norm{\thetahat - \theta_0}^2 & \leq 8\Norm{\rhat}^2 + 3\Norm{\theta^* - \theta_0}^2 \\
		& \hphantom{\leq} - 2(\Pn - \Pb)\left[\{\fhat(Z) - \thetahat(V)\}^2 - \{\fhat(Z) - \theta^*(V)\}^2\right].
	\end{align*}
\end{lemma}
\begin{proof}
	Notice that
	\begin{align*}
		\Pb\left[\{\fhat(Z) - \thetahat(V)\}^2 - \{\fhat(Z) - \theta^*(V)\}^2\right] & = -2\Pb[\rhat(V)\{\thetahat(V) - \theta^*(V)\}] \\
		& \hphantom{=} + \Norm{\thetahat - \theta_0}^2 - \Norm{\theta^* - \theta_0}^2.
	\end{align*}
	By the AM-GM inequality we have, for any $\kappa > 0$\footnote{For any $x, y$ and $\kappa > 0$, $$\left(\frac{x}{\sqrt{2\kappa}} - y\sqrt{\frac{\kappa}{2}}\right)^2 \geq 0 \implies \frac{x^2}{2\kappa} + y^2\frac{\kappa}{2} \geq xy$$}: 
	\begin{align*}
		2\rhat(V)\{\thetahat(V) - \theta^*(V)\} & =  2\rhat(V)\{\thetahat(V) - \theta_0(V)\} + 2\rhat(V)\{\theta_0(V) - \theta^*(V)\} \\
		& \leq \frac{2}{\kappa}\rhat(V)^2 + \kappa\{\thetahat(V) - \theta_0(V)\}^2 + \kappa \{\theta^*(V) - \theta_0(V)\}^2
	\end{align*}
	By monotonicity of integration, it follows that
	\begin{align*}
		-2\Pb[\rhat(V)\{\thetahat(V) - \theta^*(V)\}]& \geq -\frac{2\|\rhat\|^2}{\kappa} - \kappa\Norm{\thetahat - \theta_0}^2 - \kappa\Norm{\theta^* - \theta_0}^2.
	\end{align*}
	Rearranging and choosing $\kappa = 1/2$, we have
	\begin{align*}
		\Norm{\thetahat - \theta_0}^2 \leq 8 \Norm{\rhat}^2 + 3\Norm{\theta^* - \theta_0}^2 + 2\Pb\left[\{\fhat(Z) - \thetahat(V)\}^2 - \{\fhat(Z) - \theta^*(V)\}^2\right].
	\end{align*}
	Because $\Pn[\{\fhat(Z) - \thetahat(V)\}^2 -  \{\fhat(Z) - \theta^*(V)\}^2] \leq 0$ since $\theta^* \in \Theta$ and $\thetahat$ is a minimizer, we also have
	\begin{align*}
		\Norm{\thetahat - \theta_0}^2 \leq & \ 8 \Norm{\rhat}^2 + 3\Norm{\theta^* - \theta_0}^2 \\
		& - 2(\Pn - \Pb)\left[\{\fhat(Z) - \thetahat(V)\}^2 - \{\fhat(Z) - \theta^*(V)\}^2\right],
	\end{align*}
	as desired. 
\end{proof}
\begin{lemma}\label{lemma:empirical_process}
	For some constant $L$, let
	\begin{align*}
		& \mathcal{E} = \left\{ \exists \theta \in \Theta: \Norm{\theta - \theta^*} \geq \delta_n \cap \vphantom{\left|(\Pn - \Pb)\left[\{\fhat(Z) - \theta(U)\}^2 -  \{\fhat(Z) - \theta^*(V)\}^2\right]\right|} \right. \\
		& \left. \hphantom{\mathcal{E} =} \quad \left|(\Pn - \Pb)\left[\{\fhat(Z) - \theta(V)\}^2 -  \{\fhat(Z) - \theta^*(V)\}^2\right]\right| \geq L\delta_n\Norm{\theta - \theta^*}\right\}
	\end{align*}
	Under the conditions of Proposition 2.1, $\Pb\left(\mathcal{E}\mid D^n\right) \leq c_1\exp(-c_2n\delta_n^2)$ for some constants $c_1$ and $c_2$. 
\end{lemma}
\begin{proof}
	Consider the sets
	\begin{align*}
		\mathcal{S}_m = \left\{\theta \in \Theta: 2^{m-1} \delta_n \leq \Norm{\theta - \theta^*} \leq 2^{m} \delta_n\right\}
	\end{align*}
	Because $\sup_{\theta \in \Theta} \Norm{\theta}_\infty \leq S$, $\Norm{\theta - \theta^*} \leq 2S$ for any $\theta \in \Theta$, which implies that any $\theta$ such that $\Norm{\theta - \theta^*} \geq \delta_n$ must belong to some $\mathcal{S}_m$ for $m \in \{1, \ldots, M\}$, where $M \leq \log_2(2S / \delta_n)$. By a union bound,
	\begin{align*}
		& \Pb(\mathcal{E} \mid D^n) \\
		& \leq \sum_{m = 1}^M \Pb(\mathcal{E} \cap \mathcal{S}_m \mid D^n) \\
		& \leq \sum_{m=1}^M \Pb\left( \vphantom{ \left|(\Pn - \Pb)\left[\{\fhat(Z) - \theta(V)\}^2\right]\right|} \exists \theta \in \Theta: \Norm{\theta - \theta^*} \leq 2^m \delta_n \right. \\
		& \hphantom{\leq \sum_{m=1}^M \Pb\left(\right.} \left. \cap \left|(\Pn - \Pb)\left[\{\fhat(Z) - \theta(V)\}^2 -  \{\fhat(Z) - \theta^*(V)\}^2\right]\right| \geq 2^{m-1}L\delta^2_n \mid D^n \right) \\
		& \leq \sum_{m = 1}^M\Pb(Z_n(2^m\delta_n) \geq 2^{m-1}L\delta_n^2 \mid D^n)
	\end{align*}
	where we define
	\begin{align*}
		Z_n(r) = \sup_{\theta \in \Theta: \|\theta - \theta^*\| \leq r}  \left|(\Pn - \Pb)\left[\{\fhat(Z) - \theta(V)\}^2 -  \{\fhat(Z) - \theta^*(V)\}^2\right]\right|
	\end{align*}
	Under the conditions of the proposition, we have
	\begin{align*}
		\sup_{\theta \in \Theta} \sup_{z \in \mathcal{Z}} \left|\{\fhat(z) - \theta(u)\}^2 -  \{\fhat(z) - \theta^*(v)\}^2 \right| \leq 8S^2 
	\end{align*}
	and
	\begin{align*}
		\left[\{\fhat(z) - \theta(v)\}^2 -  \{\fhat(z) - \theta^*(v)\}^2 \right]^2 & \leq 16S^2\{\theta(v) - \theta^*(v)\}^2
	\end{align*}
	Thus, we have
	\begin{align*}
		\sigma^2(r) \equiv \sup_{\theta: \Norm{\theta - \theta^*} \leq r} \Pb\left(\left[\{\fhat(Z) - \theta(V)\}^2 -  \{\fhat(Z) - \theta^*(V)\}^2 \right]^2\right) \leq 16S^2r^2
	\end{align*}
	By Theorem 3.27 in \cite{wainwright2019high} and subsequent discussion, viewing $\fhat(\cdot)$ as fixed given $D^n$, we have
	\begin{align*}
		& \Pb\left(Z_n(r) \geq \E\{Z_n(r) \mid D^n\} + v \mid D^n \right) \\
		& \leq 2\exp\left(- \frac{nv^2}{8e[16S^2r^2 + 16S^2\E\{Z_n(r) \mid D^n\}] + 32S^2v}\right)
	\end{align*}
	Next, we bound $\E\{Z_n(r) \mid D^n\}$. By a symmetrization argument, for $\epsilon$ a vector of iid Rademacher random variables independent of $Z_n$ and $D^n$, it holds that
	\begin{align*} 
		& \E\{Z_n(r) \mid D^n\} \\
		& \leq 2\E_{Z, \epsilon}\left(\sup_{\theta \in \Theta: \|\theta - \theta^*\| \leq r} \left|\Pn\left(\epsilon\left[\{\fhat(Z) - \theta(V)\}^2 -  \{\fhat(Z) - \theta^*(V)\}^2 \right]\right)\right| \left.\vphantom{\sum} \right| D^n \right)
	\end{align*}
	The Ledoux-Talagrand contraction inequality (see also pages 147 and 474 in \cite{wainwright2019high}) yields that, for non-random $x_i \in \mathcal{X}$, a class $
	\mathcal{F}$ of real-valued functions and a $L$-Lipschitz function $\phi: \R \to \R$, the following holds
	\begin{align*}
		\E\left(\sup_{f \in \mathcal{F}} \left|\sum_{i = 1} ^ n \epsilon_i\{\phi(f(x_i)) - \phi(f^*(x_i))\}\right| \right) \leq 2L \E\left(\sup_{f \in \mathcal{F}} \left| \sum_{i = 1} ^ n \epsilon_i\{f(x_i) - f^*(x_i)\} \right| \right)
	\end{align*}
	where $f^*: \mathcal{X} \to \R$ is any function. 
	
	Under the boundedness conditions of our proposition, we have
	\begin{align*}
		\left| \{\fhat(z) - \theta_1(v)\}^2 - \{\fhat(z) - \theta_2(v)\}^2 \right| \leq 4S|\theta_1(v) - \theta_2(v)|
	\end{align*}
	for any $z \in \mathcal{Z}$. Thus, the square-loss in this case is $4S$-Lipschitz for any $z \in \mathcal{Z}$. By the contraction inequality above, we have
	\begin{align*}
		& \E_{\epsilon}\left(\sup_{\theta \in \Theta: \|\theta - \theta^*\| \leq r} \left|\Pn\left(\epsilon\left[\{\fhat(Z) - \theta(u)\}^2 -  \{\fhat(Z) - \theta^*(V)\}^2 \right]\right)\right| \right)\\
		& \leq 2\E_{\epsilon}\left(\sup_{\theta \in \Theta: \|\theta - \theta^*\| \leq r} \left|\Pn[\epsilon\{\theta(V) -  \theta^*(V)\}] \right| \right)
	\end{align*} 
	Therefore, we have
	\begin{align*}
		& \E_{Z, \epsilon}\left(\sup_{\theta \in \Theta: \|\theta - \theta^*\| \leq r} \left|\Pn\left(\epsilon\left[\{\fhat(Z) - \theta(V)\}^2 -  \{\fhat(Z) - \theta^*(V)\}^2 \right]\right)\right| \left.\vphantom{\sum} \right| D^n \right) \\
		& \quad \leq 8S\E\left(\sup_{\theta \in \Theta: \|\theta - \theta^*\| \leq r} \left|\Pn[\epsilon\{\theta(V) -  \theta^*(V)\}]\right| \left.\vphantom{\sum} \right| D^n \right) \\
		& \quad \equiv  8S \mathcal{R}_n(\Theta^*, r)
	\end{align*}
	Next, we have assumed $\Theta^*$ to be star-shaped; by Lemma 13.6 in \cite{wainwright2019high} the function $r \mapsto \mathcal{R}_n(\Theta^*, r) / r $ is non-increasing. Therefore, because $\delta_n$ solves $\mathcal{R}_n(\Theta^*, \delta) \leq \delta^2$, we also have:
	\begin{align*}
		\mathcal{R}_n(\Theta^*, r) \leq r\delta_n \quad \text{ for all } r \geq \delta_n.
	\end{align*}
	Therefore, we conclude that $\E\{Z_n(r) \mid D^n\} \leq 16Sr\delta_n$ for all $r \geq \delta_n$. 
	
	Putting everything together, we have derived that
	\begin{align*}
		\Pb\left(Z_n(r) \geq 16Sr\delta_n + v \mid D^n \right) \leq 2\exp\left(- \frac{nv^2}{8e(16S^2r^2 + 16^2S^3r^2) + 32S^2v}\right)
	\end{align*}
	Let $L = 34S$; specializing this bound to our setting with $r = 2^m\delta_n$ and $u = S2^m \delta_n^2$, we have
	\begin{align*}
		\Pb\left(Z_n(2^m\delta_n) \geq  L \cdot 2^{m-1} \delta_n^2 \mid D^n\right) \leq 2\exp\left(- \frac{n\delta_n^2}{8e(16 + 16^2S) + 32S} \right)
	\end{align*}
	since $2^{-m} \leq 1$ for any $m \geq 1$. Finally,
	\begin{align*}
		\Pb(\mathcal{E} \mid D^n) & \leq \sum_{m=1}^M \Pb\left(Z_n(2^m\delta_n) \geq L 2^{m-1}\delta_n^2 \mid D^n\right) \\
		& \leq 2\exp\left(- \frac{n\delta_n^2}{8e(16 + 16^2S) + 32S} + \ln M \right)
	\end{align*}
	
	Recall that $M \leq \log_2(2S / \delta_n) \leq \log_2(2S\sqrt{2n})$ because we have assumed $\delta_n \geq 1/\sqrt{2n}$. Therefore, if
	\begin{align*}
		\delta_n^2 \geq \frac{2\ln\{\log_2(2S\sqrt{2n})\}\{8e(16 + 16^2S) + 32S\}}{n}
	\end{align*}
	we can conclude 
	\begin{align*}
		\Pb(\mathcal{E} \mid D^n) \leq 2\exp\left(- \frac{n\delta_n^2}{16e(16 + 16^2S) + 64S}\right)
	\end{align*}
	as desired.
\end{proof}
\subsection{Proof of Equation \eqref{eq:statement_1}}
Notice that Lemma \ref{lemma:empirical_process} implies that, with probability at least $1 - c_1\exp(-c_2n\delta_n^2)$, either of the following two events occur:
\begin{enumerate}
	\item Event 1:
	\begin{align*}
		\Norm{\thetahat - \theta^*} \leq \delta_n & \implies \Norm{\thetahat - \theta_0} \leq \delta_n +  \Norm{\theta^* - \theta_0} \\
		& \implies \Norm{\thetahat - \theta_0}^2 \leq 2\delta_n^2 + 2\Norm{\theta^* - \theta_0}^2
	\end{align*} 
	\item Event 2:
	\begin{align*}
		&	\left|(\Pn - \Pb)\left[\{\fhat(Z) - \widehat\theta(V)\}^2 -  \{\fhat(Z) - \theta^*(V)\}^2\right]\right| \\
		& \leq L\delta_n\Norm{\thetahat - \theta^*} \\
		& \leq \frac{L^2\delta_n^2}{\kappa} + \frac{\kappa \Norm{\thetahat - \theta_0}^2}{2} + \frac{\kappa \Norm{\theta^* - \theta_0}^2}{2}
	\end{align*}
	for any $\kappa > 0$.
\end{enumerate} 
Because of the result from Lemma \ref{lemma:base}, Event 2 (with $\kappa = 1 / 2)$ implies
\begin{align*}
	\Norm{\thetahat - \theta_0}^2 \leq 16\Norm{\rhat}^2 + 7\Norm{\theta^* - \theta_0}^2 + 8L^2 \delta_n^2
\end{align*}
This means that there exists a constant $C$ such that
\begin{align*}
	\Pb\left(\Norm{\thetahat - \theta_0}^2 \leq C\left(\Norm{\rhat}^2 + \Norm{\theta^* - \theta_0}^2 + \delta_n^2\right) \mid D^n\right) \geq 1 - c_1\exp(-c_2n\delta_n^2)
\end{align*}
Let $t_0 = C\left(\Norm{\rhat}^2 + \Norm{\theta^* - \theta_0}^2 + \delta_n^2 \right)$. This implies that
\begin{align*}
	& \E\left(\Norm{\thetahat - \theta_0}^2 \mid D^n \right)\\
	& = \int_0^\infty \Pb\left(\Norm{\thetahat - \theta_0}^2 > t \mid D^n\right) dt \\
	& = \int_0^{t_0} \Pb\left(\Norm{\thetahat - \theta_0}^2 > t \mid D^n\right) dt + \int_{t_0}^\infty\Pb\left(\Norm{\thetahat - \theta_0}^2 > t \mid D^n\right) dt \\
	& = \int_0^{t_0} \Pb\left(\Norm{\thetahat - \theta_0}^2 > t \mid D^n\right) dt + \int_{0}^\infty\Pb\left(\Norm{\thetahat - \theta_0}^2 > t_0 + t \mid D^n\right) dt \\
	& \leq t_0 + \int_0^\infty c_1\exp(-c_3nt) dt \\
	& = t_0 + \frac{c_1}{c_3 n}
\end{align*}
as desired. The last inequality holds because $\Pb\left(\Norm{\thetahat - \theta_0}^2 > t \mid D^n\right) \leq 1$ and because, whenever $\delta_n$ satisfies $\mathcal{R}_n(\delta_n; \Theta^*) / \delta_n \leq \delta_n$, then so does $\delta_n' = \sqrt{\delta_n^2 + t / C} > \delta_n $. This means that we can write
\begin{align*}
	t_0 + t = C\left(\Norm{\rhat}^2 + \Norm{\theta^* - \theta_0}^2 + {\delta_n'}^{2} \right)
\end{align*}
Thus, $$\Pb\left(\Norm{\thetahat - \theta_0}^2 > t_0 + t \mid D^n\right) \leq c_1\exp\{-c_2 n (\delta^2_n + t / C)\} \leq c_1\exp(-c_3 nt)$$ as Lemma \ref{lemma:empirical_process} holds for any $\delta'_n$ that solves $\mathcal{R}_n(\delta; \Theta^*) / \delta \leq \delta$.
\section{Proof of Proposition 2.2}
The proof of this proposition is based on Proposition 1 and Theorem 1 from \cite{kennedy2020optimal}. Their Theorem 1 together with consistency of $\widehat{f}(z)$ yields that
\begin{align*}
	|\widehat\theta_{\text{ls}}(t) - \theta_0(t)| \leq & \ | \widetilde\theta_{\text{ls}}(t) - \theta_0(t)| + \left| \frac{1}{n}\sum_{i = 1}^n W_i(t; V^n)\E\{ \widehat{f}(Z_i) - f(Z_i) \mid D^n, V_i\} \right| \\
	& + \color{black}{o_\Pb\left(\sqrt{\E\left[\left\{\widetilde\theta_{\text{ls}}(t) - \theta_0(t) \right\}^2\right]} \right)}.
\end{align*}
Under the assumptions of Proposition 2.2 (localized weights):
\begin{align*}
	\left| \frac{1}{n} \sum_{i = 1}^n W_i(t; V^n) \rhat(V_i) \right| \leq \frac{1}{n} \sum_{i = 1}^n \left| W_i(t; V^n) \right| |\rhat(V_i)| \one\{V_i \in N(t)\} \lesssim \sup_{v \in N(t)} |\rhat(v)|,
\end{align*}
as desired. 
\section{Extension of Proposition 2.2 to the $L_p$ risk}
Here, we report an extension to Proposition 2.2 that covers the $L_p$ risk. In particular, let $\|\widehat\theta_{\text{ls}} - \theta_0\|_p^p = \int |\widehat\theta_{\text{ls}}(v) - \theta_0(v)|^p d\Pb(v)$. Define the smoothed conditional bias:
\begin{align*}
	\widehat{b}(v) = \frac{1}{n}\sum_{i=1}^n W_i(v; V^n) \E\{\widehat{f}(Z) - f(Z) \mid V_i, D^n\}
\end{align*}
and recall the oracle estimator $\widetilde\theta_{\text{ls}}(v) = n^{-1}\sum_{i=1}^n W_i(v; V^n)f(Z_i)$. 
\begin{proposition}
	Under the condition of Proposition 2.2, it holds that
	\begin{align*}
		\|\widehat\theta_{\text{ls}} - \theta_0\|_p^p \lesssim  \|\widetilde\theta - \theta_0\|_p^p +  \|\widehat{b}\|_p^p + o_\Pb\left(\E\|\widetilde\theta_{\text{ls}} - \theta_0\|_p^p\right).
	\end{align*}
\end{proposition}
\begin{proof}
	We start by bounding the $L_2$ risk and then we derive a bound for the $L_p$ case with $2 \leq p < \infty$. The arguments follow by Section B of \cite{rambachan2022robust}; we restate their proof for the $L_2$ risk here for completeness. 
Let $\widetilde\theta_{\text{ls}}(v) = n^{-1} \sum_{i=1}^n W_i(v; V^n) f(Z_i)$ denote the oracle estimator that has access to the true pseudo-outcome $f(z)$. Further, let $\|W_i(\cdot; V^n)\|^2 = \int W^2_i(v; V^n)d\Pb(v)$ and define
\begin{align*}
	\|g\|_{w^2} = \frac{\sum_{i=1}^n \|W_i(\cdot; V^n)\|^2 \int g^2(Z) d\Pb(Z \mid V_i)}{\sum_{j=1}^n \|W_j(\cdot; V^n)\|^2}.
\end{align*}
We have
\begin{align*}
	\|\widehat\theta_{\text{ls}} - \theta_0\|^2 \leq \|\widehat\theta_{\text{ls}} - \widetilde\theta_{\text{ls}} - \widehat{b}\|_2^2 + \| \widehat{b} \|_2^2 + \|\widetilde\theta_{\text{ls}} - \theta_0\|_2^2.
\end{align*}
Notice that, since $\E\{f(Z) \mid V_i\} = \theta(V_i)$, we have:
\begin{align*}
	&  \E\|\widetilde\theta_{\text{ls}} - \theta_0\|_2^2 \\
	& =  \E\left( \int \left[ \frac{1}{n} \sum_{i=1}^n W_i(v; V^n)\{f(Z_i) - \theta(V_i)\}\right]^2d\Pb(v) \right) \\
	& \hphantom{=} + \E\left( \int \left[\frac{1}{n} \sum_{i=1}^n W_i(v; V^n)\{\theta_0(V_i) - \theta_0(v)\}\right]^2 d\Pb(v) \right) \\
	& \geq \E\left[\frac{1}{n^2}\sum_{i=1}^n \|W_i(\cdot; V^n)\|^2 \var\{f(Z) \mid V_i\} \right] \\
	& = \E \left\{\|\sigma\|_{w^2} \frac{1}{n^2}\sum_{j=1}^n \|W_j(\cdot; V^n)\|^2\right\}.
\end{align*}
where $\sigma(v) = \var\{f(Z) \mid V=v\}$. Next, letting $r(v) = \E\{\widehat{f}(Z) - f(Z) \mid V=v, D^n\}$, notice that
\begin{align*}
	\|\widehat\theta_{\text{ls}} - \widetilde\theta_{\text{ls}} - \widehat{b}\|^2 = \int \left[\frac{1}{n}\sum_{i=1}^n W_i(v; V^n) \left\{\widehat{f}(Z_i) - f(Z_i) - r(V_i)\right\}\right]^2 d\Pb(v)
\end{align*}
and that 
\begin{align*}
	\E\left[\frac{1}{n}\sum_{i=1}^n W_i(v; V^n) \left\{\widehat{f}(Z_i) - f(Z_i) - r(V_i)\right\} \mid V^n, D^n\right] = 0.
\end{align*}
Therefore,
\begin{align*}
	& \E\left(\|\widehat\theta_{\text{ls}} - \widetilde\theta_{\text{ls}} - \widehat{b}\|^2 \mid V^n, D^n\right) \\
	&  = \int \E \left( \left[\frac{1}{n}\sum_{i=1}^n W_i(v; V^n) \left\{\widehat{f}(Z_i) - f(Z_i) - r(V_i)\right\}\right]^2 \mid V^n, D^n\right) d\Pb(v) \\
	& = \int\frac{1}{n^2}\sum_{i=1}^n W^2_i(v; V^n) \var\left\{\widehat{f}(Z_i) - f(Z_i) \mid V^n, D^n\right\}d\Pb(v) \\
	& \leq n^{-2}\sum_{i=1}^n \|W_i(\cdot; V^n)\|^2 \int\left\{\widehat{f}(z) - f(z)\right\}^2d \Pb(z \mid V_i) \\
	& =\|\widehat{f} - f\|_{w^2} \frac{1}{n^2}\sum_{j=1}^n \|W_j(\cdot; V^n)\|^2
\end{align*}
Thus, we can conclude that
\begin{align*}
	& \Pb\left( \frac{\|\sigma\|_{w^2}\|\widehat\theta_{\text{ls}} - \widetilde\theta_{\text{ls}} - \widehat{b}\|}{\|\widehat{f} - f\|_{w^2} \sqrt{\E(\|\widetilde\theta_{\text{ls}} - \theta\|^2)}} > t  \right) \\
	& = \E\left\{  \Pb\left( \frac{\|\sigma\|^2_{w^2}\|\widehat\theta_{\text{ls}} - \widetilde\theta_{\text{ls}} - \widehat{b}\|^2}{\|\widehat{f} - f\|^2_{w^2} \E\|\widetilde\theta_{\text{ls}} - \theta\|^2} > t^2  \right) \mid V^n, D^n\right\} \\
	& \leq\frac{\E\left\{ \|\sigma\|^2_{w^2}\sum_{j=1}^n n^{-2} \|W_j(\cdot; V^n)\|^2\right\}}{t^2 \E\|\widetilde\theta - \theta_0\|^2} \\
	& \leq \frac{1}{t^2},
\end{align*}
so that
\begin{align*}
	\|\widehat\theta_{\text{ls}} - \widetilde\theta_{\text{ls}} - \widehat{b}\|^2 = O_\Pb\left(\frac{\|\widehat{f} - f\|_{w^2}\E\|\widetilde\theta_{\text{ls}} - \theta\|^2}{\|\sigma\|_{w^2}}\right).
\end{align*}
Putting everything together, we have
\begin{align*}
	\|\widehat\theta_{\text{ls}} - \theta_0\|^2 \leq \|\widehat{b}\|_2^2 + \|\widetilde\theta_{\text{ls}} - \theta_0\|_2^2 + O_\Pb\left(\frac{\|\widehat{f} - f\|_{w^2}\E\|\widetilde\theta_{\text{ls}} - \theta\|^2}{\|\sigma\|_{w^2}}\right).
\end{align*}
For a general $p \in [2, \infty)$, letting $\|g\|_p^p = \int |g(v)|^p d\Pb(v)$, we have
\begin{align*}
	& \E\|\widetilde\theta_{\text{ls}} - \theta_0\|^p \\
	& \geq \int \E\left| \frac{1}{n}\sum_{i=1}^n W_i(v; V^n)\{f(Z_i) - \theta(V_i)\}\right|^p d\Pb(v) \\
	& \geq \int \E\left(\E\left[\left| \frac{1}{n}\sum_{i=1}^n W_i(v; V^n)\{f(Z_i) - \theta(V_i)\}\right|^p \mid V^n\right] \right) d\Pb(v) \\
	& \geq 2^{-p}\int \E\left\{\E\left(\left[ \frac{1}{n^2}\sum_{i=1}^n W^2_i(v; V^n)\{f(Z_i) - \theta(V_i)\}^2\right]^{p/2}\mid V^n\right) \right\} d\Pb(v) \\
	& \geq 2^{-p}  \int \E\left\{\left(\E\left[\frac{1}{n^2}\sum_{i=1}^nW^2_i(v; V^n)\{f(Z_i) - \theta(V_i)\}^2\mid V^n\right] \right)^{p/2} \right\} d\Pb(v) \\
	&= 2^{-p}  \int \E\left[\left\{\frac{1}{n^2}\sum_{i=1}^nW^2_i(v; V^n)\sigma^2(V_i)\right\}^{p/2} \right]d\Pb(v) \\
	& \geq 2^{-p} \inf_v\sigma^p(v) \int \E\left[\left\{\frac{1}{n^2}\sum_{i=1}^nW^2_i(v; V^n)\right\}^{p/2}\right] d\Pb(v) 
\end{align*}
where the second inequality follows from the inequalities on line 2.16 in \cite{gine2000exponential}, while the third one by Jensen's inequality and convexity of $x \mapsto |x|^{p/2}$ for $p \geq 2$. 

Next, letting $\|g\|_\infty = \sup_{v} |g(v)|$:
\begin{align*}
	& \E\left(\|\widehat\theta_{\text{ls}} - \widetilde\theta_{\text{ls}} - \widehat{b}\|_p^p \mid D^n\right) \\
	& = \int \E\left( \left|\frac{1}{n}\sum_{i=1}^n W_i(v; V^n) \left\{\widehat{f}(Z_i) - f(Z_i) - \widehat{r}(V_i)\right\}\right|^p \mid D^n\right) d\Pb(u) \\
	& \leq 2^p(p-1)^{p/2}\|\widehat{f} - f\|^p_\infty \int \E\left[\left\{ \frac{1}{n^2}\sum_{i=1}^n W^2_i(v; V^n)\right\}^{p/2}\right] d\Pb(v)
\end{align*}
where the inequality follows again by the inequalities on line 2.16 in \cite{gine2000exponential}. Thus, we conclude that
\begin{align*}
	\|\widehat\theta_{\text{ls}} - \widetilde\theta_{\text{ls}} - \widehat{b}\|^p_p = O_\Pb\left( \frac{\|\widehat{f} - f\|^p_\infty (\E\|\widetilde\theta_{\text{ls}} - \theta_0\|^p)}{\inf_u \sigma^p(v)} \right)
\end{align*}
Finally, we have
\begin{align*}
	\|\widehat\theta_{\text{ls}} - \theta_0\|_p^p & \leq \left( \|\widehat\theta_{\text{ls}} - \widetilde\theta_{\text{ls}} - \widehat{b}\|_p + \|\widetilde\theta_{\text{ls}} -  \theta_0\|_p + \| \widehat{b}\|_p \right)^p \\
	& \lesssim \|\widehat\theta_{\text{ls}} - \widetilde\theta_{\text{ls}} - \widehat{b}\|^p_p + \|\widetilde\theta_{\text{ls}} -  \theta_0\|^p_p + \| \widehat{b}\|^p_p,
\end{align*}
which yields the result.
\end{proof}
\section{Proof of Theorem 3.1}
The proof of this theorem essentially follows from that of Theorem 8.1 in \cite{robins2017higher}, with the main difference that our estimator has $K_{ht}(a)$ in place of $\one(A = t)$ so that our analysis will need to keep track of terms of order $O(h^{\alpha \land \beta})$.

To simplify the notation, we let $v(a, x) = \mu(a, x) - \muhat(a, x), h(a, x) = 1 / \pi(a \mid x)$, $q(a, x) = \hhat(a, x) - h(a, x)$, $v(x) = v(t, x)$, $q(x) = q(t, x)$, and $g(x) = \int K_{ht}(a) p(a, x)da$. Also we define $\Norm{f}^2_g = \int f^2(x) g(x) dx$.

Before computing bias and variance of our estimator, we state some useful facts about orthogonal projections. More general versions of these statements can be found in the supplementary material to \cite{robins2017higher}. First, recall the definition of the orthogonal projection and its kernel in our context. For $g(x) = \int K_{ht}(a) p(a, x) da$, $\ghat(x) = \int K_{ht}(a) \phat(a, x) da$ and a function $f$:
\begin{align*}
& \Pi (f)(x_i) = \int \Pi_{i, j} f(x_j) g(x_j) dx_j = b(x_i)^T \Omega^{-1} \int b(x_j) f(x_j) g(x_j) dx_j \\
& \widehat\Pi (f)(x_i) = \int \widehat\Pi_{i, j} f(x_j) \ghat(x_j) dx_j = b(x_i)^T \widehat\Omega^{-1} \int b(x_j) f(x_j) \ghat(x_j) dx_j \\
& \Omega = \int b(u)b(u)^T g(u) du \text{ and } \widehat\Omega = \int b(u)b(u)^T \ghat(u) du.
\end{align*}
\begin{itemize}
\item Fact 1. Orthogonal projections do not increase length: for any function $f$ and projection in $L_2(\mu)$,
\begin{align*}
	& \Norm{\Pi (f)}^2 = \int \Pi(f)(x) \Pi(f)(x) d\mu = \int \Pi(f)(x) f(x) d\mu \leq \Norm{\Pi (f)} \Norm{f} \\
	& \implies \Norm{\Pi (f)} \leq \Norm{f}
\end{align*}
by Cauchy-Schwarz, where $\Norm{f}^2 = \int f^2 d\mu$ and 
\begin{align*}
	\Pi(f)(x) = b(x)^T \left\{\int b(u)b(u)^T d\mu\right\}^{-1} \int b(u) f(u) d\mu.
\end{align*}
\item Fact 2. Let $w$ denote some positive and bounded weight function and $\Pi_w$ and $\Pi$ projections in $L_2(\mu)$ onto some fixed $k$-dimensional space $L$ spanned by $b_1(x), \ldots, b_k(x)$, with weights $w$ and $1$ respectively. Then, for any $l \in L$, we have $\Pi_w(l) = \Pi(l)$:
\begin{align*}
	l(x) & = b(x)^T \beta \\
	& = \Pi_w(l)(x) \\
	& = b(x)^T \left\{\int b(u) b(u)^T w(u) d\mu\right\}^{-1} \int b(u) b(u)^T \beta w(u) d\mu \\
	& = b(x)^T \left\{\int b(u) b(u)^T d\mu \right\}^{-1} \int b(u) b(u)^T \beta d\mu \\
	& = \Pi (l) (x) 
\end{align*}
where $\beta \in \R^k$ is some vector of coefficients. 
\item Fact 3. Useful identities:
\begin{align*}
	& \int \Pi(x_i, x_j) \Pi(x_j, x_k) g(x_j) dx_j = \\
	& = b(x_i)^T \Omega^{-1}\int  b(x_j) b(x_j)^T g(x_j) dx_j\Omega^{-1} b(x_k) = \Pi(x_i, x_k), \\
	& \int \widehat\Pi(x_i, x_j) \widehat\Pi(x_j, x_k) \ghat(x_j) dx_j = \\
	& = b(x_i)^T \widehat\Omega^{-1} \int b(x_j) b(x_j)^T  \ghat(x_j) dx_j \widehat\Omega^{-1} b(x_k)  = \widehat\Pi(x_i, x_k), \\
	& \int \Pi(x_i, x_j) \widehat\Pi(x_j, x_k) g(x_j) dx_j \\
	& = b(x_i)^T \Omega^{-1} \int b(x_j) b(x_j)^T  g(x_j) dx_j \widehat\Omega^{-1} b(x_k)  = \widehat\Pi(x_i, x_k).
\end{align*}
\end{itemize}
\subsection{Bias}
We will divide the proof of the bias bound in several steps:
\begin{enumerate}
\item Prove that, for some functions $r_1$ and $r_2$ (defined in the proof) and
$$T= - \int r_1(x_1) \Pi(x_1, x_2) r_2(x_2) g(x_1) g(x_2) dx_1 dx_2$$
the following holds
\begin{align*}
	\left| \int \widehat\varphi_1(z) d\Pb(z) - \theta_0(t) + T \right| \lesssim \Norm{(I - \Pi) (v)}_g\Norm{(I - \Pi)(q)}_g + h^{\alpha \land \beta}
\end{align*}
\item Prove that
\begin{align*}
	& \sum_{j = 2}^m \int \widehat\varphi_j(z_1, \ldots, z_j) d\Pb(z_1, \ldots, z_j) - T \\
	& = (-1)^{m-1}\int r_1(x_1) (\widehat\Pi_{1, 2} - \Pi_{1,2}) \cdots (\widehat\Pi_{m-1, m} - \Pi_{m-1, m}) r_2(x_m) \\
	& \hphantom{= (-1)^{m-1}\int r_1(x_1) (\widehat\Pi_{1, 2} - \Pi_{1,2}) } \times g(x_1)\cdots g(x_m)d x_1\cdots d x_m \\
	& \equiv T_2
\end{align*}
\item Prove that
\begin{align*}
	\left| T_2 \right| \lesssim \Norm{r_1}_g\Norm{r_2}_g\Norm{\shat - 1}^{m-1}_\infty \lesssim \left(\Norm{v}_g\Norm{q}_g + h^{\alpha \land \beta}\right)\Norm{\shat - 1}^{m-1}_\infty 
\end{align*}
where $\shat =  g / \ghat$.
\end{enumerate}
\subsubsection{Step 1}
Let us define
\begin{align*}
& \Delta_1(x) \equiv \int K_{ht}(a) \{v(a, x) - v(t, x)\}\pi(a \mid x) da \\
& \Delta_2(x) \equiv \int K_{ht}(a) \{\mu(a, x) - \mu(t, x)\}\pi(a \mid x) da \\
& \Delta_3(x) \equiv \int K_{ht}(a) \{\hhat(a, x) - \hhat(t, x)\} \pi(a \mid x) da \\
& \Delta_4(x) \equiv \int K_{ht}(a) \pi(a\mid x)da - \pi(t \mid x) \\
& \Delta_5(x) \equiv h(t, x) - \frac{1}{\int K_{ht}(a) \pi(a \mid x) da} = \frac{\Delta_4(x)}{\pi(t \mid x)\{\pi(t \mid x) + \Delta_4(x)\}}
\end{align*}
We have
\begin{align*}
& \int \widehat\varphi_1(z) d\Pb(z) - \theta_0(t) \\
& =  \int K_{ht}(a) \hhat(t, x) \{\mu(a, x) - \muhat(t, x)\} \pi(a \mid x) da p(x) dx - \int v(x) p(x) dx  \\
& =  \int K_{ht}(a) \hhat(t, x)\{\mu(a, x) - \muhat(t, x)\}\pi(a \mid x) da p(x) dx \\
& \hphantom{=} \quad - \int \frac{v(x)}{\int K_{ht}(a) \pi(a \mid x)da} g(x)dx \\
& =  \int v(x) q(x) g(x)dx + \int v(x) \Delta_5(x) g(x) dx + \int \hhat(t, x)\Delta_2(x) p(x) dx
\end{align*}
Let
\begin{align*}
& r_1(x) = v(x) + \frac{\Delta_1(x)}{\int K_{ht}(a) \pi(a \mid x) da}, \\
& r_2(x) = q(x) + \Delta_5(x) + \frac{\Delta_3(x)}{\int K_{ht}(a) \pi(a \mid x) da},
\end{align*}
and notice that
\begin{align*}
& \int K_{ht}(a)\{y - \muhat(a, x)\} d\Pb(z \mid x) = r_1(x) \int K_{ht}(a) \pi(a \mid x) da \\
& \int \{K_{ht}(a)\widehat{h}(a, x) - 1\} d\Pb(z \mid x) = r_2(x) \int K_{ht}(a) \pi(a \mid x) da
\end{align*}
Define
\begin{align*}
T & \equiv - \int r_1(x_1) \Pi(x_1, x_2) r_2(x_2) g(x_1) dx_1 g(x_2) dx_2 \\
& = - \int v(x) \Pi(q)(x) g(x) dx - \int \frac{\Delta_1(x)}{\int K_{ht}(a) \pi(a \mid x) da} \Pi(r_2)(x) g(x) dx\\
& \hphantom{=} \ - \int \Pi(v)(x) \left\{\Delta_5(x) + \frac{\Delta_3(x)}{\int K_{ht}(a) \pi(a \mid x) da}\right\}g(x) dx
\end{align*}
Therefore, 
\begin{align*}
\int \widehat\varphi_1(z) d\Pb(z) + T  - \theta_0(t) = \int v(x) (I - \Pi)(q) (x) g(x) dx + \Delta
\end{align*}
where $\Delta$ groups all the terms involving $\Delta_j$ together:
\begin{align*}
\Delta & = \int v(x) \Delta_5(x) g(x) dx + \int \hhat(t, x)\Delta_2(x) p(x) dx \\
& \hphantom{=} \ - \int \frac{\Delta_1(x)}{\int K_{ht}(a) \pi(a \mid x) da} \Pi(r_2)(x) g(x) dx \\
& \hphantom{=} \ - \int \Pi(v)(x) \left\{\Delta_5(x) + \frac{\Delta_3(x)}{\int K_{ht}(a) \pi(a \mid x) da}\right\}g(x) dx
\end{align*}
The term $\Delta$ is controlled under the smoothness assumptions of the theorem, while 
\begin{align*}
\left| \int v(x) (I - \Pi)(q) (x) g(x) dx \right| \leq \Norm{(I- \Pi)(v)}_g\Norm{(I- \Pi)(q)}_g
\end{align*}
by Cauchy-Schwarz. In particular, we have assumed that $a \mapsto \mu(a, x)$, $a \mapsto \muhat(a, x)$ are H\"{o}lder-$\alpha$ and $a \mapsto h(a, x)$, $a \mapsto \hhat(a, x)$ (or, equivalently, $a \mapsto \pi(a \mid x)$ and $a \mapsto \pihat(a \mid x)$) are H\"{o}lder-$\beta$.
Thus, we have
\begin{align*}
& v(a, x) = \\
& = \sum_{j = 0}^{\lfloor \alpha \rfloor} v^{(j)}(t, x) \frac{(a - t)^j}{j!} + \{v^{(\lfloor \alpha \rfloor)} (t + \tau_1(a - t), x) - v^{(\lfloor \alpha \rfloor)} (t, x) \}\frac{(a - t)^{\lfloor \alpha \rfloor}}{\lfloor \alpha \rfloor!}, \\
& \mu(a, x) \\
& = \sum_{j = 0}^{\lfloor \alpha \rfloor} \mu^{(j)}(t, x) \frac{(a - t)^j}{j!} + \{\mu^{(\lfloor \alpha \rfloor)} (t + \tau_2(a - t), x) - \mu^{(\lfloor \alpha \rfloor)} (t, x) \}\frac{(a - t)^{\lfloor \alpha \rfloor}}{\lfloor \alpha \rfloor!}, \\
& \pi(a \mid x) = \\
& = \sum_{j = 0}^{\lfloor \beta \rfloor} \pi^{(j)}(t \mid x) \frac{(a - t)^j}{j!} + \{\pi^{(\lfloor \beta \rfloor)} (t + \tau_3(a - t) \mid x) - \pi^{(\lfloor \beta \rfloor)} (t \mid x) \}\frac{(a - t)^{\lfloor \beta \rfloor}}{\lfloor \beta \rfloor!}, \\
& h(a, x) = \\
& = \sum_{j = 0}^{\lfloor \beta \rfloor} h^{(j)}(t,  x) \frac{(a - t)^j}{j!} + \{h^{(\lfloor \beta \rfloor)} (t + \tau_4(a - t), x) - h^{(\lfloor \beta \rfloor)} (t, x) \}\frac{(a - t)^{\lfloor \beta \rfloor}}{\lfloor \beta \rfloor!}
\end{align*}
for some $\tau_1, \tau_2, \tau_3, \tau_4 \in [0, 1]$. Then, for example, we have the following
\begin{align*}
& \Delta_1(x) \equiv \int K_{ht}(a) \{v(a, x) - v(t, x)\}\pi(a \mid x) da \\
& = \sum_{i = 1}^{\lfloor \alpha \rfloor}\sum_{j = 0}^{\lfloor \beta \rfloor} h^{i + j} \frac{v^{(i)}(t, x)}{i!} \frac{\pi^{(j)}(t \mid x)}{j!} \int_0^1  u^{i+j} K(u)du \\
& \hphantom{=} + \sum_{j = 1}^{\lfloor \alpha \rfloor} h^{ \lfloor \beta \rfloor + j} \frac{v^{(j)}(t, x)}{\lfloor \beta \rfloor! j!} \int_0^1 K(u) u^{\lfloor \beta \rfloor + j} \{\pi^{(\lfloor \beta \rfloor)} (t + \tau_3uh \mid x) - \pi^{(\lfloor \beta \rfloor)} (t \mid x) \}du \\
& \hphantom{=} + \sum_{j = 0}^{\lfloor \beta \rfloor} h^{\lfloor \alpha \rfloor + j}\frac{\pi^{(j)}(t \mid x)}{\lfloor \alpha \rfloor! j!} \int_0^1 K(u) u^{\lfloor \alpha \rfloor + j} \{v^{(\lfloor \alpha \rfloor)} (t + \tau_1uh, x) - v^{(\lfloor \alpha \rfloor)}(t, x)\} du \\
& \hphantom{=} + \frac{h^{\lfloor \alpha \rfloor + \lfloor \beta \rfloor }}{\lfloor \alpha\rfloor ! \lfloor \beta \rfloor!} \int_0^1 K(u) u^{\lfloor \alpha \rfloor + \lfloor \beta \rfloor} \{v^{(\lfloor \alpha \rfloor)} (t + \tau_1uh, x) - v ^{(\lfloor \alpha \rfloor)}(t, x)\} \\
& \hphantom{= \frac{h^{\lfloor \alpha \rfloor + \lfloor \beta \rfloor }}{\lfloor \alpha\rfloor ! \lfloor \beta \rfloor!} \int_0^1 K(u) u^{\lfloor \alpha \rfloor + \lfloor \beta \rfloor}} \quad \times \{ \pi^{(\lfloor \beta \rfloor )} (t + \tau_3uh \mid x) - \pi^{(\lfloor \beta \rfloor )} (t\mid x)\} du
\end{align*}
so that $\Norm{\Delta_1}_\infty \lesssim h^{ \beta + 1} + h^\alpha$. Similarly, $\Norm{\Delta_2}_\infty \lesssim h^{\beta + 1} + h^{\alpha}$ and
\begin{align*}
& \Delta_3(x) \equiv \int K_{ht}(a) \{\widehat{h}(a, x) - \widehat{h}(t, x)\}\pi(a \mid x) da \\
& = \sum_{i = 1}^{\lfloor \beta \rfloor }\sum_{j = 0}^{\lfloor \beta \rfloor} h^{i + j} \frac{\widehat{h}^{(i)}(t, x)}{i!} \frac{\pi^{(j)}(t \mid x)}{j!} \int_0^1  u^{i+j} K(u)du \\
& \hphantom{=} + \sum_{j = 1}^{\lfloor \beta \rfloor } h^{\lfloor \beta \rfloor + j} \frac{\widehat{h}^{(j)}(t, x)}{\lfloor \beta \rfloor ! j!} \int_0^1  K(u)  u^{\lfloor \beta \rfloor + j}\{\pi^{(\lfloor \beta \rfloor)} (t + \tau_3uh \mid x) - \pi^{(\lfloor \beta \rfloor)} (t \mid x)\} du \\
& \hphantom{=} + \sum_{j = 0}^{\lfloor \beta \rfloor} h^{\lfloor \beta \rfloor + j}\frac{\pi^{(j)}(t \mid x)}{\lfloor \beta \rfloor! j!} \int_0^1  K(u) \widehat{h}^{(\lfloor \beta \rfloor)} (t + \tau_4uh, x) - \widehat{h}^{(\lfloor \beta \rfloor)}(t, x)\} u^{\beta + j} du \\
& \hphantom{=} + \frac{h^{2\lfloor \beta \rfloor}}{\lfloor \beta \rfloor ! \lfloor \beta \rfloor !} \int_0^1 K(u) u^{2\lfloor \beta \rfloor }\{\widehat{h}^{(\lfloor \beta \rfloor } (t + \tau_4uh, x) - \widehat{h}^{(\lfloor \beta \rfloor } (t, x)\}\\
& \hphantom{= \frac{h^{2\lfloor \beta \rfloor}}{\lfloor \beta \rfloor ! \lfloor \beta \rfloor !} \int_0^1 K(u) u^{2\lfloor \beta \rfloor }} \quad \times \{ \pi^{(\lfloor \beta \rfloor)} (t + \tau_3uh \mid x) - \pi^{(\lfloor \beta \rfloor)} (t \mid x) \} du
\end{align*}
Therefore, $\Norm{\Delta_3}_\infty \lesssim h^{\beta}$ and, similarly, $\Norm{\Delta_4}_\infty \lesssim h^{\beta}$ and $\Norm{\Delta_5}_\infty \lesssim h^{\beta}$. In this light, it holds that $\Norm{\Delta_j}_\infty \lesssim h^{\alpha \land \beta}$, for $j = 1,2,3,4,5$. This concludes our proof that
\begin{align*}
\left| \int \widehat\varphi_1(z) d\Pb(z) + T  - \theta_0(t) \right| \lesssim \Norm{(I-\Pi)(v)}_g\Norm{(I-\Pi)(q)}_g + h^{\alpha \land \beta}.
\end{align*}
\subsubsection{Step 2}
We will show that
\begin{align*}
T_2 & = \sum_{j = 2}^m \int \widehat\varphi_j(z_1, \ldots, z_j) d\Pb(z_1, \ldots, z_j) - T \\
& = \sum_{j = 2}^m \int \widehat\varphi_j(z_1, \ldots, z_j) d\Pb(z_1, \ldots, z_j) \\
& \hphantom{=} \quad + \int r_1(x) \Pi(x_1, x_2)r_2(x_2) g(x_1) g(x_2) dx_1 dx_2 \\
& = (-1)^{m-1}\int r_1(x_1) (\widehat\Pi_{1, 2} - \Pi_{1,2}) \cdots (\widehat\Pi_{m-1, m} - \Pi_{m-1, m}) r_2(x_m) \\
& \hphantom{=(-1)^{m-1}\int r_1(x_1) (\widehat\Pi_{1, 2} - \Pi_{1,2})} \times g(x_1)\cdots g(x_m) dx_1\cdots d x_m
\end{align*}
The result is true for $m = 2$, so we proceed by induction. Relative to the $m^{\text{th}}$ term, the term $m+1$ receives the contribution from 
\begin{align*}
& \int \widehat\varphi_{m + 1}(z_1, \ldots, z_{m + 1}) d\Pb(z_1, \ldots, z_m) \\
& = (-1)^m\sum_{i = 0}^{m - 1}  {{m-1} \choose i} (-1)^{i}\int r_1(x_1) \widehat{\Pi}_{1, 2} \cdots \widehat{\Pi}_{m - i, m - i + 1} r_2(x_{m - i + 1}) \\
& \hphantom{= = (-1)^m\sum_{i = 0}^{m - 1}  {{m-1} \choose i} (-1)^{i}\int r_1(x_1) } \times g(x_1) \cdots g(x_{m-i + 1}) dx_1 \cdots d x_{m-i + 1} \\
& \equiv (-1)^m T_3
\end{align*}
Thus to prove the claim we need to show that
\begin{align*}
T_3 & = \int r_1(x_1) (\widehat{\Pi}_{1,2} - \Pi_{1,2}) \cdots (\widehat{\Pi}_{m-1, m} - \Pi_{m-1, m}) r_2(x_m) \\
& \hphantom{=\int r_1(x_1) (\widehat{\Pi}_{1,2} - \Pi_{1,2}) \cdots } \quad \times g(x_1) \cdots g(x_m) dx_1 \cdots dx_m \\
& \hphantom{=} + \int r_1(x_1) (\widehat{\Pi}_{1,2} - \Pi_{1,2}) \cdots (\widehat{\Pi}_{m, m + 1} - \Pi_{m, m+1}) r_2(x_{m+1}) \\
& \hphantom{=+ \int r_1(x_1) (\widehat{\Pi}_{1,2} - \Pi_{1,2}) \cdots } \quad\times g(x_1) \cdots g(x_{m + 1}) d x_1 \cdots d x_{m + 1} \\
& \equiv  T_4 + T_5
\end{align*}
Notice that $T_4$ can be written as a sum of terms of the form
\begin{align*}
B_l = (-1)^{m - 1 - l} \int r_1(x_1) B_{1, 2} \cdots B_{m - 1, m} r_2(x_{m}) g(x_1) \cdots g(x_m)  dx_1 \cdots dx_m
\end{align*}
where $B_{i, j}$ equals either $\widehat{\Pi}_{i, j}$ or $\Pi_{i, j}$ and $l$ denotes the number of terms in the product for which $B_{i,j} = \widehat{\Pi}_{i, j}$. Similarly, $T_5$ is a sum of terms of the form
\begin{align*}
C_l = (-1)^{m - l} \int r_1(x_1) B_{1, 2} \cdots B_{m, m + 1} r_2(x_{m + 1}) g(x_1) \cdots g(x_{m + 1}) dx_1 \cdots d x_{m + 1}
\end{align*}
Fact 3 is the reason why we only need to keep track of the number of $\widehat\Pi_{i, j}$ terms and not specifically which $B_{ij}$ equals $\Pi_{i,j}$ or $\widehat\Pi_{i, j}$. In fact, for $B_{ij} = \Pi_{i,j}$ or $\widehat\Pi_{i,j}$, we have 
\begin{align*}
& \int \Pi(x_{j - 1}, x_{j}) B(x_j, x_{j + 1}) g(x_j) dx_j\\
& = B_{j - 1, j + 1} = \int B(x_{j - 1}, x_{j}) \Pi(x_j, x_{j + 1}) g(x_j) dx_j.
\end{align*}
In this light, we can simplify as
\begin{align*}
& B_l = (-1)^{m - 1 - l} \int r_1(x_1) \widehat{\Pi}_{1, 2} \cdots \widehat{\Pi}_{l, l + 1} r_2(x_{l + 1}) dg(x_1)\cdots dg(x_{l + 1}) \text{ for } l \geq 1 \\
& B_0 = (-1)^{m-1}\int r_1(x_1) \Pi_{1, 2} r_2(x_2) dg(x_1) d(x_2) \\
& C_l = (-1)^{m - l} \int r_1(x_1) \widehat{\Pi}_{1, 2} \cdots \widehat{\Pi}_{l, l + 1} r_2(x_{l + 1}) dg(x_1)\cdots dg(x_{l + 1})\text{ for } l \geq 1 \\
& C_0 = (-1)^{m}\int r_1(x_1) \Pi_{1, 2} r_2(x_2) dg(x_1) d(x_2)  = -B_0
\end{align*}
For $l \in \{1, \ldots, m - 1\}$, we have $C_l = - B_l$. Thus, we have reached
\begin{align*}
& T_4 = - C_0 - \sum_{l = 1}^{m - 1} {m - 1 \choose l} C_l \quad \text{ and } \quad T_5 = C_0 + \sum_{l = 1}^{m - 1} {m \choose l} C_l + C_m
\end{align*}
and this implies
\begin{align*}
T_4 + T_5 & = \sum_{l = 1}^{m - 1}  \left\{{m \choose l} - {m - 1\choose l} \right\} C_l  + C_m  = \sum_{l = 1}^{m - 1} {m - 1 \choose l - 1} C_l + C_m = T_3
\end{align*}
as desired. We have thus shown that
\begin{align*}
T_2 = & (-1)^{m-1}\int r_1(x_1) (\widehat\Pi_{1, 2} - \Pi_{1,2}) \cdots (\widehat\Pi_{m-1, m} - \Pi_{m-1, m}) r_2(x_m) \\
& \hphantom{(-1)^{m-1}\int r_1(x_1) (\widehat\Pi_{1, 2} - \Pi_{1,2}) \cdots } \quad\times g(x_1)\cdots g(x_m) d x_1\cdots dx_m
\end{align*}
\subsubsection{Step 3}
We need to show that $|T_2| \ \leq \Norm{r_1}_g \Norm{r_2}_g \Norm{\widehat{s} - 1}_\infty^{m-1}$. This statement is essentially a direct consequence of Lemma 13.7 in the Supplementary material to \cite{robins2017higher}. For the sake of completeness, we give a proof here that is less general (and more verbose) than that in \cite{robins2017higher}, although it uses the same arguments. 

Define $\shat = g(x) / \ghat(x) $ and let $M_{\widehat{s}}$ denoting multiplication by $\widehat{s}$. We have
\begin{align*}
\int (\widehat{\Pi}_{m-1, m} - \Pi_{m-1, m}) r_2(x_m) g(x_m) dx_m = \left(\widehat\Pi M_{\widehat{s}} - \Pi\right) (r_2) (x_{m-1})
\end{align*}
Continuing with this calculation, we get
\begin{align*}
T_2 = (-1)^{m - 1} \int r_1(x) \left(\widehat\Pi M_{\widehat{s}} - \Pi\right)^{m-1} (r_2) (x) g(x) dx 
\end{align*}
Let $\Norm{f}^2_{2, \ghat} = \int f^2(u) \ghat(u) du$ and bound $|T_2|$ as
\begin{align*}
|T_2| \ \lesssim \ \|r_1\|_{g} \left\| \left(\widehat\Pi M_{\widehat{s}} - \Pi\right)^{m-1} (r_2) \right\|_{2, \widehat{g}}
\end{align*}
Define
\begin{align*}
l(x) = \left(\widehat\Pi M_{\widehat{s}} - \Pi\right)^{m-2} (r_2) (x) \equiv b(x)^T \beta
\end{align*}
We can write $l(x)$ as a linear combination of the truncated basis because both $\widehat\Pi$ and $\Pi$ project a function onto the same finite dimensional subspace. Notice that we can view $\Pi$ as a weighted projection in $L_2(\ghat)$ with weight $\shat$, i.e.
\begin{align*}
\Pi(f)(x) & = b(x)^T \Omega^{-1} \int b(u) f(u) g(u) du \\
& = b(x)^T \left\{\int b(u) b(u)
\shat(u) \ghat(u) du \right\}^{-1} \int b(u) f(u) \shat(u) \ghat(u) du
\end{align*}
Therefore, by Fact 2, we have
\begin{align*}
\left(\widehat\Pi M_{\widehat{s}} - \Pi\right)^{m-1} (r_2) (x) & = \left(\widehat\Pi M_{\widehat{s}} - \Pi\right) (l) (x) \\
& = \int \widehat\Pi(x, u) \{\shat(u) - 1\} l(u) \ghat(u) du \\
& = \widehat\Pi\left((\shat - 1) l\right) (x).
\end{align*}
By Fact 1, we have
\begin{align*}
\left\|\left(\widehat\Pi M_{\widehat{s}} - \Pi\right)^{m-1} (r_2)\right\|_{2, \ghat} & = \left\|\widehat\Pi\left((\shat - 1) l\right)\right\|_{2, \ghat} \\
& \leq \Norm{(\shat - 1) l}_{2, \ghat} \leq \Norm{\shat - 1}_\infty \Norm{l}_{2, \ghat}.
\end{align*}
Repeating this argument $m-3$ times applied to $\Norm{l}_{2, \ghat}$, we obtain
\begin{align*}
\left\|\left(\widehat\Pi M_{\widehat{s}} - \Pi\right)^{m-1} (r_2)\right\|_{2, \ghat} \leq \Norm{\shat - 1}^{m-2}_\infty \left\| \left(\widehat\Pi M_{\widehat{s}} - \Pi\right) (r_2) \right\|_{2, \ghat}
\end{align*}
Furthermore,
\begin{align*}
\left\| \left(\widehat\Pi M_{\widehat{s}} - \Pi\right) (r_2) \right\|^2_{2, \ghat} & = \int \left(\widehat\Pi M_{\widehat{s}} - \Pi\right) (r_2)(x) \left(\widehat\Pi M_{\widehat{s}} - \Pi\right) (r_2)(x) \ghat(x) dx \\
& = \int \left(\widehat\Pi M_{\widehat{s}} - \Pi\right) (r_2)(x) \Pi \left(r_2\right)(x)\{\shat(x) - 1\} \ghat(x) dx
\end{align*}
The second line follows because $\left(\widehat\Pi M_{\widehat{s}} - \Pi\right) (r_2)$ belongs to the finite dimensional subspace and can be expressed as $b(x)^T\beta$ for some $\beta$. Therefore,
\begin{align*}
& \int \left(\widehat\Pi M_{\widehat{s}} - \Pi\right) (r_2)(x)\widehat\Pi M_{\shat} (r_2)(x) \ghat(x) dx \\
& = \beta^T\int b(x)\widehat\Pi M_{\shat} (r_2)(x) \ghat(x) dx \\
& = \beta^T\int b(x) b(x)^T \widehat\Omega^{-1} \int b(u) \widehat{s}(u)r_2(u)\widehat{g}(u)du \ghat(x) dx \\
& = \beta^T\int b(u)r_2(u) g(u) du \\
& = \beta^T \int b(x) b(x)^T \Omega^{-1} \int b(u) r(u) g(u) du \shat(x) \widehat{g}(x) dx \\
& = \int \beta^Tb(x) \Pi (r_2) (x) \widehat{s}(x) \widehat{g}(x) dx.
\end{align*}
By Cauchy-Schwarz:
\begin{align*}
\left\| \left(\widehat\Pi M_{\widehat{s}} - \Pi\right) (r_2) \right\|^2_{2, \ghat}  \leq \left\|\left(\widehat\Pi M_{\widehat{s}} - \Pi\right) (r_2) \right\|_{2, \ghat} \left\| \Pi \left(r_2\right)(\shat - 1) \right\|_{2, \ghat},
\end{align*}
implying
\begin{align*}
\left\|\left(\widehat\Pi M_{\widehat{s}} - \Pi\right) (r_2) \right\|_{2, \ghat} \lesssim \Norm{\shat - 1}_\infty \Norm{r_2}_g.
\end{align*}
This then yields
\begin{align*}
\left| T_2 \right| \lesssim \Norm{r_1}_g \Norm{r_2}_g \Norm{\shat -1}^{m-1}_\infty.
\end{align*}
The bounds on the terms involving $\Delta_j$ derived in Step 1 finally yield the result:
\begin{align*}
|T_2| \ \lesssim \left(\Norm{v}_g \Norm{q}_g + h^{\alpha \land \beta} \right) \Norm{\shat - 1}_\infty^{m-1} 
\end{align*}
\subsection{Variance}
The proof of the variance bound follows as in \cite{robins2017higher}. The key lemma that we rely on is their Lemma 10.2 (in their corrected version of the manuscript\footnote{\url{https://arxiv.org/pdf/1512.02174}}), which we adjust to keep track of the bandwidth $h$. 

Let $D_p$ denote the operator that makes a function degenerate:
\begin{align*}
D_p f(Z_1, \ldots, Z_j) = \sum_{A \subset \{1, \ldots, j\}} (-1)^{j-|A|} \E\{f(Z_1, \ldots, Z_j) \mid Z_i, i \in A\}
\end{align*}
Because $f = f_1(Z_1)\Pi_{1, 2} K_{ht}(A_2)\Pi_{2, 3} K_{ht}(A_3) \cdots \Pi_{j-1, j} \cdot f_2(Z_j)$ is already degenerate in $Z_1$ and $Z_j$, we can further write
\begin{align*}
& D_p f(Z_1, \ldots, Z_j) \\
& = \sum_{l = 0}^{j-2} (-1)^{j-l}\sum_{1 < i_1 < \cdots < i_l < j} f_1(Z_1) \Pi_{1, i_1} K_{ht}(A_{i_1})\Pi_{i_1, i_2} K_{ht}(A_{i_2}) \cdots \Pi_{i_l, j} \cdot f_2(Z_j)
\end{align*}
For example, $\varphi_4$ can be written as
\begin{align*}
\varphi_4(Z_1, Z_2, Z_3, Z_4) = f_1(Z_1) \cdot D_p [\Pi_{1, 2} K_{ht}(A_2)\Pi_{2, 3} K_{ht}(A_3)\Pi_{3, 4}] \cdot f_2(Z_4).
\end{align*}
The estimator of $\widehat\varphi_j$ is
\begin{align*}
\widehat\varphi_j(Z_1, \ldots, Z_j) = \widehat{f}_1(Z_1) \cdot D_{\widehat{p}} [\widehat\Pi_{1, 2} K_{ht}(A_2) \cdots K_{ht}(A_{j-1})\widehat\Pi_{{j-1}, j}] \cdot \widehat{f}_2(Z_j),
\end{align*}
so that $\widehat\varphi_j$ is degenerate relative to $\widehat{p}$. In bounding the variance, expectations are taken with respect to $p$, so that one cannot exploit directly the degeneracy of $\widehat\varphi_j$. 

The bound on the variance proceeds in three steps:

\begin{enumerate}
\item By Lemmas 14.1 and 14.2 in \cite{robins2017higher}, we have
\begin{align*}
	& \var\left[\Pn \widehat{f}_0 + \sum_{j = 2}^m \Un \{\widehat\varphi_j(Z_1, \ldots, Z_j) \mid D^n \right] \\
	& \leq 2 \var(\Pn \widehat{f}_0 \mid D^n) \\
	& \hphantom{\leq} + \sum_{j = 2}^m 2^j \sum_{l = 1}^j \frac{2^l j^{2l}}{n^l} \E\left(\left[\E\left\{S_j\widehat\varphi_j(Z_1, \ldots, Z_j) \mid Z_1, \ldots, Z_l, D^n \right\}\right]^2 \mid D^n \right),
\end{align*}
where $S_j$ symmetrizes the function $\widehat\varphi_j$. Notice that we rely on the fact that $\Un \varphi_j = \Un S_j \varphi_j$, since the empirical U-statistic measure takes already care of the symmetrization. However, the formula for the bound on the variance in their Lemma 14.1 requires the kernel of the U-statistic to be symmetric, which then requires a bound on $\E\{\widehat\varphi_j(Z_1, \ldots, Z_j) \mid Z_B\}$ for an arbitrary set of indices $B$ of size $l$ instead of the first $l$ indices $1, \ldots, l$, as outlined in the following step. Further, because the kernel and the observations are bounded, a change of variables argument yields that
\begin{align*}
	\var(\Pn \widehat{f}_0 \mid D^n) \lesssim (nh)^{-1}.
\end{align*}
\item As described in the proof of Theorems 8.1 and 8.2 in \cite{robins2017higher}, because the observations are i.i.d., we have that
\begin{align*}
	& \E\left( \left[ \E\left\{ S_j\widehat\varphi_j(Z_1, \ldots, Z_j) \mid Z_1, \ldots, Z_l, D^n\right\} \right]^2 \mid D^n \right) \\
	& \leq \max_{\sigma} \E\left( \left[ \E\left\{ \widehat\varphi_j(Z_{\sigma(1)}, \ldots, Z_{\sigma(j)}) \mid Z_1, \ldots, Z_l, D^n\right\} \right]^2 \mid D^n \right) \\
	& = \max_{\substack{B \subset \{1, \ldots, j\} \\  |B| = l}}\E\left( \left[ \E\left\{\widehat\varphi_j(Z_1, \ldots, Z_j) \mid Z_B, D^n\right\} \right]^2 \mid D^n \right)
\end{align*}
\item By Lemma \ref{robins_102_adjusted}, and letting $\epsilon_n = h^{\alpha \land \beta} \ \lor \ \|v\|_{4, g} \ \lor \ \|q\|_{4, g} \ \lor \ \|f\|_\infty$, the variance is further upper bounded by a constant multiple of
\begin{align*}
	\sum_{j = 1}^m c^j \sum_{l = 1}^j j^{2l} \epsilon_n^{2(j - l)} \frac{k^{l-1}}{(nh)^{l}} = \sum_{l = 1}^m \left(\sum_{j = l}^m c^j j^{2l} \epsilon_n^{2(j-l)} \right) \frac{k^{l-1}}{(nh)^l},
\end{align*}
where 
\begin{align*}
	& f = \widehat{p}(t, x) - p(t, x), \quad v(x) = \widehat\mu(t, x) - \mu(t, x) \\
	& \text{and } q(x) = 1/\widehat\pi(t \mid x) - 1/\pi(t \mid x).
\end{align*}
\end{enumerate}
To prove Lemma \ref{robins_102_adjusted} below, we rely on Lemma 10.1 in \cite{robins2017higher}, which we rewrite below adjusted to our setting. Suppose $Z_B$ contains both $Z_1$ and $Z_j$ and let $1 < b_1 < \cdots < j$ such that $B = \{1, \ldots, j\} \setminus \{b_1, \ldots, b_j\}$. Then
\begin{align*}
& \E_p\left\{ \widehat\varphi_j(Z_1, \ldots, Z_j) \mid Z_B, D^n \right\} \\
& = f_1(Z_1) \E_p\left\{ D_{\widehat{p}}\widehat\Pi_{1, 2}K_{ht}(A_2)\widehat\Pi_{2, 3} \cdots K_{ht}(A_{j-1})\widehat\Pi_{j-1, j} \mid Z_B, D^n \right\}f_2(Z_j) \\
& = \int \cdots \int D^{2, b_1 - 1}_{\widehat{p}}\left\{\widehat\Pi_{1, 2}\prod_{i = 2}^{b_1 - 1} K_{ht}(A_i)\widehat\Pi_{i, i+1} \right\}(\widehat{s}_{b_1} - 1) \times \\
& \hphantom{= \int \cdots \int } \times D^{b_1+1, b_2 - 1}_{\widehat{p}}\left\{\widehat\Pi_{b_1, b_1 +1}\prod_{i = b_1+1}^{b_2 - 1} K_{ht}(A_l)\widehat\Pi_{i, i+1} \right\}(\widehat{s}_{b_2} - 1) \times \cdots \times \\
& \hphantom{= \int \cdots \int} \times D^{b_s+1, j - 1}_{\widehat{p}}\left\{\widehat\Pi_{b_s, b_s +1}\prod_{i = b_s+1}^{j - 1} K_{ht}(A_i)\widehat\Pi_{i, i+1} \right\}\times \\
& \hphantom{= \int \cdots \int} \quad\quad \times\widehat{g}(x_{b_1}) \cdots \widehat{g}(x_{b_s}) dx_{b_1} \cdots dx_{b_s}
\end{align*}
where $D_{\widehat{p}}^{k, l}$ is the operator that makes a function degenerate relative to the variables $Z_k, \ldots, Z_l$ with respect to the distribution $\widehat{p}$. For example, for $j = 3$ and $B = \{1, 3\}$, we have
\begin{align*}
& \widehat\varphi_3(Z_1, Z_2, Z_3) = - f_1(Z_1) \widehat\Pi_{1, 2} K_{ht}(A_2)\widehat\Pi_{2, 3} f_3(Z_3) + f_1(Z_1)\widehat\Pi_{1, 3}f_2(Z_3)
\end{align*}
and
\begin{align*}
&  \E\{\widehat\varphi_3(Z_1, Z_2, Z_3) \mid Z_1, Z_3, D^n\} \\
& = -f_1(Z_1) \int \widehat\Pi_{1, 2} \widehat\Pi_{2, 3} \frac{g(x_2)}{\widehat{g}(x_2)} \widehat{g}(x_2) dx_2 + f_1(Z_1) \widehat\Pi_{1, 3} f_2(Z_3) \\
& = -f_1(Z_1) \int \widehat\Pi_{1, 2} (\widehat{s}_2 - 1) \widehat\Pi_{2, 3} \widehat{g}(x_2) dx_2f_3(Z_3),
\end{align*}
where $\widehat{s}_2 = g(x_2) / \widehat{g}(x_2) - 1$. As in the proof of Lemma 10.2 in \cite{robins2017higher}, when computing
\begin{align*}
\E_{\widehat{p}} \left(\left[ \E_{p}\left\{ \widehat\varphi_j(Z_1, \ldots, Z_j) \mid Z_B, D^n\right\} \right]^2\right),
\end{align*}
we will ignore the degeneracy operation as it would not increase second moments. We will also ignore it when $Z_B$ does not include either $Z_1$ or $Z_j$ because the integration with respect to $\widehat{g}(x_1) dx_1$ or $\widehat{g}(x_j) dx_j$ when computing
\begin{align*}
\E_p\left\{ \widehat\varphi_j(Z_1, \ldots, Z_j) \mid Z_B, D^n\right\}
\end{align*}would commute with the operators $D_{\widehat{p}}^{k, l}$ where $k$ and $l$ are never equal to either $1$ or $j$. 
\begin{lemma}\label{robins_102_adjusted}
For any $B \subset \{1, \ldots, j\}$, there exists a constant $M$ such that the following holds:
\begin{align*}
	& \E_p\left( \left[\E_p\left\{ \widehat\varphi_j(Z_1, \ldots, Z_j) \mid Z_B, D^n \right\} \right]^2 \right) \\
	& \leq M^j \left( h^{\alpha \land \beta} \ \lor \ \|v\|_{4, g} \ \lor \ \|q\|_{4, g} \ \lor \ \|f\|_\infty \right)^{2(j - |B|)} \frac{k^{|B| - 1}}{h^{|B|}}.
\end{align*}
\end{lemma}
\begin{proof}
First, notice that, by a change of measure from $p$ to $\widehat{p}$:
\begin{align*}
	\E_p\left( \left[\E_p\left\{ \widehat\varphi_j(Z_1, \ldots, Z_j) \mid Z_B \right\} \right]^2 \right) \leq \left\|\frac{p}{\widehat{p}} \right\|^{|B|}_\infty \E_{\widehat{p}}\left( \left[\E_p\left\{ \widehat\varphi_j(Z_1, \ldots, Z_j) \mid Z_B \right\} \right]^2 \right).
\end{align*}
We will rely on Fact 3 repeatedly using the bound 
\begin{align*}
	\int \widehat\Pi^2_{i, i+1} \widehat{g}(x_{i+1}) dx_{i+1} = \widehat\Pi_{i, i} \lesssim k.
\end{align*} 
In particular, we will rely on
\begin{align*}
	& \int \widehat\Pi^2_{i, i+1} K^2_{ht}(a_{i+1}) \widehat{p}(a_{i + 1}, x_{i+1}) da_{i+1} dx_{i+1} \\
	& = \int \widehat\Pi^2(x_i, x_{i+1}) \frac{\int K_{ht}^2(a_{i+1}) \widehat{p}(a_{i + 1}, x_{i + 1}) da_{i+1}}{\widehat{g}(x_{i + 1})} \widehat{g}(x_{i+1}) dx_{i+1}  \\
	& \lesssim \frac{1}{h} \int \widehat\Pi^2_{i, i+1} \widehat{g}(x_{i+1}) dx_{i + 1} \\
	& \lesssim \frac{k}{h}.
\end{align*}

There are four cases to consider, depending on whether $B$ includes either $1$ or $j$. 

\textbf{Case 1: $B$ includes 1 and $j$}. Let $\{b_1, \ldots, b_s\} = \{1, \ldots, j\} \setminus B$. Lemma 10.1 in \cite{robins2017higher}, applied with their $\overline{A}_j$ and $\widehat{g}$ replaced by our $K_{ht}(A_j)$ and $\widehat{g}$, respectively, and the fact that a projection cannot increase the second moment yield that it is sufficient to bound the expectation of the square of
\begin{align*}
	& f_1(Z_1) \int \cdots \int \widehat\Pi_{1, 2} \prod_{i = 2}^{b_1 - 1} K_{ht}(A_i) \widehat\Pi_{i, i+1} (\widehat{s}_{b_1} - 1)\widehat\Pi_{b_1, b_1 + 1} \\
	& \hphantom{f_1(Z_1) \int \cdots \int} \quad \times \prod_{i = b_1+1}^{b_2 - 1} K_{ht}(A_i) \widehat\Pi_{i, i+1}(\widehat{s}_{b_2} - 1) \times \cdots \\
	& \hphantom{f_1(Z_1) \int \cdots \int} \quad \times \prod_{i = b_{s-1} + 1}^{b_s - 2} K_{ht}(A_i)\widehat\Pi_{i, i+1} \times \widehat\Pi_{b_s-1, b_s}(\widehat{s}_{b_s} - 1) \widehat\Pi_{b_s, b_s + 1} \\
	& \hphantom{f_1(Z_1) \int \cdots \int} \quad \times \widehat{g}(x_{b_1}) \cdots \widehat{g}(x_{b_s}) dx_{b_1} \cdots dx_{b_s} \\
	& \quad \times \prod_{i = b_s + 1}^{j-1} K_{ht}(A_i) \widehat\Pi_{i, i+1} f_2(Z_j)
\end{align*}
We proceed by peeling from right to left, taking the expectation with respect to $Z_j, Z_{j-1}, \ldots, Z_{b_s + 2}$:
\begin{align*}
	&\int \cdots \int \left\{\prod_{i = b_s + 1}^{j-1} K_{ht}(a_i) \widehat\Pi_{i, i+1} f_2(z_j)\right\}^2 \widehat{p}(z_{b_s + 2}) \cdots \widehat{p}(z_{j}) dz_{b_s + 2} \cdots dz_j \\
	& \lesssim \frac{k^{j - b_s - 1}}{h^{j-b_s - 1}} K^2_{ht}(A_{b_s + 1}).
\end{align*}
Next, we consider integration with respect to $x_{b_s}$ in
\begin{align*}
	& f_1(Z_1) \int \cdots \int \widehat\Pi_{1, 2} \prod_{i = 2}^{b_1 - 1} K_{ht}(A_i) \widehat\Pi_{i, i+1} (\widehat{s}_{b_1} - 1)\widehat\Pi_{b_1, b_1 + 1} \\
	& \hphantom{f_1(Z_1) \int } \times  \prod_{i = b_1+1}^{b_2 - 1} K_{ht}(A_i) \widehat\Pi_{i, i+1}(\widehat{s}_{b_2} - 1) \times \cdots \\
	& \hphantom{f_1(Z_1) \int } \times \prod_{i = b_{s-1} + 1}^{b_s - 2} K_{ht}(A_i)\widehat\Pi_{i, i+1} \times K_{ht}(A_{b_s - 1}) \widehat\Pi_{b_s-1, b_s}(\widehat{s}_{b_s} - 1) \widehat\Pi_{b_s, b_s + 1} \\
	& \hphantom{f_1(Z_1) \int } \times \widehat{g}(x_{b_1}) \cdots \widehat{g}(x_{b_s}) dx_{b_1} \cdots dx_{b_s} \\
	& \equiv T(Z_1, \ldots, Z_{b_{s} - 1}) \int \widehat\Pi_{b_s - 1, b_s}(\widehat{s}_{b_s} - 1) \widehat\Pi_{b_s, b_s + 1} \widehat{g}(x_{b_s}) dx_{b_s},
\end{align*}
where 
\begin{align*}
	& T(Z_1, \ldots, Z_{b_{s} - 1})  = \\
	& f_1(Z_1) \int \cdots \int \widehat\Pi_{1, 2} \prod_{i = 2}^{b_1 - 1} K_{ht}(A_i) \widehat\Pi_{i, i+1} (\widehat{s}_{b_1} - 1)\widehat\Pi_{b_1, b_1 + 1} \\
	& \hphantom{f_1(Z_1) \int \cdots \int} \times \prod_{i = b_1+1}^{b_2 - 1} K_{ht}(A_i) \widehat\Pi_{i, i+1}(\widehat{s}_{b_2} - 1) \times \cdots \\
	& \hphantom{f_1(Z_1) \int \cdots \int} \times \widehat{g}(x_{b_1}) \cdots \widehat{g}(x_{b_{s-1}}) dx_{b_1} \cdots dx_{b_{s-1}} \\
	& \hphantom{f_1(Z_1) \int \cdots \int} \times \prod_{i = b_{s-1} + 1}^{b_s - 2} K_{ht}(A_i)\widehat\Pi_{i, i+1} \times K_{ht}(A_{b_s - 1}) 
\end{align*}
Notice that 
\begin{align*}
	& \int T^2(Z_1, \ldots, Z_{b_{s} - 1})\left\{ \int \widehat\Pi_{b_s - 1, b_s}(\widehat{s}_{b_s} - 1) \widehat\Pi_{b_s, b_s + 1} \widehat{g}(x_{b_s}) dx_{b_s}\right\}^2 \\
	& \hphantom{\int T^2(Z_1, \ldots, Z_{b_{s} - 1})\left\{ \right\}} \times \frac{\int K^2_{ht}(a_{b_s + 1}) \widehat{p}(a_{b_s + 1},  x_{b_s + 1}) da_{b_s + 1}}{\widehat{g}(x_{b_s + 1})} \widehat{g}(x_{b_s + 1}) dx_{b_s + 1} \\
	& \lesssim h^{-1}\int T^2(Z_1, \ldots, Z_{b_{s} - 1})\left\{ \int \widehat\Pi_{b_s - 1, b_s}(\widehat{s}_{b_s} - 1) \widehat\Pi_{b_s, b_s + 1} \widehat{g}(x_{b_s}) dx_{b_s}\right\}^2 \\
	& \hphantom{\lesssim h^{-1}\int } \quad \times \widehat{g}(x_{b_s + 1}) dx_{b_s + 1} \\
	& \leq  h^{-1} T^2(Z_1, \ldots, Z_{b_{s} - 1}) \int \widehat\Pi^2_{b_s - 1, b_s} (\widehat{s}_{b_s} - 1)^2 \widehat{g}(x_{b_s}) dx_{b_s} \\
	& \leq T^2(Z_1, \ldots, Z_{b_{s} - 1}) \frac{k}{h} \|\widehat{s} - 1\|_\infty^2
\end{align*}
Thus, we have removed the variable $x_{b_s}$ and reached the bound:
\begin{align*}
	& \int \cdots \int \left\{ f_1(Z_1) \int \cdots \int \widehat\Pi_{1, 2} \prod_{i = 2}^{b_1 - 1} K_{ht}(A_i) \widehat\Pi_{i, i+1} (\widehat{s}_{b_1} - 1)\widehat\Pi_{b_1, b_1 + 1} \right. \\
	& \left. \hphantom{f_1(Z_1) \int } \times \prod_{i = b_1+1}^{b_2 - 1} K_{ht}(A_i) \widehat\Pi_{i, i+1}(\widehat{s}_{b_2} - 1) \times \cdots \right. \\
	& \left. \hphantom{f_1(Z_1) \int } \times \prod_{i = b_{s-1} + 1}^{b_s - 2} K_{ht}(A_i)\widehat\Pi_{i, i+1} \times K_{ht}(A_{b_s - 1}) \widehat\Pi_{b_s-1, b_s}(\widehat{s}_{b_s} - 1) \widehat\Pi_{b_s, b_s + 1} \right. \\
	& \left. \hphantom{f_1(Z_1) \int } \vphantom{\prod_{i = b_{s-1} + 1}^{b_s - 2}} \times \widehat{g}(x_{b_1}) \cdots \widehat{g}(x_{b_s}) dx_{b_1} \cdots dx_{b_s} \right\}^2 \\
	& \hphantom{\int \cdots \int}\times \widehat{p}(z_{b_s+1}) \cdots \widehat{p}(z_j) dz_{b_s + 1}\cdots  dz_{j} \\
	& \lesssim \frac{k^{j -b_s}}{h^{j - b_s}} \|\widehat{s} - 1\|_\infty^2  T^2(Z_1, \ldots, Z_{b_s - 1})
\end{align*}
Next, we peel again from right to left to remove the term
\begin{align*}
	\prod_{i = b_{s-1} + 1}^{b_s - 2} K^2_{ht}(A_i)\widehat\Pi^2_{i, i+1} \times K^2_{ht}(A_{b_s - 1}) 
\end{align*}
from $T^2(Z_1, \ldots, Z_{b_s} - 1)$. When we integrate the display above with respect to variables $Z_{b_s - 1}, \ldots, Z_{b_{s-1} + 2}$, we gain a factor $$k^{b_s - b_{s-1} - 2} h^{-b_s + b_{s-1} - 2} K^2_{ht}(A_{b_{s-1} + 1}).$$ When integrating with respect to $Z_{b_{s-1} + 1}$ and $Z_{b_{s-1}}$, we gain a factor $(k/h)\|\widehat{s} - 1\|_\infty^2$. Repeating this process for up until variable $Z_{b_1}$, we accumulate in total a factor 
\begin{align*}
	\frac{k^{j -b_1 - (s-1)}}{h^{j - b_1- (s-1)}} \|\widehat{s} - 1\|_\infty^{2s}.
\end{align*}
This means that the expected square that we need to bound can be bounded by a constant multiple of
\begin{align*}
	& \E_{\widehat{p}} \left[ \widehat{f}^2_1(Z_1) \widehat\Pi_{1, 2} \prod_{i = 2}^{b_1 - 2} K^2_{ht}(A_i) \widehat\Pi^2_{i, i+1} K^2_{ht}(A_{b_1 - 1}) \right] \times \frac{k^{j -b_1 - (s-1)}}{h^{j - b_1- (s-1)}} \|\widehat{s} - 1\|_\infty^{2s} \\
	& \lesssim \frac{k^{b_1 - 2}}{h^{b_1 - 1}} \times \frac{k^{j -b_1 - (s-1)}}{h^{j - b_1- (s-1)}} \|\widehat{s} - 1\|_\infty^{2s} \\
	& = \frac{k^{j - s - 1}}{h^{j - s}} \|\widehat{s} - 1\|_\infty^{2s}.
\end{align*}
The integration with respect to $Z_1$ is adds an additional $h^{-1}$ factor.

\textbf{Case 2: $B$ includes $j$ but not $1$}. In this case, we first need to integrate with respect to $Z_1$ (recalling $v(a, x) = \mu(a, x) - \widehat\mu(a, x)$ and $v(x) = v(t, x)$):
\begin{align*}
	& \int \widehat{f}_1(z_1) \widehat\Pi_{1, 2} p(a_1, x_1) d a_1 dx_1 \\
	& = \int K_{ht}(a_1)v(a_1, x_1) \widehat\Pi(x_1, x_2) p(a_1, x_1) da_1 dx_1 \\
	& = \int v(x_1) \widehat\Pi(x_1, x_2) g(x_1) dx_1 \\
	& \hphantom{=} + \int K_{ht}(a_1) \{v(a_1, x_1) - v(t, x_1)\} \pi(a_1 \mid x_1) d a_1 \widehat\Pi(x_1, x_2) p(x_1) dx_1 \\
	& = \widehat\Pi \left( v \widehat{s}\right)_2 + \int \frac{\Delta_1(x_1)}{\int K_{ht}(a) \pi(a \mid x_1) da} \widehat\Pi(x_1, x_2) g(x_1) dx_1 \\
	& = \widehat\Pi\left( (v + \Delta_v) \widehat{s}\right)_2,
\end{align*}
where
\begin{align*}
	& \Delta_1(x) = \int K_{ht}(a)\{v(a, x) - v(t, x)\} \pi(a \mid x) da, \\
	& \Delta_v(x) = \frac{\Delta_1(x)}{\int K_{ht}(a) \pi(a \mid x) da},
\end{align*}
and the notation $(f)_j$ denotes evaluation of the function $f$ at point $x_j$. As shown in proving the bound on the bias (Step 1), $\|\Delta_1\|_\infty \lesssim h^{b+1} + h^\alpha \lesssim h^{\alpha \land \beta}$. Under the assumption that $\int K_{ht}(a) \pi(a \mid x) da $ is bounded away from zero, we also have that $\|\Delta_v\|_\infty \lesssim h^{\alpha \land \beta}$.  
Notice that
\begin{align*}
	\int \left\{\widehat\Pi\left( (v + \Delta_v) \widehat{s}\right)_2\right\}^2 \widehat{g}(x_2) dx_2 & \leq \int \left[\left\{v(x_2) + \Delta_v(x_2)\right\} \widehat{s}(x_2) \right]^2  \widehat{g}(x_2) dx_2 \\
	& \lesssim \|v\|_g^2 \ + \ h^{2(\alpha \land \beta)}
\end{align*}
because projections do not increase the second moment. 

If $b_1 = 2$, then we also integrate with respect to $Z_2$ yielding 
\begin{align*}
	\int \widehat\Pi\left((v + \Delta_v)\widehat{s}\right)_2(\widehat{s}_2 - 1) \widehat\Pi_{2, 3} \widehat{g}(x_2) dx_2 = \widehat\Pi\left(\left\{\widehat\Pi\left((v + \Delta_v) \widehat{s}\right)\right\}(\widehat{s}- 1)\right)_3.
\end{align*}
Notice that
\begin{align*}
	& \int \left[\widehat\Pi\left(\left\{\widehat\Pi\left((v + \Delta_v) \widehat{s}\right)\right\}(\widehat{s}- 1)\right)_3 \right]^2 \widehat{g}(x_3) dx_3 \\
	& \leq  \int \left[\left\{\widehat\Pi\left((v + \Delta_v) \widehat{s}\right)_3\right\}(\widehat{s}(x_3) - 1)\right]^2 \widehat{g}(x_3) dx_3 \\
	& \leq \|\widehat{s} - 1\|_\infty^2 \int \left\{\widehat\Pi\left( (v + \Delta_v) \widehat{s}\right)_3\right\}^2 \widehat{g}(x_3) dx_3 \\
	& \lesssim \left\{\|v\|^2_{g} \ + \ h^{2(\alpha \land \beta)}\right\}\|\widehat{s} - 1\|_\infty^2.
\end{align*}
We repeat this until we reach the first variable not belonging to $Z_B$, after which the peeling process proceeds right to left as in Case 1. That is, we stop as soon as $b_r > r+1$. Thus, we obtain
\begin{align*}
	& \E_{\widehat{p}}\left[\left\{\widehat\Pi \left( \cdots \widehat\Pi\left((v + \Delta_v)\widehat{s}\right)(\widehat{s} - 1) \right)_{r + 1} \right\}^2 K^2_{ht}(A_{r + 1})\right] \\
	& \quad\quad\quad \times \frac{k^{j - b_{r-1} - (s - r) - 1}}{h^{j - b_{r-1} - (s - r) - 1}}\|\widehat{s} - 1\|_\infty^{2s - 2r} \\
	& = \E_{\widehat{p}}\left[\left\{\widehat\Pi \left( \cdots \widehat\Pi\left((v + \Delta_v)\widehat{s}\right)(\widehat{s} - 1) \right)_{r + 1} \right\}^2  \right. \\
	& \hphantom{= \E_{\widehat{p}}} \left. \vphantom{\left\{\widehat\Pi \left( \cdots \widehat\Pi\left((v + \Delta_v)\widehat{s}\right)(\widehat{s} - 1) \right)_{r + 1} \right\}^2 } \quad \times  \int K_{ht}(a) \widehat\pi(a \mid X_{r+1}) da \times \frac{\int K^2_{ht}(a) \widehat{p}(a_{r+1} \mid X_{r+1}) d a}{\int K_{ht}(a) \widehat\pi(a \mid X_{r+1}) da} \right] \\
	& \hphantom{=} \quad\quad\quad\quad \times \frac{k^{j - b_{r-1} - (s - r) - 1}}{h^{j - b_{r-1} - (s - r) - 1}}\|\widehat{s} - 1\|_\infty^{2s - 2r} \\
	& \lesssim \E_{\widehat{p}}\left[\left\{\widehat\Pi \left( \cdots \widehat\Pi\left((v + \Delta_v)\widehat{s}\right)(\widehat{s} - 1) \right)_{r + 1} \right\}^2  \int K_{ht}(a) \widehat\pi(a \mid X_{r+1}) da)\right]\\
	& \hphantom{\E_{\widehat{p}}\left[\left\{\widehat\Pi\right.\right.} \quad\quad \times \frac{k^{j - b_{r-1} - (s - r) - 1)}}{h^{j - b_{r-1}- (s - r)}}\|\widehat{s} - 1\|_\infty^{2s - 2r} \\
	& \lesssim \left\{\|v\|^2_{g} \ + \ h^{2(\alpha \land \beta)}\right\} \|\widehat{s} - 1\|^{2r-2}_\infty \frac{k^{j - b_{r-1} - (s  - r) - 1}}{h^{j - b_{r-1} - (s-r)}}\|\widehat{s} - 1\|_\infty^{2s - 2r}  \\
	& \lesssim \left\{\left(\|v\|_{g} \ + \ h^{\alpha \land \beta}\right)  \ \lor \ \|\widehat{s} - 1\|_\infty\right\}^{2s} \frac{k^{j - s -1}}{h^{j - s}}
\end{align*}
since $b_{r-1} = r$. 

\textbf{Case 3: $B$ includes 1 but not $j$.} This case is similar to Case 2. The main difference is that we start by integrating with respect to the variable $Z_j$ (recalling that $q(a, x) = 1/\widehat\pi(a \mid x) - 1/\pi(a \mid x)$ and $q(x) = q(t, x)$):
\begin{align*}
	& \int \widehat\Pi_{j-1, j} f_2(z_j) \widehat{p}(z_j) dz_j \\
	& = \int K_{ht}(a_j) q(a_j, x_j) \widehat\Pi(x_{j-1}, x_{j})  p(a_j, x_j) da_jdx_j \\
	& = \int q(x_j) \widehat\Pi(x_{j-1}, x_j) dx_j \\
	& \hphantom{=} + \int K_{ht}(a_j)\{q(a_j, x_j) - q(t, x_j)\} \pi(a_j \mid x_j) da_j \widehat\Pi(x_{j-1}, x_j) p(x_j) dx_j \\
	& = \widehat\Pi(q \widehat{s})_{j-1} + \int \frac{\Delta_1(x)}{\int K_{ht}(a) \pi(a \mid x_{j}) da} \widehat\Pi(x_{j-1}, x_j) g(x_j) dx_j \\
	& = \widehat\Pi((q + \Delta_q) \widehat{s})_{j-1},
\end{align*}
where this time
\begin{align*}
	& \Delta_1(x) = \int K_{ht}(a) \{q(a, x) - q(t, x)\} \pi(a \mid x) da \\
	& \Delta_q(x) = \frac{\Delta_1(x)}{\int K_{ht}(a) \pi(a \mid x) da}.
\end{align*}
As shown in the bound for the bias when controlling $\Delta_3(x)$ in Step 1, and under the assumption that $\int K_{ht}(a) \pi(a \mid x) da$ is bounded away from zero, we have $\|\Delta_q\|_\infty \lesssim h^{\beta}$.

Repeating the arguments as in Case 2, we reach a bound of order
\begin{align*}
	\left\{\left(\|q\|_g \ + \ h^{\beta}\right) \ \lor \ \|\widehat{s} - 1\|_\infty\right\}^{2s} \frac{k^{j - s - 1}}{h^{j - s}}.
\end{align*}
\textbf{Case 4: $B$ includes 1 and $j$.} To analyze this case, we can first integrate from left to right as done in case 2 and from right to left as done in case 3. We need to bound the expectation of the square of
\begin{align*}
	& \widehat\Pi \left(\cdots \widehat\Pi\left( \widehat\Pi\left((v + \Delta_v)\widehat{s}\right)(\widehat{s} - 1) \right) \right)_{r + 1} \\
	& \times \int \cdots \int \prod_{i = b_{r-1} + 1}^{b_{r} - 1} K_{ht}(A_i) \widehat\Pi_{i, i + 1} \widehat\Pi_{b_{r}, b_{r} + 1} \times \cdots \\
	& \quad\quad \times \prod_{i = b_{r' - 1} + 1}^{b_{r'} - 2}K_{ht}(A_i) \widehat\Pi_{i, i + 1} K_{ht}(A_{b_{r'} - 1}) \prod_{i = r}^{r^{'}-1}(\widehat{s}_{b_i} - 1)\widehat{g}(x_{b_i})dx_{b_i} \\
	& \times \widehat\Pi \left(\cdots \widehat\Pi\left( \widehat\Pi\left((q + \Delta_q)\widehat{s}\right)(\widehat{s} - 1) \right) \right)_{b_{r'} - 1}.
\end{align*}
By Cauchy-Schwarz, we can further bound the expected value of the square by the square root of the product of:
\begin{align*}
	& \E_{\widehat{p}} \left[ \left\{ \widehat\Pi \left(\cdots \widehat\Pi\left( \widehat\Pi\left((v + \Delta_v)\widehat{s}\right)(\widehat{s} - 1) \right) \right)_{r + 1} \right\}^4 \right. \\
	& \quad \quad \times \left\{ \left. \int \cdots \int \prod_{i = b_{r-1} + 1}^{b_{r} - 1} K_{ht}(A_i) \widehat\Pi_{i, i + 1} \widehat\Pi_{b_{r}, b_{r} + 1} \times \cdots \right. \right. \\
	& \hphantom{\quad \quad \times \left\{ \right.} \left.\left. \quad \times \prod_{i = b_{r' - 1} + 1}^{b_{r'} - 2}K_{ht}(A_i) \widehat\Pi_{i, i + 1} K_{ht}(A_{b_{r'} - 1})\prod_{i = r}^{r^{'}-1}(\widehat{s}_{b_i} - 1) \widehat{g}(x_{b_i}) dx_{b_i}\right\}^2 \right]
\end{align*}
times
\begin{align*}
	& \E_{\widehat{p}} \left[ \left\{ \widehat\Pi \left(\cdots \widehat\Pi\left( \widehat\Pi\left((q + \Delta_q)\widehat{s}\right)(\widehat{s} - 1) \right) \right)_{b_{r'} - 1} \right\}^4 \right. \\
	& \quad \quad \times \left\{ \left. \int \cdots \int \prod_{i = b_{r-1} + 1}^{b_{r} - 1} K_{ht}(A_i) \widehat\Pi_{i, i + 1} \widehat\Pi_{b_{r}, b_{r} + 1} \times \cdots \right. \right. \\
	& \hphantom{\quad \quad \times \left\{ \right.} \left.\left. \quad \times \prod_{i = b_{r' - 1} + 1}^{b_{r'} - 2}K_{ht}(A_i) \widehat\Pi_{i, i + 1} K_{ht}(A_{b_{r'} - 1})\prod_{i = r}^{r^{'}-1}(\widehat{s}_{b_i} - 1) \widehat{g}(x_{b_i}) dx_{b_i}\right\}^2 \right].
\end{align*}
The square root of the product can be seen to be bounded by
\begin{align*}
	& \left\{\|v\|_{4, g}^2 \ + \ h^{2(\alpha \land \beta)}\right\} \|\widehat{s} - 1\|_\infty^{2(r - 1)} \frac{k^{b_{r'} - b_{r-1} - (r{'} - r) - 2}}{h^{b_{r'} - b_{r-1} - (r^{'} - r)-1}} \|\widehat{s} - 1 \|_\infty^{2(r' - r)} \\
	& \quad \times \left\{\|q\|_{4, g}^2 \ + \ h^{2\beta}\right\} \|\widehat{s} - 1\|_\infty^{2(s - r{'} - 1)}
\end{align*}
where $\|v\|^2_{4, g} = \int v^4(x) g(x) dx$. Notice that $b_{r'} = j - s + r' + 1$ and $b_{r-1} = r$, so the quantity above can be further upper bounded as
\begin{align*}
	(h^{\alpha \land \beta} \ \lor \ \|v\|_{4, g} \ \lor \ \|q\|_{4, g} \ \lor \ \|\widehat{s} - 1\|_\infty)^{2s} \frac{k^{j - s - 1}}{h^{j - s}}.
\end{align*}
Finally, notice that, for $f(x) = \widehat{p}(t, x) - p(t, x)$, we have
\begin{align*}
	\left| \shat(x) - 1 \right| = \left| \frac{\int K_{ht}(a) \{p(a, x) - \widehat{p}(a, x)\} da dx}{\widehat{g}(x)}\right| \lesssim |p(t, x) - \widehat{p}(t, x)| \ + \ h^\beta,
\end{align*}
so that replacing $\|\widehat{s} - 1 \|_\infty$ with $\|f\|_\infty$ does not change the order of the bound since $h^\beta$ is always included.
\end{proof}

\section{Sensitivity analysis to the no-unmeasured-confounding assumption \label{section:sens}}
In this section, we briefly outline a simple pseudo-outcome regression method to carry out flexible, nonparametric sensitivity analysis to the no-unmeasured-confounding assumption, i.e., when $Y^t \not\ind A \mid X$ so that $\int \mu(t, x) d\Pb(x)$ can no longer be interpreted as the dose-response curve. To the best of our knowledge, this is the first nonparametric sensitivity analysis method for continuous treatment effects. \cite{bonvini2022MSM} propose an extension to Rosenbaum's sensitivity model for binary treatments as follows. Let $U$ be such that $Y^a \ind A \mid (X, U)$ and recall that $\E(Y^t) = \E\{Yp(t) / \pi(t \mid X, U) \mid A = a\}$. Let $\gamma \geq 1$ be a user-specified sensitivity parameter. Departures from the no-unmeasured-confounding assumption are parametrized by considering all densities of $A$ given $(X, U)$, $\pi(a \mid x, u)$, in the class
\begin{align*}
\Pi(\gamma) = \left\{ \pi(a \mid x, u): \frac{1}{\gamma} \leq \frac{\pi(a \mid x, u)}{\pi(a \mid x)} \leq \gamma \right\}
\end{align*}
When $\gamma = 1$, corresponding to the case when the measured covariates are sufficient to characterize the treatment selection process, one has the usual identification formula $$\E(Y^t) = \E\{w(t, X)Y \mid A = t\} = \int \mu(t, x) d\Pb(x).$$ 
Throughout, we assume that the outcome $Y$ is continuous. Lemma 2 in \cite{bonvini2022MSM} shows that valid bounds on $\E(Y^a)$ under the sensitivity model $\Pi(\gamma)$ are
\begin{align*}
& \theta_l(t; \gamma) = \int \E[Y\gamma^{\sgn\{q_l(t, x) - Y\}} \mid A = t, X = x]d\Pb(x) \\
& \theta_u(t; \gamma) = \int \E[Y\gamma^{\sgn\{Y - q_u(t, x)\}} \mid A = t, X = x]d\Pb(x)
\end{align*}
where $q_l(A, X)$ (resp. $q_u(A, X)$) is the $1/(1 + \gamma)$ (resp. $\gamma / (1 + \gamma)$)-quantile of $Y$ given $(A, X)$. In other words, for a given, user-specified $\gamma$, if $\pi(a \mid x, u) \in \Pi(\gamma)$, then $\theta_l(a; \gamma) \leq \E(Y^a) \leq \theta_u(a; \gamma)$. 

A DR-Learner estimator of the bounds above can be computed by appropriately modifying the original pseudo-outcome $\varphi(Z) = w(A, X) \{Y - \mu(A, X)\} + \int \mu(A, x) d\Pb(x)$ and regressing it onto $A$. For $j = \{l, u\}$, define
\begin{align*}
& \varphi_j(Z; \gamma) \equiv \varphi_j(Z; w, \kappa_j, q_j, \gamma) \\
& \hphantom{\varphi_j(Z; \gamma)} = w(A, X)\{s_j(Z; q_j) - \kappa_j(A, X; q_j)\} \\
& \hphantom{\varphi_j(Z; \gamma) \equiv} \quad + \int \kappa_j(A, x; q_j) d\Pb(x), \text{ for } \\
& s_l(Z; q_l) = q_l(A, X) + \{Y - q_l(A, X)\}\gamma^{\sgn\{q_l(A, X) - Y\}} \\
& s_u(Z; q_u) = q_u(A, X) + \{Y - q_u(A, X)\}\gamma^{\sgn\{Y - q_u(A, X)\}} \\
& \kappa_j(A, X; q_j) = \E\{s_j(Z; q_j)\mid A, X\}
\end{align*}
Following the sample splitting scheme whereby all nuisance functions are estimated on a separate, independent sample $D^n$, a DR-Learner estimator of $\theta_j(t; \gamma)$ regresses an estimate of $\varphi_j(Z; \gamma)$ onto $A$ on the test set. For example, if the second stage regression is done via linear smoothing, then $\widehat\theta_j(t; \gamma) = n^{-1}\sum_{i = 1}^n W_i(t; A_i) \widehat\varphi_j(Z; \gamma)$. It can be shown that $\varphi_j(Z)$ is just part of the influence function of $\int \theta_j(a; \gamma) d\Pb(a)$, which is a pathwise-differentiable parameter. Furthermore, $\varphi_l(Z; 1) = \varphi_u(Z; 1) = \varphi(Z)$. 

The error analysis of the DR-Learners $\widehat\theta_l(t; \gamma)$ and $\widehat\theta_u(t; \gamma)$ follows from Propositions 2.1 and 2.2. In this light, it only remains to calculate $\E\{\widehat\varphi_j(Z; \gamma) - \varphi_j(Z; \gamma) \mid A = t, D^n\}$. We do so in the following lemma, proved in Section \ref{app:sens_bias} below, which plays the role of Lemma 2.3 in the no-unmeasured-confounding case. 
\begin{lemma}\label{lemma:sens_bias}
Let $\widehat{r}_j(t) = \E\{\widehat\varphi_j(Z; \gamma) - \varphi_j(Z; \gamma) \mid A = t, D^n\}$. It holds that
\begin{align*}
	|\widehat{r}_j(t)| \ \lesssim \|w - \widehat{w}\|_t\|\kappa_j - \widehat\kappa_j\|_t + \|q_j - \widehat{q}_j\|^2_t + \left|(\Pn - \Pb) \widehat\kappa_j(t, X; \widehat{q}_j)  \right|
\end{align*}
\end{lemma}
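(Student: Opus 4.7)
The plan is to decompose $\widehat r_j(t)$ into exactly the three summands in the claim and bound each separately. First, by definition $\kappa_j(A,X;q_j) = \E\{s_j(Z;q_j)\mid A,X\}$, so the conditional mean of the centered term in $\varphi_j$ given $(A,X)$ is zero and Bayes' rule yields
\begin{align*}
\E\{\varphi_j(Z;\gamma)\mid A=t\} = \int \kappa_j(t,x;q_j)\,d\Pb(x).
\end{align*}
Writing $\overline\kappa_j(t,x) := \E\{s_j(Z;\widehat q_j)\mid A=t, X=x, D^n\}$ for the true conditional expectation evaluated at the plug-in quantile, the conditional mean of the estimated pseudo-outcome is
\begin{align*}
\E\{\widehat\varphi_j(Z;\gamma)\mid A=t, D^n\} = \E\{\widehat w(t,X)[\overline\kappa_j(t,X) - \widehat\kappa_j(t,X;\widehat q_j)]\mid A=t\} + \Pn\{\widehat\kappa_j(t,X;\widehat q_j)\}.
\end{align*}

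Next I would add and subtract $\int \widehat\kappa_j(t,x;\widehat q_j)\,d\Pb(x)$ to peel off the centered sample average $(\Pn - \Pb)\widehat\kappa_j(t,X;\widehat q_j)$ verbatim, and then use Bayes' rule to rewrite the remaining integrals as conditional expectations given $A=t$ weighted by $w(t,X)$. Straightforward algebra collapses the result into
\begin{align*}
\widehat r_j(t) = \E\{(\widehat w - w)(\overline\kappa_j - \widehat\kappa_j)\mid A=t\} + \E\{w\,(\overline\kappa_j - \kappa_j)\mid A=t\} + (\Pn - \Pb)\widehat\kappa_j(t,X;\widehat q_j),
\end{align*}
with all functions evaluated at $(t,X)$ inside the conditional expectations. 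Cauchy--Schwarz bounds the first summand by $\|\widehat w - w\|_t\,\|\overline\kappa_j - \widehat\kappa_j\|_t$, and the triangle inequality gives $\|\overline\kappa_j - \widehat\kappa_j\|_t \leq \|\kappa_j - \widehat\kappa_j\|_t + \|\overline\kappa_j - \kappa_j\|_t$. Under the boundedness assumptions, the cross term $\|\widehat w - w\|_t\,\|\overline\kappa_j - \kappa_j\|_t$ is dominated by the squared-quantile-error term analyzed next.

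The key step is showing $\E\{w\,(\overline\kappa_j - \kappa_j)\mid A=t\} = O(\|\widehat q_j - q_j\|_t^2)$, which rests on a Neyman-orthogonality property of $\kappa_j$ in its quantile argument. Differentiating $s_j(Z;q) = q + (Y-q)\gamma^{\sgn(q-Y)}$ in $q$, the delta-contribution at $q=Y$ arising from $\sgn(q-Y)$ is annihilated by the prefactor $(Y-q)$, leaving $\partial_q s_j(Z;q) = 1 - \gamma^{\sgn(q-Y)}$ almost everywhere. The definition of $q_l$ as the $1/(1+\gamma)$-quantile (resp.\ $q_u$ as the $\gamma/(1+\gamma)$-quantile) of $Y\mid A,X$ then makes $\E\{\partial_q s_j(Z;q)\mid A,X\}|_{q=q_j} = 0$, i.e.\ $\partial_q \kappa_j(A,X;q)|_{q=q_j} = 0$. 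A second-order Taylor expansion of $q\mapsto \kappa_j(A,X;q)$ around $q_j$, together with a uniform bound on its second derivative, gives $|\overline\kappa_j(t,x) - \kappa_j(t,x)| \lesssim |\widehat q_j(t,x) - q_j(t,x)|^2$ pointwise, from which the $L_2(\Pb(\cdot\mid A=t))$ bound follows after using boundedness of $w$.

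The main obstacle is making the formal derivative computation rigorous given the non-smoothness of $s_j$ in $q$ at $q=Y$. I would handle it by assuming $Y\mid A,X$ has a bounded conditional density, so that $\{Y=q\}$ has measure zero and the almost-everywhere derivative suffices, or alternatively by working directly with finite-difference quotients and invoking the fundamental theorem of calculus on the intervals $(q_j, \widehat q_j)$ to avoid differentiating the sign function. All remaining steps are repeated applications of the triangle inequality and Cauchy--Schwarz, and the residual $(\Pn-\Pb)\widehat\kappa_j(t,X;\widehat q_j)$ term has the same structure as the one already handled in Lemma~\ref{lemma:rhat}.
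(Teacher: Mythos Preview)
Your proof follows the same overall strategy as the paper: decompose $\widehat r_j(t)$ into a doubly-robust product term, a quantile-error term, and the centered sample average, then use a second-order Taylor expansion of $q \mapsto \kappa_j(A,X;q)$ around the true quantile $q_j$, where the first derivative vanishes by the defining property of the quantile and the second derivative is bounded by the conditional density of $Y$ given $(A,X)$. The paper carries out this Taylor step via the Leibniz rule on the integral representation of $\kappa_j$ and arrives at exactly the first-order orthogonality condition you state.

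There is one small gap. After applying Cauchy--Schwarz and the triangle inequality to your first summand, you claim the residual cross term $\|\widehat w - w\|_t\,\|\overline\kappa_j - \kappa_j\|_t$ is dominated by $\|\widehat q_j - q_j\|_t^2$. But the pointwise bound $|\overline\kappa_j - \kappa_j| \lesssim (\widehat q_j - q_j)^2$ only yields $\|\overline\kappa_j - \kappa_j\|_t \lesssim \|(\widehat q_j - q_j)^2\|_t$, and by Jensen this $L_2$ norm of the square is in general at least $\|\widehat q_j - q_j\|_t^2$, not at most. The fix is easy: split the expectation \emph{before} Cauchy--Schwarz as $\E\{(\widehat w - w)(\overline\kappa_j - \widehat\kappa_j)\mid A=t\} = \E\{(\widehat w - w)(\kappa_j - \widehat\kappa_j)\mid A=t\} + \E\{(\widehat w - w)(\overline\kappa_j - \kappa_j)\mid A=t\}$ and bound the second piece in $L_1$ using $\|\widehat w - w\|_\infty$ together with the pointwise Taylor bound. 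The paper avoids this detour altogether by organizing the add--and--subtract differently, grouping $\widehat w\,(\overline\kappa_j - \kappa_j)$ as one term (bounded directly in $L_1$) and $(w - \widehat w)(\widehat\kappa_j - \kappa_j)$ as the product term, which is already Cauchy--Schwarz-ready with $\kappa_j$ rather than $\overline\kappa_j$.
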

The result of Lemma \ref{lemma:sens_bias} is similar to that of Lemma 2.3, so that the discussion regarding estimating $\omega$ and, in this case, the regression function $\kappa$ applies here as well. There are two main differences. First, the upper bound on the conditional bias now involves the additional term $\|q_j - \widehat{q}_j\|_t^2$, so that consistent estimation of the bounds relies on the consistency of the conditional quantiles estimators. Second, the error $\|\kappa_j - \widehat\kappa_j\|$ is nonstandard since it pertains to a regression function with estimated outcome ($s_j(Z; \widehat{q}_j))$. The problem of estimating conditional quantiles is well-studied; see, e.g., \cite{belloni2017high, athey2019generalized, koenker1978regression, white1992nonparametric}. Because the function $q \mapsto s_j(Z; q)$ is Lipschitz, one may expect the term $\|\kappa_j - \widehat\kappa_j\|_t$ to be of order the maximum between $\|\widehat{q}_j - q_j\|_t$ and $\|\widetilde\kappa_j - \kappa_j\|_t$, where $\widetilde\kappa_j = \widehat\E_n(s_j(Z; q_j) \mid A, X)$ is the oracle estimator regressing the true outcome $s_j(Z; q)$ onto $(A, X)$ using an estimator $\widehat\E_n(\cdot \mid A, X)$. This would be the case if the second-stage regression estimator $\widehat\E_n(\cdot \mid A, X)$ posseses some stability in the form of 
\begin{align*}
\left|\widehat\E_n\{s_j(Z; \widehat{q}_j) \mid A, X\} -  \widehat\E_n\{s_j(Z; q_j) \mid A, X\}\right| \lesssim \left|\widehat{q}_j(A, X) - q_j(A, X)\right|
\end{align*}
We refer to Section B in the Appendix of \cite{dorn2024doubly} for a more comprehensive discussion on the propogation of the error in estimating the conditional quantile in the context of binary treatments.
The centered empirical average term is of order $O_\Pb(n^{-1/2})$, under mild boundedness conditions, and thus negligible in nonparametric models for which the convergence rate is slower than $n^{-1/2}$.

We conclude by establishing that $\varphi_j(Z; \gamma)$ satisfies the \textit{doubly-valid} structure discovered by \cite{dorn2021doubly} in a similar sensitivity model for binary treatments. In particular, the bounds remain valid even if the conditional quantiles are not correctly specified. While \cite{dorn2021doubly} focused on binary treatments, their observation extends to the continuous treatment case as well, as summarized in the following proposition.
\begin{proposition}\label{prop:doubly_valid}
Let $\overline{w}$, $\overline\kappa_l$, $\overline\kappa_u$, $\overline{q}_l$ and $\overline{q}_u$ be some fixed-functions such that all the expectations below are well defined. If either $\overline\kappa_j = \kappa_j(a, x; \overline{q}_j)$ or $\overline{w}(a, x) = w(a, x)$, but not necessarily both, then 
\begin{gather*}
	\E\{\varphi_l(Z; \overline{w}, \overline\kappa_l, \overline{q}_l, \gamma) \mid A = t\} \\
	\leq \theta_l(t; \gamma) \leq \theta_u(t; \gamma) \leq \\
	\E\{\varphi_u(Z; \overline{w}, \overline\kappa_u, \overline{q}_u, \gamma) \mid A = t\}
\end{gather*}
\end{proposition}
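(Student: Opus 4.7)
The plan is, for each case separately, to reduce $\E\{\varphi_u(Z; \overline{w}, \overline\kappa_u, \overline{q}_u, \gamma) \mid A = t\}$ to the single integral $\int \kappa_u(t, x; \overline{q}_u) d\Pb(x)$, and then to show that this integral dominates $\theta_u(t; \gamma)$ by a pointwise quantile-optimization inequality. The argument for $\varphi_l$ is the mirror image, and the middle inequality $\theta_l(t;\gamma) \leq \theta_u(t;\gamma)$ is inherited from Lemma 2 of \cite{bonvini2022MSM} cited earlier in this section, since both quantities are valid bounds on a nonempty identified set for $\E(Y^t)$.

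\textbf{Step 1 (algebraic reduction).} In the case $\overline{w}(a,x) = w(a,x)$, the weighting identity $\E\{w(t, X) h(X) \mid A = t\} = \int h(x) d\Pb(x)$, which follows from $w(a, x) = p(a)/\pi(a \mid x)$ and Bayes' rule $p(x \mid A = a) = \pi(a \mid x) p(x)/p(a)$, together with the tower property applied inside the $s_u$ term, gives
\[\E\{\varphi_u \mid A = t\} = \int \kappa_u(t, x; \overline{q}_u) d\Pb(x) - \int \overline\kappa_u(t, x) d\Pb(x) + \int \overline\kappa_u(t, x) d\Pb(x),\]
so the $\overline\kappa_u$ terms cancel. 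In the case $\overline\kappa_u(a, x) = \kappa_u(a, x; \overline{q}_u)$, the inner conditional expectation $\E\{s_u(Z; \overline{q}_u) - \kappa_u(t, X; \overline{q}_u) \mid A = t, X\}$ vanishes by the definition of $\kappa_u$, killing the first term of $\varphi_u$ regardless of $\overline{w}$. Both cases therefore yield $\E\{\varphi_u \mid A = t\} = \int \kappa_u(t, x; \overline{q}_u) d\Pb(x)$.

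\textbf{Step 2 (quantile optimization).} It suffices to show that for any function $\overline{q}$ and all $(t, x)$,
\[\E\{s_u(Z; \overline{q}) \mid A = t, X = x\} \geq \E\{Y \gamma^{\sgn\{Y - q_u(t,x)\}} \mid A = t, X = x\},\]
with equality when $\overline{q} = q_u$, the true $\gamma/(1+\gamma)$-quantile of $Y$ given $(A, X)$. Rewriting $s_u(Z; q) = q + \gamma(Y - q)_+ - \gamma^{-1}(q - Y)_+$ shows that $q \mapsto \E\{s_u(Z; q) \mid A = t, X = x\}$ is convex, and its derivative $1 - \gamma \Pb(Y > q \mid A = t, X = x) - \gamma^{-1} \Pb(Y < q \mid A = t, X = x)$ vanishes exactly at $F_{Y \mid t, x}(q) = \gamma/(1+\gamma)$, i.e., at $q = q_u(t, x)$. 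Evaluating the difference $s_u(Z; q_u) - Y\gamma^{\sgn\{Y - q_u\}}$ case-by-case on $\{Y > q_u\}$ and $\{Y < q_u\}$ gives $q_u(1 - \gamma)$ and $q_u(1 - \gamma^{-1})$ respectively; since $\Pb(Y > q_u \mid A, X) = 1/(1+\gamma)$ and $\Pb(Y < q_u \mid A, X) = \gamma/(1+\gamma)$, the conditional expectation of this difference telescopes to zero, so the minimum value of the left-hand side equals the right-hand side. Integrating the pointwise inequality against $d\Pb(x)$ and combining with Step 1 yields $\E\{\varphi_u \mid A = t\} \geq \theta_u(t; \gamma)$.

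\textbf{Assembly and main obstacle.} The bound for $\varphi_l$ is obtained by repeating the argument with the roles of $(Y - q)_+$ and $(q - Y)_+$ swapped, so that the convex objective becomes concave (equivalently, the pinball optimum flips sign), locating the optimizer at the $1/(1+\gamma)$-quantile $q_l$ and reversing the pointwise inequality. The main technical content is Step 2: verifying that the unique minimizer of the convex map $q \mapsto \E\{s_u(Z;q) \mid A, X\}$ is precisely the quantile defining $\theta_u$, and, crucially, that at this optimum the closed-form value coincides with the target $\E\{Y\gamma^{\sgn\{Y - q_u\}} \mid A, X\}$ despite the two integrands $s_u(Z; q_u)$ and $Y\gamma^{\sgn\{Y - q_u\}}$ differing pointwise. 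The cancellations in Step 1 are otherwise routine and parallel the doubly-robust structure exploited in Lemma \ref{lemma:rhat}.
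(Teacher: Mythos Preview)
Your proof is correct and your Step~1 coincides with the paper's reduction. Your Step~2, however, takes a genuinely different route from the paper. The paper does not invoke convexity or differentiate $q \mapsto \E\{s_u(Z;q)\mid A,X\}$ at all; instead it observes the \emph{deterministic} pointwise inequality
\[
\{Y - \overline q_u\}\,\gamma^{\sgn\{Y - \overline q_u\}} \;\geq\; \{Y - \overline q_u\}\,\gamma^{\sgn\{Y - q_u\}},
\]
which holds simply because $\gamma \geq 1$ makes $\gamma^{\sgn(t)}$ the largest admissible weight when multiplied by $t$ itself. Taking conditional expectations and using the single identity $\E[\gamma^{\sgn\{Y - q_u(A,X)\}}\mid A,X]=1$ (which encodes that $q_u$ is the $\gamma/(1+\gamma)$-quantile) then collapses the right-hand side to $\E[Y\gamma^{\sgn\{Y-q_u\}}\mid A,X] - \overline q_u$, and the $\overline q_u$ cancels against the additive $\overline q_u$ in $s_u$. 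This is shorter and sidesteps any calculus. Your convex-optimization argument, on the other hand, is more constructive: it explains \emph{why} the particular quantile $q_u$ appears, via the check-loss first-order condition, and makes the connection to pinball regression explicit. Both arguments implicitly use that $Y$ has no atom at $q_u(A,X)$ given $(A,X)$, so neither is more general on that front.
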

\begin{proof}
If either $\overline\kappa_j = \kappa_j(a, x; \overline{q}_j)$ or $\overline{w}(a, x) = w(a, x)$, then 
\begin{align*}
	& \E\{\varphi_l(Z; \overline{w}, \overline\kappa_l, \overline{q}_l, \gamma) \mid A = t\} = \\
	& = \int \left(\overline{q}_l(t, x) + \E[\{Y - \overline{q}_l(A, X)\}\gamma^{\sgn\{\overline{q}_l(A, X) - Y\}} \mid A = t, X = x] \right) d\Pb(x) \\
	& \E\{\varphi_u(Z; \overline{w}, \overline\kappa_u, \overline{q}_u, \gamma) \mid A = t\} = \\
	& = \int \left(\overline{q}_u(t, x) + \E[\{Y - \overline{q}_u(A, X)\}\gamma^{\sgn\{Y - \overline{q}_u(A, X)\}} \mid A = t, X = x] \right) d\Pb(x) 
\end{align*} 
The result follows because it holds that $$\E\left[\gamma^{\sgn\{q_l(A, X) - Y\} }\mid A, X\right] = \E\left[\gamma^{\sgn\{Y - q_u(A, X)\} }\mid A, X\right] = 1$$
and, deterministically, that 
\begin{align*}
	& \{Y - \overline{q}_l(A, X)\}\gamma^{\sgn\{\overline{q}_l(A, X) - Y\}} \leq \{Y - \overline{q}_l(A, X)\}\gamma^{\sgn\{q_l(A, X) - Y\}} \\
	& \{Y - \overline{q}_u(A, X)\}\gamma^{\sgn\{Y - \overline{q}_u(A, X)\}} \geq \{Y - \overline{q}_u(A, X)\}\gamma^{\sgn\{Y - q_u(A, X)\}}
\end{align*}
\end{proof}
Proposition \ref{prop:doubly_valid} establishes the doubly-valid structure of $\varphi_l(Z; \gamma)$ and $\varphi_u(Z; \gamma)$. Just like in the sensitivity model studied by \cite{dorn2021doubly} for binary treatments, the bounds on $\E(Y^a)$ remain valid even if the conditional quantiles are not correctly specified as long as either $w(a, x)$ or the second stage regression of $s_j(Z; \overline{q})$ onto $(A, X)$ are.

In the next proposition, we provide the sample analog of Proposition \ref{prop:doubly_valid} when the estimator of the bounds is a DR-Learner based on localized linear smoothing. Let $\overline\kappa_j(a,x) \equiv \kappa_j(a, x; \overline{q}_j) = \E\{s_j(Z; \overline{q}_j) \mid A = a, X = x\}$. Further, let $R^2_j(t)$ be the mean-square-error of an oracle estimator of $\theta_j(t; \gamma)$ regressing the pseudo-outcome $\overline\varphi_j(Z; w, \overline\kappa_j, \overline{q}_j, q_j)$ onto $A$, defined as
\begin{align*}
& \overline\varphi_u(Z; w, \overline\kappa_u, \overline{q}_u, q_u) = \\
& = w(A, X)\{s_u(Z; \overline{q}_u) - \overline\kappa_u(A, X; \overline{q}_u)\} + \int \overline\kappa_u(A, x; \overline{q}_u) d\Pb(x) \\
& \hphantom{=} - w(A, X)\{Y - \overline{q}_u(A, X)\}\left[\gamma^{\sgn\{Y - \overline{q}_u(A, X)\}} - \gamma^{\sgn\{Y - q_u(A, X)\}} \right] \\
& \overline\varphi_l(Z; w, \overline\kappa_l, \overline{q}_l, q_l) = \\
& = w(A, X)\{s_l(Z; \overline{q}_l) - \overline\kappa_l(A, X; \overline{q}_l)\} + \int \overline\kappa_l(A, x; \overline{q}_l) d\Pb(x) \\
& \hphantom{=} - w(A, X)\{Y - \overline{q}_l(A, X)\}\left[\gamma^{\sgn\{\overline{q}_l(A, X) - Y\}} - \gamma^{\sgn\{q_l(A, X) - Y\}} \right]
\end{align*}
It can be shown that $\E\{\overline\varphi_j(Z; w, \overline\kappa_j, \overline{q}_j, q_j) \mid A = t\} = \theta_j(t; \gamma)$ for $j = \{l, u\}$.
\begin{proposition}\label{prop:sample_doubly_valid}
Let $\widehat\theta_j(t; \gamma)$ be an DR-Learner estimator of $\theta_j(t; \gamma)$ based on linear smoothing (Sections 2 in the main text and \ref{section:sens} in the supplement). Further, let the sample splitting scheme be the same as in Algorithm 1 and assume that the following conditions hold:
\begin{enumerate}
	\item If $T_i \leq V_i$ for all $i \in \{1, \ldots, n\}$, then the weights satisfy $\sum_{i = 1}^nW_i(t; A^n) T_i \leq \sum_{i = 1}^n W_i(t; A^n) V_i$;
	\item $\|\widehat{w} - w\|_\infty$, $\|\widehat\kappa_j - \overline\kappa_j\|_\infty$ and $\|\widehat{q}_j - \overline{q}_j\|_\infty$ are all $o_\Pb(1)$, where $\overline{q}_j(a, x)$ does not need to equal $q_j(a, x)$;
	\item $\var\{\overline\varphi_j(Z; w, \overline\kappa_j, \overline{q}_u, w, q_u) \mid A = a\} \geq c > 0$ for all $a \in \mathcal{A}$ and some constant $c$. 
	\item The outcome $Y$ has a uniformly bounded conditional density given any values of $(A, X)$;
	\item The linear smoother weights $W_i(t; A^n)$ are localized as in Proposition 2.2 in a neighborhood $N_t$ around $A = t$.
\end{enumerate}
Then, the following inequalities hold
\begin{align*}
	& \widehat\theta_l(t; \gamma) \leq \theta_l(t; \gamma) + O_\Pb\left(R_l(t) + \sup_{a \in N_t} \widehat{r}_l(a) \right) \\
	& \widehat\theta_u(t; \gamma) \geq \theta_u(t; \gamma) + O_\Pb\left(R_u(t) + \sup_{a \in N_t} \widehat{r}_u(a) \right)
\end{align*}
where, for $\|f\|^2_t = \int f^2(z) d\Pb(z \mid A = t)$:
\begin{align*}
	\widehat{r}_j(t) = \|\widehat{w} - w\|_t\|\widehat\kappa_j - \overline\kappa_j\|_t \ + \ \|\widehat{w} - w\|_t\|\widehat{q}_j-\overline{q}_j\|_t \ + \ \left|(\Pn - \Pb)\widehat\kappa_j(t, X; \widehat{q}_j)\right|
\end{align*}
\end{proposition}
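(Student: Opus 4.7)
The plan is to adapt the proof of Proposition \ref{thm:dr} by exploiting the one-sided doubly-valid structure of Proposition \ref{prop:doubly_valid} to obtain one-sided inequalities, and using the monotonicity of the weights (condition 1) in place of the absolute-value conditional-bias bound used there. The central quantities are the conditional mean $\mu_l^*(a) := \E\{\widehat\varphi_l(Z;\gamma)\mid A=a, D^n\}$ and the oracle pseudo-outcome $\overline\varphi_l(Z; w, \overline\kappa_l, \overline q_l, w, q_l)$, whose conditional expectation given $A$ equals $\theta_l(\cdot;\gamma)$ exactly by a direct computation using $\E[\gamma^{\sgn\{q_l-Y\}}\mid A,X]=1$ to handle the correction term.

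The first main step is to establish the one-sided conditional-bias bound $\mu_l^*(a) \leq \theta_l(a;\gamma) + C\, r_l(a)$, pointwise in $a$. Writing $\mu_l^*(a)$ as in Lemma \ref{lemma:sens_bias} and using the change of measure $d\Pb(x) = w(a,x)\,d\Pb(x\mid A=a)$ yields
\begin{align*}
\mu_l^*(a) - \int \kappa_l(a,x;\widehat q_l)\,d\Pb(x) = \int \{\widehat w - w\}\{\kappa_l(\cdot;\widehat q_l) - \widehat\kappa_l\}\,d\Pb(x\mid A=a) + (\Pn - \Pb)\widehat\kappa_l(a,X;\widehat q_l).
\end{align*}
Cauchy--Schwarz on the integral, together with the Lipschitz property of $q \mapsto \kappa_l(a,x;q)$ (a direct differentiation gives $\partial_q\kappa_l = 1-\gamma^{-1}-(\gamma-\gamma^{-1})F_{Y\mid A,X}(q)$, so $\|\kappa_l(\cdot;\widehat q_l) - \widehat\kappa_l\|_a \leq \|\widehat\kappa_l - \overline\kappa_l\|_a + C\|\widehat q_l - \overline q_l\|_a$), bounds the right-hand side by $C\,r_l(a)$. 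Proposition \ref{prop:doubly_valid} applied with $\overline w = w$ and $\overline\kappa = \kappa_l(\cdot;\widehat q_l)$ gives $\int \kappa_l(a,x;\widehat q_l)\,d\Pb(x) \leq \theta_l(a;\gamma)$, which combined with the above produces the desired inequality.

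Next, decompose $\widehat\theta_l(t;\gamma) - \theta_l(t;\gamma) = A_1 + A_2$, where $A_1 = n^{-1}\sum_i W_i(t;A^n)[\widehat\varphi_l(Z_i;\gamma) - \mu_l^*(A_i)]$ and $A_2 = n^{-1}\sum_i W_i(t;A^n)\mu_l^*(A_i) - \theta_l(t;\gamma)$. Applying condition 1 pointwise with $T_i = \mu_l^*(A_i) \leq \theta_l(A_i;\gamma) + C\,r_l(A_i) =: V_i$,
\begin{align*}
A_2 \leq \left\{n^{-1}\sum_i W_i(t;A^n)\,\theta_l(A_i;\gamma) - \theta_l(t;\gamma)\right\} + C\,n^{-1}\sum_i W_i(t;A^n)\, r_l(A_i).
\end{align*}
The first bracket is the smoothing error of an oracle linear smoother applied to $\theta_l(\cdot;\gamma)$ and hence of order $O_\Pb(R_l(t))$; the second, by localization (condition 5) and $n^{-1}\sum|W_i|\leq C$, is at most $C'\sup_{a\in N_t} r_l(a)$. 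For $A_1$, conditioning on $(A^n, D^n, Z_2^n)$ makes the summands centered with respect to the $Z_3^n$-randomness, so Chebyshev plus the bounded conditional variance (condition 3) yields $A_1 = O_\Pb(R_l(t))$, with condition 2 ensuring that the variance of $\widehat\varphi_l$ is comparable to that of the oracle. Summing gives the first inequality; the second inequality is symmetric, applying Proposition \ref{prop:doubly_valid} in the reverse direction for $\varphi_u$.

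The main obstacle is handling the within-sample dependence induced by the empirical average $n^{-1}\sum_i \widehat\kappa_l(A_j,X_i;\widehat q_l)$ appearing inside $\widehat\varphi_l(Z_j;\gamma)$: different test observations share this average, so the residuals in $A_1$ are not strictly independent. The three-fold split used throughout Section \ref{sec:dr_est} isolates this dependence into the $(\Pn - \Pb)\widehat\kappa_l$ term already absorbed into $r_l$, and conditioning on the auxiliary sample $Z_2^n$ restores the independence needed to apply Chebyshev. Condition 4 (bounded conditional density of $Y$) enters as a regularity assumption ensuring the Lipschitz bound for $\kappa_l$ in $q$ is well-defined uniformly in $(a,x)$, and that the stochastic perturbation of $s_l(Z;\widehat q_l) - s_l(Z;\overline q_l)$ is suitably controlled.
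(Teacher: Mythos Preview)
Your proof is correct and reaches the same conclusion, but the architecture differs from the paper's in an instructive way. The paper introduces a surrogate pseudo-outcome $\widetilde\varphi_u$ satisfying the \emph{pointwise} inequality $\varphi_u(Z;\widehat w,\widehat\kappa_u,\widehat q_u)\geq\widetilde\varphi_u(Z;\widehat w,\widehat\kappa_u,\widehat q_u,w,q_u)$ (the deterministic inequality from the proof of Proposition~\ref{prop:doubly_valid}), applies condition~1 at the data level to obtain $\widehat\theta_u(t;\gamma)\geq\widehat\E\{\widetilde\varphi_u\mid A=t\}$, and then invokes the full DR-Learner machinery (Proposition~\ref{thm:dr} / Kennedy 2020) on the pair $(\widetilde\varphi_u,\overline\varphi_u)$; the remaining conditional bias $\widehat b(A_i)=\E\{\widetilde\varphi_u-\overline\varphi_u\mid A_i,D^n\}$ is computed explicitly and bounded by $r_u$ via Cauchy--Schwarz and the Lipschitz property of $q\mapsto s_u(Z;q)$. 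You instead pass to conditional expectations first, invoke Proposition~\ref{prop:doubly_valid} at the population level to obtain the one-sided bound $\mu_l^*(a)\leq\theta_l(a;\gamma)+Cr_l(a)$, and only then apply condition~1. Your route is more elementary and avoids constructing $\widetilde\varphi$, but it requires separately arguing that both $A_1$ and the smoothing bracket $n^{-1}\sum_iW_i\theta_l(A_i;\gamma)-\theta_l(t;\gamma)$ are $O_\Pb(R_l(t))$. These claims are true---the bracket equals the oracle error minus a sum that is centered given $A^n$, each piece $O_\Pb(R_l(t))$ by Markov and the bias--variance decomposition of $R_l^2(t)$; and $A_1$ has conditional variance controlled from above by boundedness and compared to $R_l^2(t)$ from below by condition~3---but your sketch glosses over them. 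The paper's route packages both into the single $O_\Pb(R_u(t))$ delivered by the DR-Learner theorem, which is why it is shorter.
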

Proposition \ref{prop:sample_doubly_valid} shows that, even if the conditional quantiles of $Y$ given $(A, X)$ are not well estimated, the estimators of the bounds can still converge to functions that contain the region $[\theta_l(t; \gamma), \theta_u(t; \gamma)]$ and, in this sense, are ``valid bounds.'' The result holds under mild conditions. For instance, conditions 1 and 5 are a mild stability conditions on the second-stage linear smoother. Conditions 3 and 4 are mild regularity conditions on the data generating process and the nuisance functions' estimators. The speed at which $\widehat\theta_j(t; \gamma)$ converges to valid bounds depends on the structural properties of $\theta_j(t; \gamma)$, encoded in the oracle MSE $R^2_j(t)$, as well as the accuracy in estimating $w$ and $\overline\kappa$. In particular, the conditional bias term $\widehat{r}_j(t)$, defined in Proposition \ref{prop:sample_doubly_valid}, is similar to that defined in Lemma 5.1; it consists of a sum of product of errors plus a $O_P(n^{-1/2})$-term. The main difference is that here the quantile estimator $\widehat{q}_j$ is allowed to converge uniformly to a fixed function $\overline{q}_j(a, x)$ that does not need to equal the true $q_j(a, x)$. In this sense, depending on the quantile estimator, it may be reasonable to assume that $\|\widehat{q}_j - \overline{q}_j\|_t$ converges much faster to zero than $\|\widehat\kappa_j - \overline\kappa_j\|_t$ so that, in $\widehat{r}_j(t)$, the first term is the dominant one. Notice that $\overline\kappa_j$ depends on $\overline{q}_j(a, x)$ and not $q_j(a, x)$. The proof of Proposition \ref{prop:sample_doubly_valid} extends the strategy of \cite{dorn2021doubly} to the case of non-root-$n$ estimable parameters.
\subsection{Proof of Lemma \ref{lemma:sens_bias} \label{app:sens_bias}}
We prove the result for the upper bound, as that for the lower bound can be proven with a similar argument. The derivative of the map $\overline{q}(a, x) \mapsto \E\{s_u(Z; \overline{q}) \mid A = a, X = x)$ is
\begin{align*}
& \frac{d}{d \overline{q}} \left\{ \overline{q} + \frac{1}{\gamma} \int_{-\infty}^{\overline{q}} (y - \overline{q}) f(y \mid A = a, X = x) dy \right. \\
& \hphantom{\frac{d}{d \overline{q}} \left\{ \overline{q} \right. } \left. + \gamma \int_{\overline{q}}^{\infty} (y - \overline{q}) f(y \mid A = a, X = x) dy \right\} \\
& = 1 - \frac{1}{\gamma}\Pb\left(Y \leq \overline{q} \mid A = a, X = x\right) - \gamma\Pb\left(Y \geq \overline{q} \mid A = a, X = x\right)
\end{align*}
Similarly, the second derivative is
\begin{align*}
& \frac{d^2}{d \overline{q}^2} \left\{ \overline{q} + \frac{1}{\gamma} \int_{-\infty}^{\overline{q}} (y - \overline{q}) f(y \mid A = a, X = x) dy + \right. \\
& \hphantom{\frac{d^2}{d \overline{q}^2} \left\{ \overline{q}\right. } \left. \gamma \int_{\overline{q}}^{\infty} (y - \overline{q}) f(y \mid A = a, X = x) dy \right\} \\
& = \frac{d }{d \overline{q}} \left\{1 - \frac{1}{\gamma} \int_{-\infty}^{\overline{q}} f(y \mid A = a, X = x)dy -\gamma\int_{\overline{q}}^{\infty} f(y \mid A = a, X = x)dy\right\} \\
& = -\frac{1}{\gamma}f(\overline{q}\mid A = a, X = x) + \gamma f(\overline{q}\mid A = a, X = x) \\
& = O(1)
\end{align*}
Notice that the first derivative vanishes at the true quantile $\overline{q}(a, x) = q_u(a, x)$. Therefore, by a second order Taylor expansion, it holds that
\begin{align*}
\left| \E\{s(Z; \widehat{q}_u) - s(Z; q_u) \mid A = a, X = x\} \right| \lesssim \left\{\widehat{q}_u(a, x) - q_u(a, x) \right\}^2
\end{align*}
Next, notice that
\begin{align*}
\rhat_u(t) & = \int \widehat{w}(t, x)[\E\{ s(Z; \widehat{q}_u) \mid A = t, X = x\} - \widehat\kappa_u(t, x)] d\Pb(x \mid A = t) \\
& \hphantom{=} + \int w(t, x)\{\widehat\kappa_u(t, x) - \kappa_u(t, x) \} d\Pb(x \mid A = t) + (\Pn - \Pb) \widehat\kappa_u(t, X; \widehat{q}_u) \\
& = \int \widehat{w}(t, x)[\E\{ s(Z; \widehat{q}_u) \mid A = t, X = x\} - \kappa_u(t, x)] d\Pb(x \mid A = t) \\
& \hphantom{=} + \int \{w(t, x) - \widehat{w}(t, x)\}\{\widehat\kappa_u(t, x) - \kappa_u(t, x) \} d\Pb(x \mid A = t) \\
& \hphantom{=} + (\Pn - \Pb) \widehat\kappa_u(t, X; \widehat{q}_u)
\end{align*}
The bound then follows by the Cauchy-Schwarz inequality. 
\subsection{Proof of Proposition \ref{prop:sample_doubly_valid}}
We prove the result for the upper bound, as the proof for the lower bound is analogous. Let $\widehat\E_n(\cdot \mid A = t)$ denote the second-stage regression based on linear smoothing. Define
\begin{align*}
& \widetilde\varphi_u(Z; \widehat{w}, \widehat\kappa_u, \widehat{q}_u, w, q_u)\\
& = \varphi_u(Z; \widehat{w}, \widehat\kappa_u, \widehat{q}_u) \\
& \hphantom{=} - w(A, X)[Y - \widehat{q}_u(A, X)]\left[\gamma^{\sgn\{Y - \widehat{q}_u(A, X)\}} - \gamma^{\sgn\{Y - q_u(A, X)\}} \right],
\end{align*}
where
\begin{align*}
\varphi_u(Z; \widehat{w}, \widehat\kappa_u, \widehat{q}_u) = \widehat{w}(A, X)\{s_u(Z; \widehat{q}_u) - \widehat\kappa_u(A, X; \widehat{q}_u)\} + \frac{1}{n} \sum_{i = 1}^n \widehat\kappa_u(A, X_i; \widehat{q}_u)
\end{align*}
We have $\varphi_u(Z; \widehat{w}, \widehat\kappa_u, \widehat{q}_u) \geq \widetilde\varphi_u(Z; \widehat{w}, \widehat\kappa_u, \widehat{q}_u, w, q_u)$ and, deterministically by assumption, $$\widehat\E_n\{\varphi_u(Z; \widehat{w}, \widehat\kappa_u, \widehat{q}_u)\mid A = t\} \geq \widehat\E_n\{\widetilde\varphi_u(Z; \widehat{w}, \widehat\kappa_u, \widehat{q}_u, w, q_u) \mid A = t\}.$$ 
Let
\begin{align*}
& \overline\varphi_u(Z; w, \overline\kappa_u, \overline{q}_u, q_u) \\
& = \widetilde\varphi_u(Z; w, \overline\kappa_u, \overline{q}_u, w, q_u) - \frac{1}{n} \sum_{i = 1}^n \widehat\kappa_u(A, X_i; \widehat{q}_u) + \int \overline\kappa_u(A, x; \overline{q}_u) d\Pb(x) \\
& = w(A, X)\{s_u(Z; \overline{q}_u) - \overline\kappa(A, X; \overline{q}_u)\} + \int \overline\kappa(A, x; \overline{q}_u) d\Pb(x) \\
& \hphantom{=} - w(A, X)[Y - \overline{q}_u(A, X)]\left[\gamma^{\sgn\{Y - \overline{q}_u(A, X)\}} - \gamma^{\sgn\{Y - q_u(A, X)\}} \right]
\end{align*}
and notice that, because $\E[\gamma^{\{Y - q_u(t, x)\}} \mid A = t, X = x] = 1$:
\begin{align*}
& \E\left\{ \overline\varphi_u(Z; w, \overline\kappa_u, \overline{q}_u, q_u) \mid A = t,  X = x\right\} \\
& = \int \overline\kappa(t, x; \overline{q}_u) d\Pb(x) \\
& \hphantom{=} - w(t, x) \E\left[ \{Y - \overline{q}_u(t, x)\}\gamma^{\sgn\{Y - \overline{q}_u(t,x)\}} \mid A = t, X = x\right] \\
& \hphantom{=} + w(t, x)\E\left[Y\gamma^{\sgn\{Y - q_u(t, x)\}} \mid A = t, X = x\right] - w(t, x)\overline{q}_u(t, x) \\
& = \int \overline\kappa(t, x; \overline{q}_u) d\Pb(x) - w(t, x) \overline\kappa(t, x; \overline{q}_u) \\
& \hphantom{=} + w(t, x)\E\left[Y\gamma^{\sgn\{Y - q_u(t, x)\}} \mid A = t, X = x\right]
\end{align*}
Therefore, $\E\left\{ \overline\varphi_u(Z; w, \overline\kappa_u, \overline{q}_u, q_u) \mid A = t\right\} = \theta_u(t; \gamma)$. 

By the reasoning in \cite{kennedy2020optimal} and used to prove Proposition 2.2, one has
\begin{align*}
& \widehat\E_n\{\widetilde\varphi_u(Z; \widehat{w}, \widehat\kappa_u, \widehat{q}_u, q_u) \mid A = t\} - \theta_u(t; \gamma) \\
& = \widehat\E_n\{\widetilde\varphi_u(Z; \widehat{w}, \widehat\kappa_u, \widehat{q}_u, q_u) \mid A = t\} - \E\{\overline\varphi_u(Z; w, \overline\kappa_u, \overline{q}_u, q_u) \mid A = t\} \\
& = O_\Pb\left(R_u(t) \right) \\
& \hphantom{=} + \frac{1}{n}\sum_{i = 1}^nW_i(t; A^n)\E\left\{\widetilde\varphi_u(Z; \widehat{w}, \widehat\kappa_u, \widehat{q}_u, w, q_u) - \overline\varphi_u(Z; w, \overline\kappa_u, \overline{q}_u, q_u) \mid A_i, D^n\right\}
\end{align*}
provided that $\sup_z| \widetilde\varphi_u(z; \widehat{w}, \widehat\kappa_u, \widehat{q}_u, q_u) - \overline\varphi_u(z; w, \overline\kappa_u, \overline{q}_u, q_u)| = o_\Pb(1)$. This is the case, because $\widehat{w}$ is consistent for $w$, $\widehat\kappa_u$ is consistent for $\overline\kappa_u$ and $\widehat{q}$ is consistent for $\overline{q}$.

Next, recall that 
\begin{align*}
\theta_u(A; \gamma) & = \E\{\overline\varphi_u(Z; w, \overline\kappa_u, \overline{q}_u, q_u) \mid A\} \\
& = \int W(A, x)\E\{Y \gamma^{\sgn\{Y - q_u(A, X)\}} \mid A, X = x\} d\Pb(x \mid A)
\end{align*} 
so that, because $\E[\gamma^{\sgn\{Y - q_u(A, X)\}} \mid A, X] = 1$:
\begin{align*}
& \E\left(w(A, X)[Y - \widehat{q}_u(A, X)]\gamma^{\sgn\{Y - q_u(A, X)\}} - \overline\varphi_u(Z; w, \overline\kappa_u, \overline{q}_u, q_u) \mid A_i, D^n \right) \\
& = -\int w(A_i, x) \widehat{q}_u(A_i, x) d\Pb(x \mid A_i)
\end{align*}
In turns, this means that
\begin{align*}
& \E\left(-w(A, X)\{Y - \widehat{q}_u(A, X)\}\left[\gamma^{\sgn\{Y - \widehat{q}_u(A, X)\}} - \gamma^{\sgn\{Y - q_u(A, X)\}} \right] \right. \\
& \hphantom{\E\left(-w(A, X)\right.} \left. - \overline\varphi_u(Z; w, \overline\kappa_u, \overline{q}_u, q_u) \mid A_i, D^n\right) \\
& = -\E\{w(A_i, X)s_u(Z; \widehat{q}_u) \mid A_i, D^n)
\end{align*}
yielding
\begin{align*}
& \widehat{b}(A_i) \\
& \equiv \E\{\widetilde\varphi_u(Z; \widehat{w}, \widehat\kappa_u, \widehat{q}_u, w, q_u) - \overline\varphi_u(Z; w, \overline\kappa_u, \overline{q}_u, q_u) \mid A_i, D^n\} \\
& = \int \{\widehat{w}(A_i, x) - w(A_i, x)\} [ \E\{s_u(Z; \widehat{q}_u) \mid A_i, x\} - \widehat\kappa_u(A_i, x; \widehat{q}_u)] d\Pb(x \mid A_i) \\
& \hphantom{=} + (\Pn - \Pb) \widehat\kappa_u(A_i, X; \widehat{q}_u)\\
& = \int \{\widehat{w}(A_i, x) - w(A_i, x)\} [ \E\{s_u(Z; \widehat{q}_u) \mid A_i, x\} - \overline\kappa_u(A_i, x; \overline{q}_u)] d\Pb(x \mid A_i) \\
& \hphantom{=} + \int \{\widehat{w}(A_i, x) - w(A_i, x)\} \{\overline\kappa_u(A_i, x; \overline{q}_u) - \widehat\kappa_u(A_i, x; \widehat{q}_u)\} d\Pb(x \mid A_i) \\
& \hphantom{=} + (\Pn - \Pb) \widehat\kappa_u(A_i, X; \widehat{q}_u)
\end{align*}
As shown in \cite{dorn2021doubly} (Lemma 5), the map $q \mapsto s_u(Z; q)$ is Lipschitz. Therefore, by Cauchy-Schwarz:
\begin{align*}
\left| \widehat{\E}_n\{\widehat{b}(A) \mid A = t\} \right | & \lesssim \sup_{a \in N_t} \left[\|\widehat{w}-w\|_a \{ \|\widehat{q}_u - \overline{q}_u\|_a + \|\widehat\kappa_u - \overline\kappa_u\|_a\} \right. \\
& \hphantom{\lesssim\sup_{a \in N_t} \left[\right. } + \left. \left|(\Pn - \Pb) \widehat\kappa_u(a, X; \widehat{q}_u)\right|\right]
\end{align*}
\section{Additional simulation results}
\subsection{Challenging setup for second-order corrections}
In this section, we investigate a setup whereby pursuing higher-order corrections can potentially deteriorate the estimator's performance. The setup is the same as in Section 4 in the main text except that: 1) the sample size is reduced to $n=1000$, and 2) the function $\mu_{ax}(a, x)$ is replaced by
\begin{align*}
& \mu_{ax}(a, x) = 1.2A + A^2 + \Delta \cdot A[s(X; \delta_1) - \E\{s(X; \delta_1)\}] \\
& \hphantom{\mu_{ax}(a, x) =} + [s(X; \delta_2) - \E\{s(X; \delta_2)\}], \\
& s(x; \delta) = \sum_{j=1}^d 1.5\cdot\one(x_j > \delta_j),
\end{align*}
with $\delta_j$ sampled uniformly in $[-1, 1]$ and $\Delta \in \{0, 3\}$. Estimation of $\mu_{ax}$ and $\pi(a \mid x)$ follows the procedure described in the main text for the Random Forest case. In implementing second-order corrections, \textcolor{black}{we rely on the pseudo-inverse to estimate $\Omega^{-1}$ when the empirical Gram matrix is non-invertible.}
Ideally, in this non-invertibility regime, one should use other estimators of $\Omega$, such as those based on $\int b(x)b(x)^T\widehat{g}(x)dx$, where $\widehat{g}(x) = \int K_{ht}(a) \widehat{p}(a, x) da$ or $\widehat{g}(x) = \widehat{p}(t, x)$, for an estimator $\widehat{p}(a, x)$ of the joint density of $(A, X)$. However, even in moderate dimensions, estimating a joint density nonparametrically, as well as carrying out multidimensional integration, incurs in substantial computational costs; we leave this to future work. Figure \ref{fig:mse_indicators} reports the root-mean-square-errors (RMSEs) for the estimators considered. We cap the RMSEs at 2 to improve the visualization. It can be seen that the first-order methods perform well in the setting when the interaction $A \cdot X$ is not present ($\Delta = 0$). For $k = 16$ or $k = 32$ (corresponding to 321 or 641 basis terms), pursuing higher-order corrections leads to a substantially deteriorated, unstable performance. For $\delta =3$, consistent with the results from Section 4 in the main text based on Random Forest for the nuisances, the first-order estimators suffer from larger errors in regions of $A$ far away from zero. However, while higher-order corrections based on smaller $k$ are still able to correct a portion of the first-order estimation error even at the current sample size $n=1000$, those based on larger $k$ become numerically unstable and lead to a deteriorated performance. \textcolor{black}{This simulation setting is particularly challenging for the higher-order estimators because the sample size is small relative to the number of basis terms $k$ included in the corrections. In fact, studying reliable, data-driven choices for $k$ is an important avenue for future work. Furthermore, the true function $x \mapsto \mu(a, x)$ is nonsmooth so that approximating it with B-splines can be less effective. However, the finding that, for low values of $k$, higher-order corrections are still able to deliver some improvements relative to first-order estimators suggest this second aspect is less relevant in this setting.}
\begin{figure}[!h]
\centering
\includegraphics[scale=0.4]{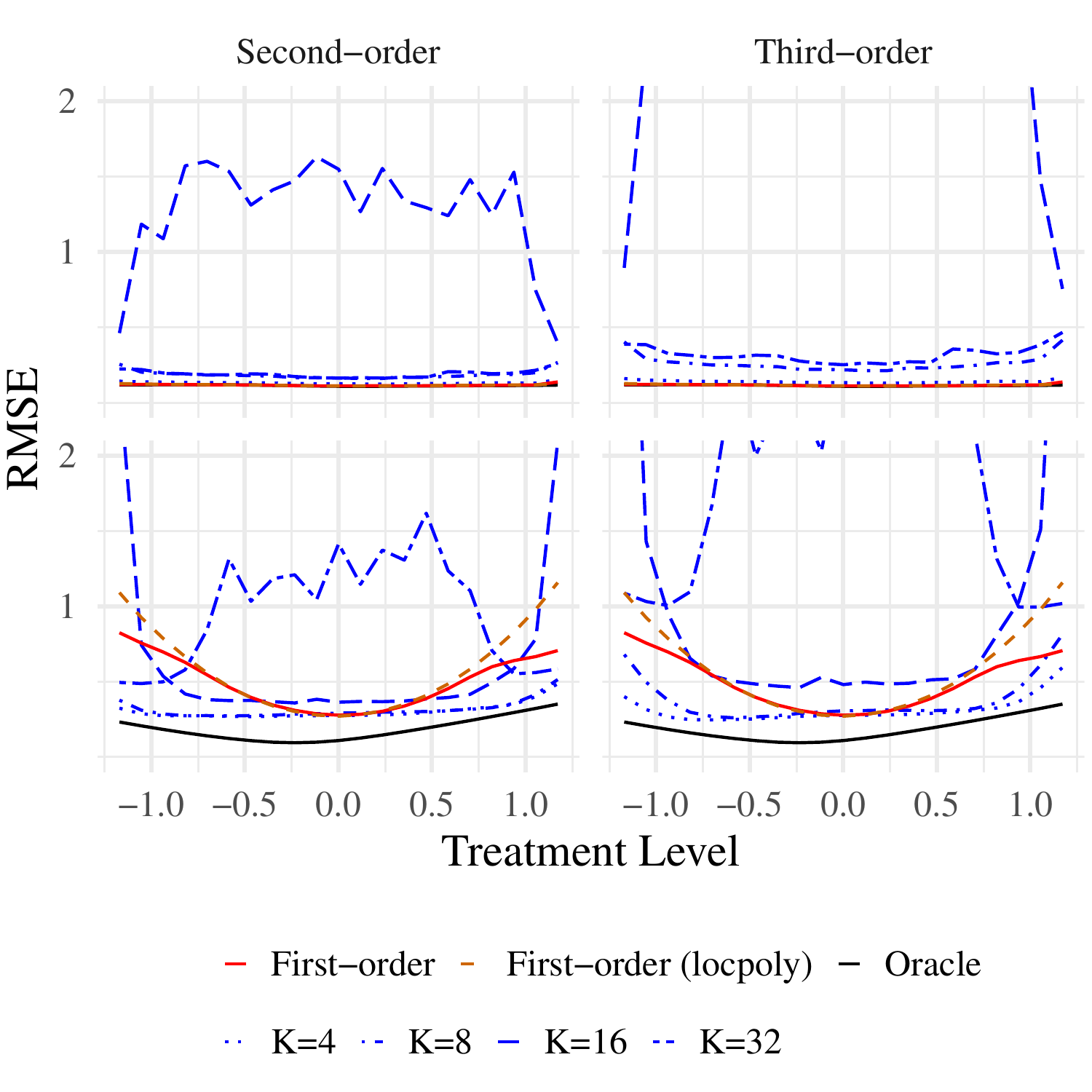}
\caption{RMSE as a function of the evaluation points of the dose--response function in three settings. 
\textit{First row}: no interaction terms between $A$ and $X$ in $\mu(a, x)$; \textit{second row}: $\mu(a, x)$ contains the interaction terms between $A$ and $X$.
\textit{First-order (locpoly)} denotes the first-order estimator obtained by regressing $\widehat{\varphi}(Z)$ on $A$ using local linear smoothing, whereas \textit{First-order} denotes $\widehat{\theta}_1(t)=\Pn \widehat{f}_0$ from Section~3.3. $k$ refers to the basis dimension in the higher-order corrections. \label{fig:mse_indicators}}
\end{figure}
\subsection{Different choice for $\widehat\Omega$}
In this section, we conduct a cursory exploration of the performance of the second-order estimator considered in Section 4 in the main text when the empirical Gram matrix estimating $\Omega$ is replaced by $\int b(x)b(x)^T \widehat{p}(t, x) dx$, where $\widehat\pi(t, x) = \widehat\pi(t \mid x) \widehat{p}(x)$ with $\widehat\pi(t \mid x)$ estimated as described in Section 4 (misspecified) and $\widehat{p}(x)$ estimated by kernel density estimation using the function \texttt{density} from the package \texttt{stats}. This estimator of $\Omega$ is essentially that based on $\int K_{ht}(a, t) \widehat{p}(a, x) da$ up to a term of order $h^\beta$. Because of computational costs, we consider the case when $X$ is one-dimensional. We focus on the setting where both $\mu_x$ and $\mu_{ax}$ are estimated by B-spline regression where the model for the latter does not include the $A \cdot X$ interaction term. In this respect, the nuisance models (including $\widehat{g}$) are misspecified. We note that, because $X$ is one-dimensional, such misspecification should be easily detected by standard model checking procedures; in this sense, the results below should be interpreted as overly pessimistic regarding first-order methods in this setting. Nevertheless, as shown in Figure \ref{fig:mse_ghat}, it is encouraging to note that the second-order corrections are still able to reduce the first-order estimation error. In particular, this is the case even if $\widehat{g}$ is not entirely correctly specified.
\begin{figure}[!h]
\centering
\includegraphics[scale=0.4]{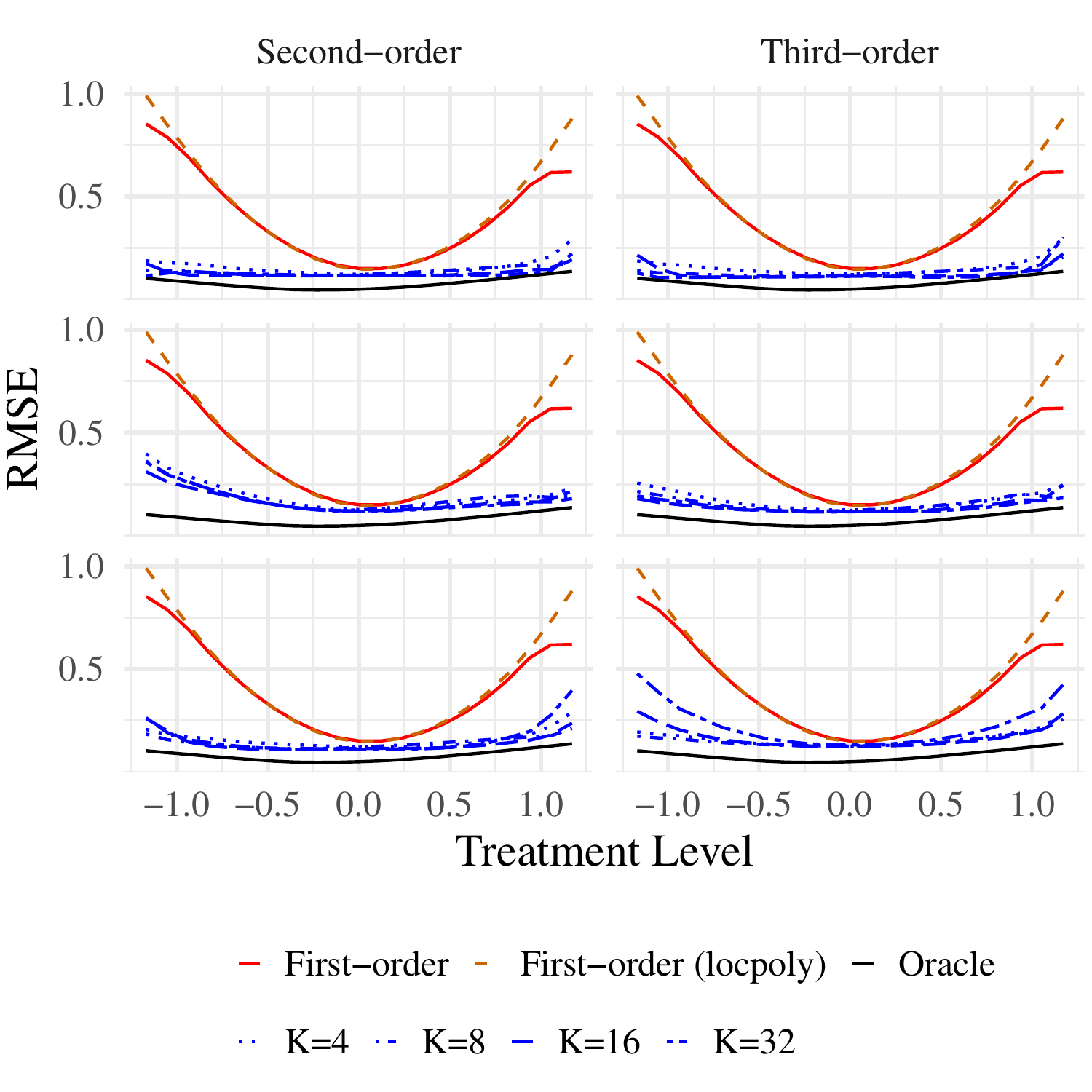}
\caption{RMSE as a function of the evaluation points of the dose--response function in three settings. 
\textit{Top row}: $\Omega$ is estimated by the empirical counterpart; \textit{middle row}: $\Omega$ is estimated based on the estimated weight $\widehat{g}(x)$; \textit{bottom row}: $\Omega$ is computed using the true weight $g(x)$. 
\textit{First-order (locpoly)} denotes the first-order estimator obtained by regressing $\widehat{\varphi}(Z)$ on $A$ using local linear smoothing, whereas \textit{First-order} denotes $\widehat{\theta}_1(t)=\Pn \widehat{f}_0$ from Section~3.3. $k$ refers to the basis dimension in the higher-order corrections. \label{fig:mse_ghat}}
\end{figure}
\subsection{Sensitivity model from Section \ref{section:sens}}
In this section, we explore the performance of a DR-learner based on local-linear second-stage regression as an estimator of the bounds on the dose-response function derived in Section \ref{section:sens} under the proposed sensitivity model. We let the set up as in Section 4 from the main text and consider the case $\gamma = 1.2$. The conditional quantiles are estimated as the quantile of a normal random variable with mean $\widehat\mu(a, x)$ and standard deviation 1. We follow the exact estimation procedure as in the main text, but, for simplicity, consider two choices for $\mu_{ax}$: 1) based on B-spline regression with correctly specified $A\cdot X$ interaction term, and 2) based on a Random Forest estimated on an independent sample. We consider the exact same models for $\kappa_j$ as well. As mentioned in the main text, the estimator of $\pi(a \mid x)$ is not correctly specified. Figure \ref{fig:mse_sens} reports the boxplots showing the estimated bounds across 1000 simulations, along with the dose-response function and the true bound. It can be seen that when the quantiles are based on $\mu_{ax}$ estimated by $B$-spline regression with the interaction term, and $\kappa_j$ is also based on the same $B$-spline model, the bounds are estimated well. When the quantiles are based on $\mu_{ax}$ estimated with Random Forest while $\kappa_j$ is still estimated by B-spline regression, the bounds remain valid but much conservative. Conversely, when both $\mu_{ax}$ and $\kappa_j$ are estimated with Random Forest, the estimation of the bounds is rather poor and, in particular, anti-conservative.
\begin{figure}[!h]
\centering
\begin{subfigure}[t]{0.32\textwidth}
\includegraphics[width=\textwidth]{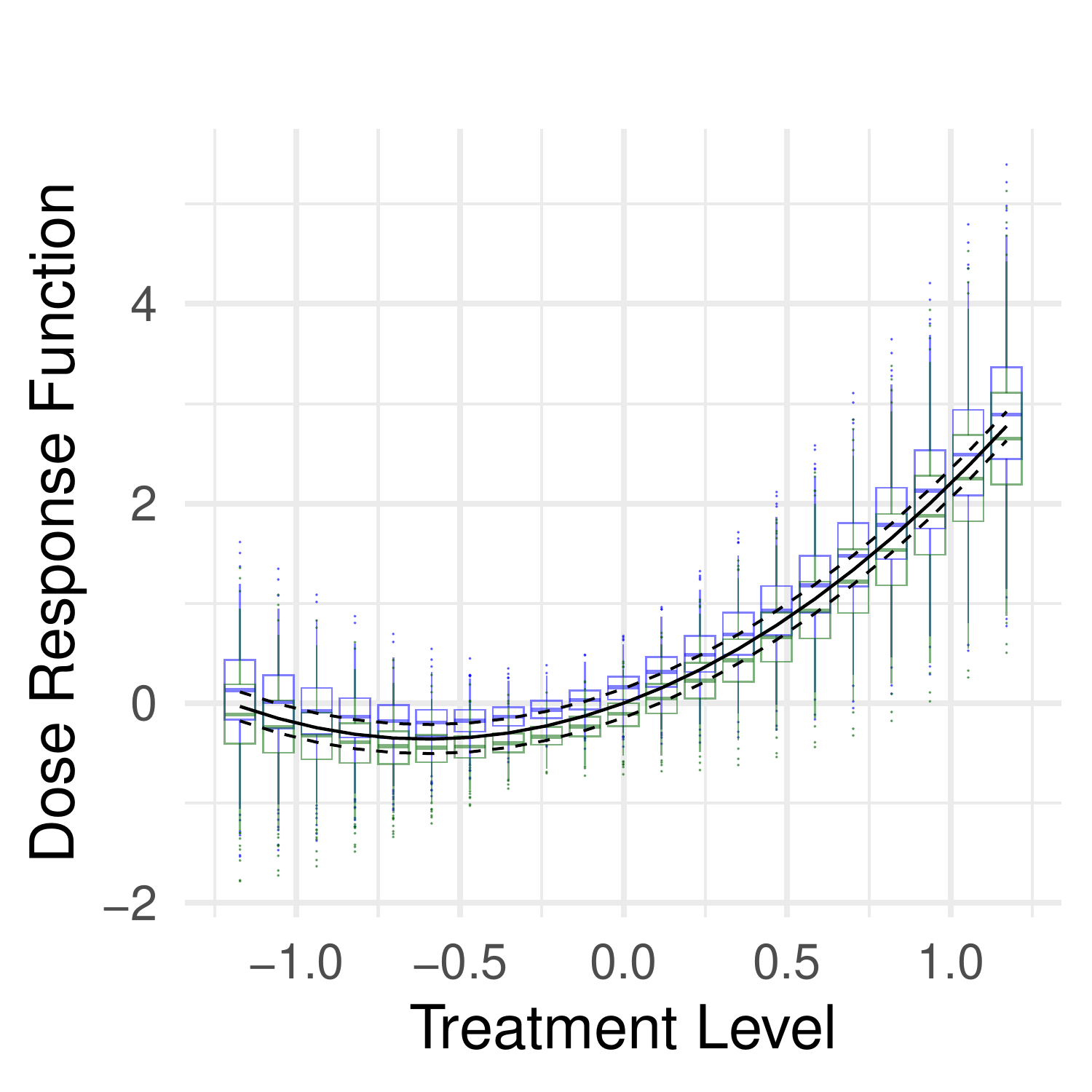}
\caption{$\widehat\mu_{ax}$ and $\widehat\kappa_j$ based on $B$-spline regression with interaction term $A \cdot X$.}
\end{subfigure}
\hfill
\begin{subfigure}[t]{0.32\textwidth}
\includegraphics[width=\textwidth]{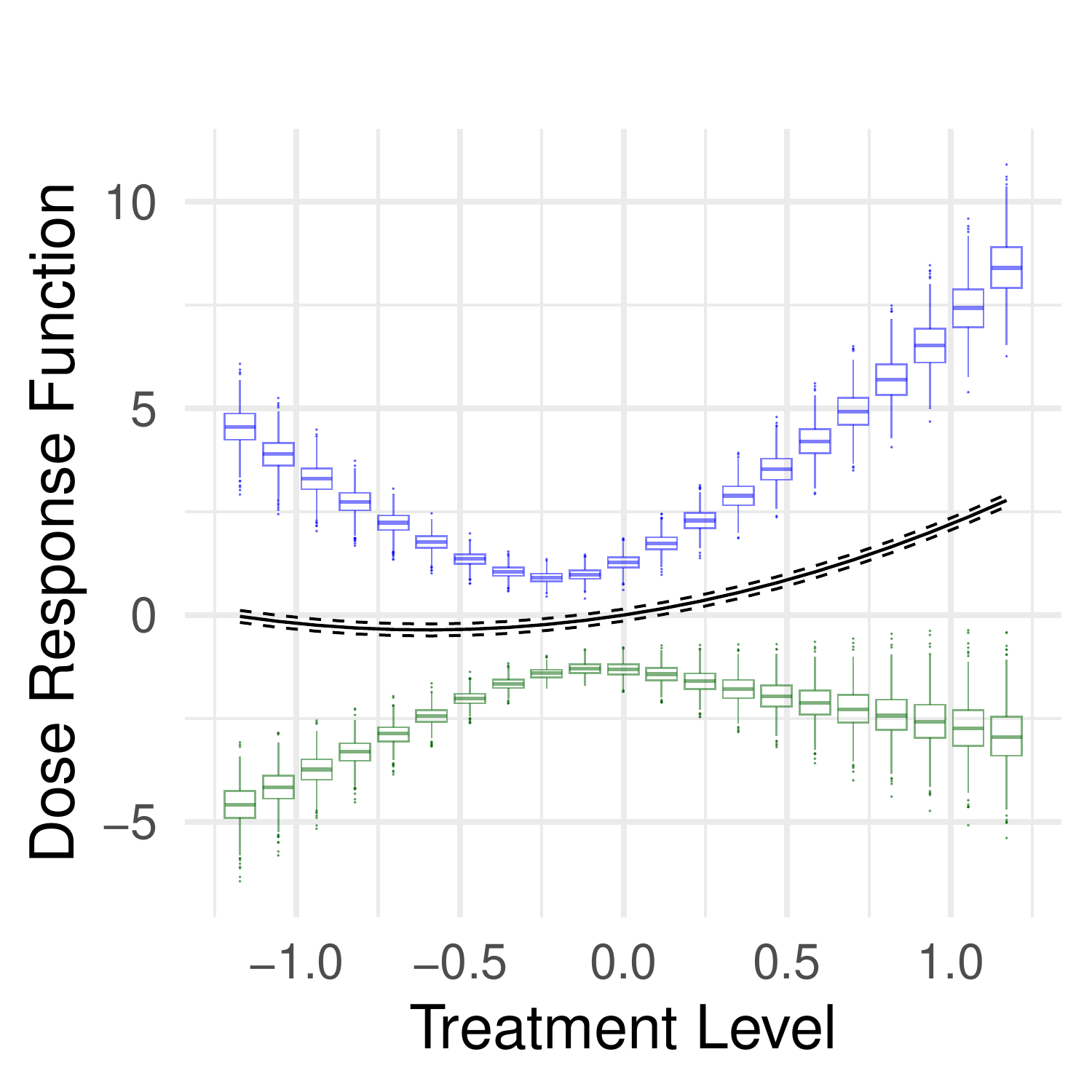}
\caption{$\widehat\mu_{ax}$ based on Random Forest and $\widehat\kappa_j$ based on $B$-spline regression with interaction term $A \cdot X$.}
\end{subfigure}
\begin{subfigure}[t]{0.32\textwidth}
\includegraphics[width=\textwidth]{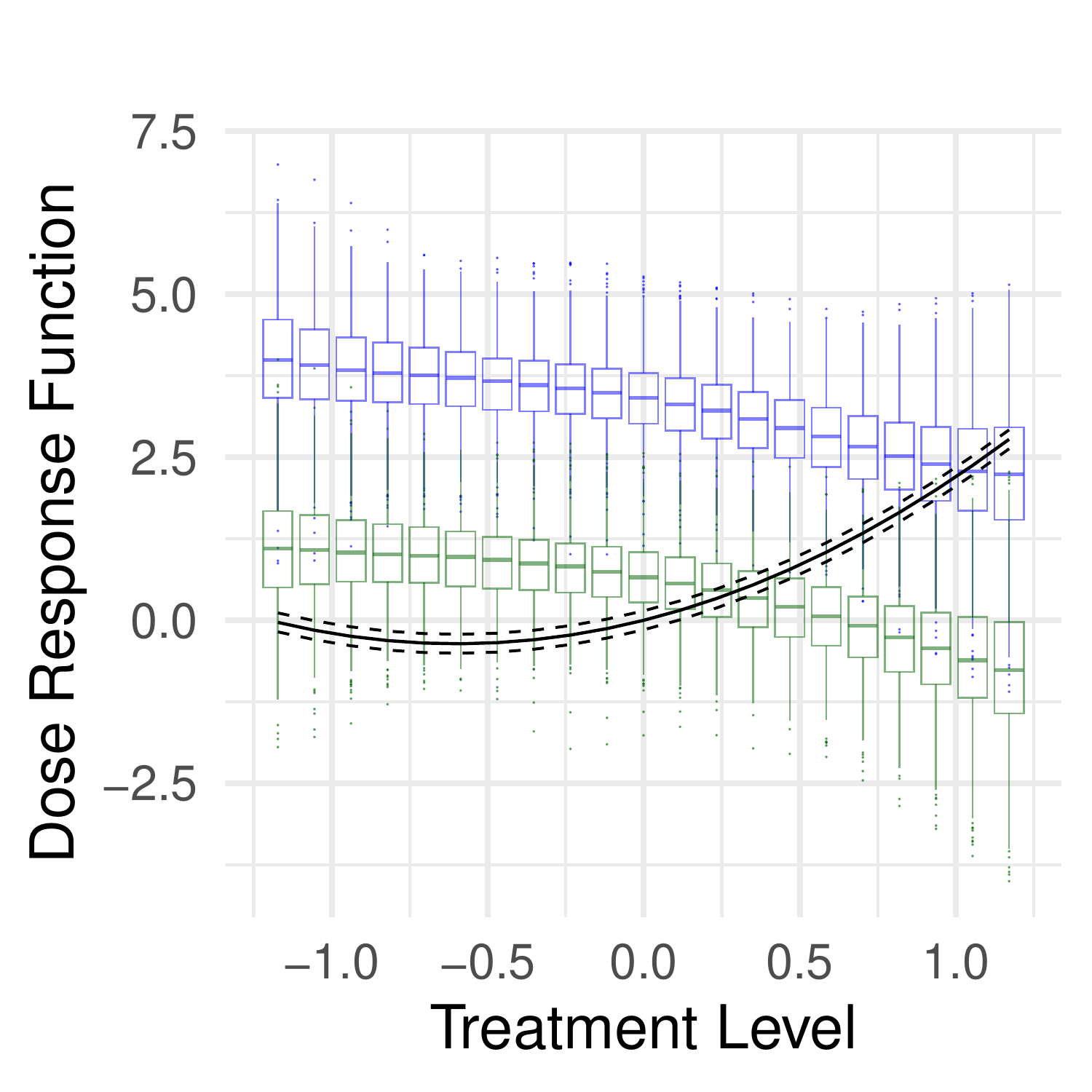}
\caption{$\widehat\mu_{ax}$ and $\widehat\kappa_j$ based on Random Forest.}
\end{subfigure}
\caption{Boxplots representing the estimated bounds across 1000 simulations, together with the true dose-response function and the true bounds. $\gamma = 1.2$. The conditional quantile is equal to that of a normal random variable with mean $\widehat\mu_{ax}$ and standard deviation equal to 1.}
\label{fig:mse_sens}
\end{figure}
\section{Application of the sensitivity analysis model to the data analysis}
In this section, we apply our proposed sensitivity model to the data analysis described in Section 5 of the main text. The standard error from the main analysis is already wide enough so that a pointwise 95\% confidence band includes the constant dose-response function; thus a sensitivity analysis in this case is less relevant. Nevertheless, we pursue it here for illustrative purposes. We set $\gamma = 1.2$ for simplicity, while future work is needed to properly calibrate the sensitivity analysis so that a range of plausible values for $\gamma$ could be better specified. We estimate the quantiles by Random Forest using the \texttt{quantile\_forest} function from the \texttt{grf} package in \texttt{R} \citep{tibshirani2018package}. The second-stage regression of $s_j(Z; \widehat{q}_j)$ onto $(A, X)$ is done by Random Forest too (\texttt{ranger} package, \cite{wright2019package}). We employ five-fold cross-fitting and select the tuning parameters by five-fold cross-validation. After constructing the updated pseudo-outcomes corresponding to the upper and lower bounds, we regress them on $A$ using local linear smoothing via the \texttt{lprobust} function with default parameters \citep{calonico2019nprobust}. Figure \ref{fig:sens_da} simply adds the estimated bounds to the original plot described in the main text. As the bounds are fairly wide, we do not add $\pm 1.96$ times their standard errors to the plot. We notice that the bounds are roughly constant shifts of the estimated dose-response (under no unmeasured confounding) along the $Y$-axis. 
\begin{figure}[!h]
\centering
\includegraphics[scale=0.3]{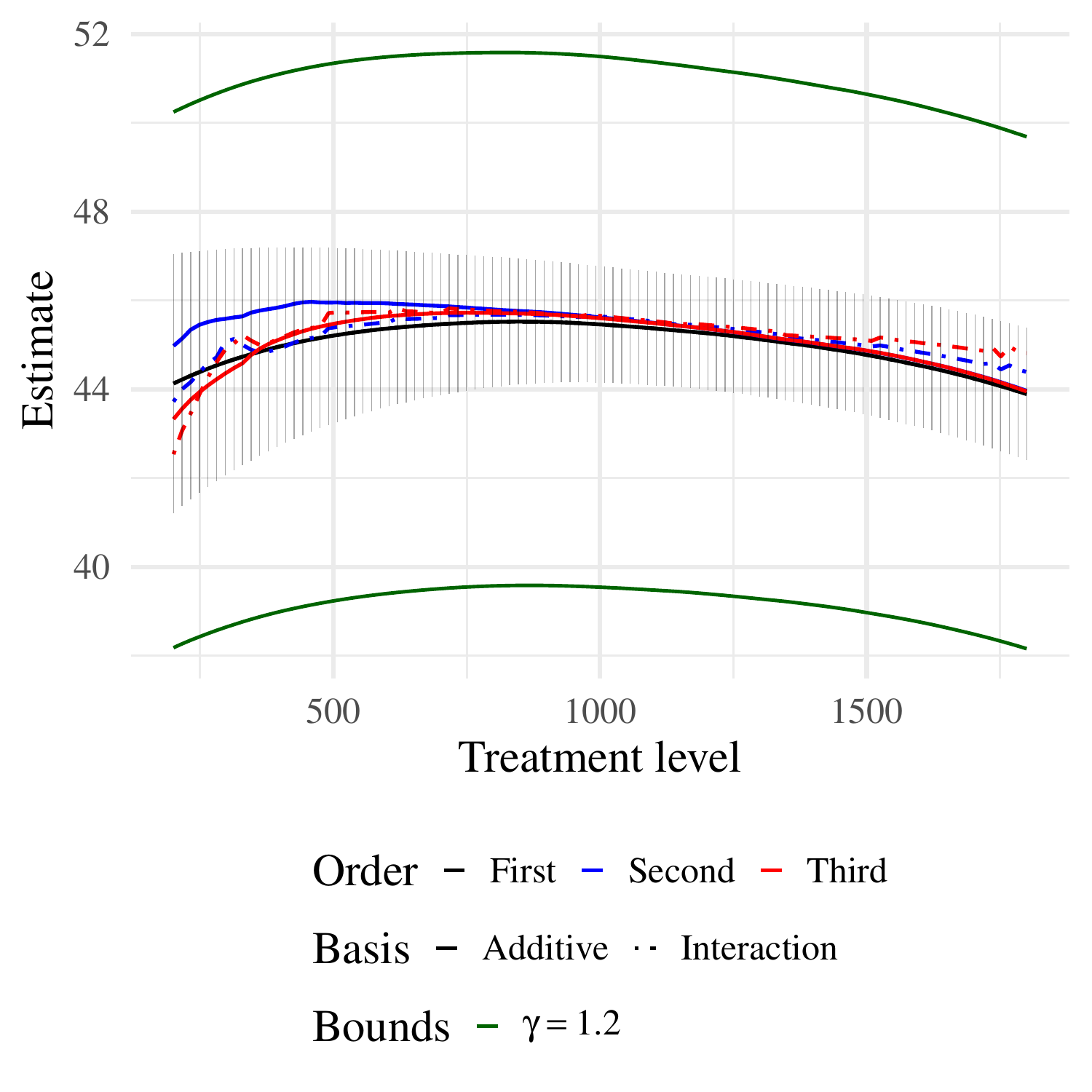}
\caption{Estimates of $\E(Y^a)$ along with the upper and lower bounds from the sensitivity model of Section \ref{section:sens} from our analysis of the Jobs Corps dataset from \cite{colangelo2020double}.\label{fig:sens_da}}
\end{figure}
\bibliographystyle{plainnat}
\bibliography{references}

@article{chen2025computing,
	title={On computing and the complexity of computing higher-order $ U $-statistics, exactly},
	author={Chen, Xingyu and Zhang, Ruiqi and Liu, Lin},
	journal={arXiv preprint arXiv:2508.12627},
	year={2025}
}

@article{koenker1978regression,
	title={Regression quantiles},
	author={Koenker, Roger and Bassett Jr, Gilbert},
	journal={Econometrica: journal of the Econometric Society},
	pages={33--50},
	year={1978},
	publisher={JSTOR}
}

@article{calonico2019nprobust,
	title={nprobust: Nonparametric kernel-based estimation and robust bias-corrected inference},
	author={Calonico, Sebastian and Cattaneo, Matias D and Farrell, Max H},
	journal={Journal of Statistical Software},
	volume={91},
	pages={1--33},
	year={2019}
}

@article{balakrishnan2023fundamental,
	title={The Fundamental Limits of Structure-Agnostic Functional Estimation},
	author={Balakrishnan, Sivaraman and Kennedy, Edward H. and Wasserman, Larry},
	journal={arXiv preprint arXiv:2305.04116},
	year={2023}
}

@article{cinelli2025challenges,
	title={Challenges in statistics: A dozen challenges in causality and causal inference},
	author={Cinelli, Carlos and Feller, Avi and Imbens, Guido and Kennedy, Edward and Magliacane, Sara and Zubizarreta, Jose},
	journal={arXiv preprint arXiv:2508.17099},
	year={2025}
}

@article{tibshirani2018package,
	title={Package ‘grf’},
	author={Tibshirani, Julie and Athey, Susan and Friedberg, Rina and Hadad, Vitor and Hirshberg, David and Miner, Luke and Sverdrup, Erik and Wager, Stefan and Wright, Marvin and Tibshirani, Maintainer Julie},
	journal={Comprehensive R Archive Network},
	year={2018}
}

@article{duong2024package,
	title={Package ‘ks’},
	author={Duong, Tarn and Duong, Maintainer Tarn and Suggests, MASS},
	journal={Website: https://cran. uvigo. es/web/packages/ks/ks. pdf},
	year={2024}
}

@article{wright2019package,
	title={Package ‘ranger’},
	author={Wright, Marvin N and Wager, Stefan and Probst, Philipp and Wright, Maintainer Marvin N},
	journal={Version 0.11},
	volume={2},
	year={2019}
}

@article{calonico2018effect,
	title={On the effect of bias estimation on coverage accuracy in nonparametric inference},
	author={Calonico, Sebastian and Cattaneo, Matias D and Farrell, Max H},
	journal={Journal of the American Statistical Association},
	volume={113},
	number={522},
	pages={767--779},
	year={2018},
	publisher={Taylor \& Francis}
}

@article{takatsu2024debiased,
	title={Debiased inference for a covariate-adjusted regression function},
	author={Takatsu, Kenta and Westling, Ted},
	journal={Journal of the Royal Statistical Society Series B: Statistical Methodology},
	year={2024},
	publisher={Oxford University Press UK}
}

@article{liu2023new,
	title={New root-n-consistent, numerically stable higher-order influence function estimators},
	author={Liu, Lin and Li, Chang},
	journal={arXiv preprint arXiv:2302.08097},
	year={2023}
}

@article{liu2024assumption,
	title={Assumption-lean falsification tests of rate double-robustness of double-machine-learning estimators},
	author={Liu, Lin and Mukherjee, Rajarshi and Robins, James M},
	journal={Journal of Econometrics},
	volume={240},
	number={2},
	pages={105500},
	year={2024},
	publisher={Elsevier}
}

@incollection{white1992nonparametric,
	title={Nonparametric estimation of conditional quantiles using neural networks},
	author={White, Halbert},
	booktitle={Computing Science and Statistics: Statistics of Many Parameters: Curves, Images, Spatial Models},
	pages={190--199},
	year={1992},
	publisher={Springer}
}

@article{athey2019generalized,
	title={Generalized Random Forests},
	author={Athey, Susan and Tibshirani, Julie and Wager, Stefan},
	journal={The Annals of Statistics},
	volume={47},
	number={2},
	pages={1148--1178},
	year={2019},
	publisher={JSTOR}
}

@incollection{belloni2017high,
	title={High-dimensional quantile regression},
	author={Belloni, Alexandre and Chernozhukov, Victor and Kato, Kengo},
	booktitle={Handbook of quantile regression},
	pages={253--272},
	year={2017},
	publisher={Chapman and Hall/CRC}
}

@article{liu2021adaptive,
	title={Adaptive estimation of nonparametric functionals},
	author={Liu, Lin and Mukherjee, Rajarshi and Robins, James M and Tchetgen, Eric Tchetgen},
	journal={Journal of Machine Learning Research},
	volume={22},
	number={99},
	pages={1--66},
	year={2021}
}

@incollection{gine2000exponential,
	title={Exponential and moment inequalities for U-statistics},
	author={Gin{\'e}, Evarist and Lata{\l}a, Rafa{\l} and Zinn, Joel},
	booktitle={High Dimensional Probability II},
	pages={13--38},
	year={2000},
	publisher={Springer}
}

@article{mcclean2024double,
	title={Double cross-fit doubly robust estimators: Beyond series regression},
	author={McClean, Alec and Balakrishnan, Sivaraman and Kennedy, Edward H and Wasserman, Larry},
	journal={arXiv preprint arXiv:2403.15175},
	year={2024}
}

@article{dorn2024doubly,
	title={Doubly-valid/doubly-sharp sensitivity analysis for causal inference with unmeasured confounding},
	author={Dorn, Jacob and Guo, Kevin and Kallus, Nathan},
	journal={Journal of the American Statistical Association},
	pages={1--12},
	year={2024},
	publisher={Taylor \& Francis}
}

@article{newey2018cross,
	title={Cross-fitting and fast remainder rates for semiparametric estimation},
	author={Newey, Whitney K and Robins, James R},
	journal={arXiv preprint arXiv:1801.09138},
	year={2018}
}

@article{rambachan2022robust,
	title={Robust design and evaluation of predictive algorithms under unobserved confounding},
	author={Rambachan, Ashesh and Coston, Amanda and Kennedy, Edward},
	journal={arXiv preprint arXiv:2212.09844},
	year={2022}
}

@article{rubin1974estimating,
  title={Estimating causal effects of treatments in randomized and nonrandomized studies.},
  author={Rubin, Donald B},
  journal={Journal of educational Psychology},
  volume={66},
  number={5},
  pages={688},
  year={1974},
  publisher={American Psychological Association}
}

@article{robins2009semiparametric,
	title={Semiparametric minimax rates},
	author={Robins, James and Tchetgen, Eric Tchetgen and Li, Lingling and van der Vaart, Aad},
	journal={Electronic journal of statistics},
	volume={3},
	pages={1305},
	year={2009},
	publisher={NIH Public Access}
}

@article{kennedy2020optimal,
	title={Towards optimal doubly robust estimation of heterogeneous causal effects},
	author={Kennedy, Edward H},
	journal={Electronic Journal of Statistics},
	volume={17},
	number={2},
	pages={3008--3049},
	year={2023},
	publisher={The Institute of Mathematical Statistics and the Bernoulli Society}
}

@incollection{robins2008higher,
	title={Higher order influence functions and minimax estimation of nonlinear functionals},
	author={Robins, James and Li, Lingling and Tchetgen, Eric and van der Vaart, Aad and others},
	booktitle={Probability and statistics: essays in honor of David A. Freedman},
	pages={335--421},
	year={2008},
	publisher={Institute of Mathematical Statistics}
}

@book{tsybakov2008introduction,
	title={Introduction to nonparametric estimation},
	author={Tsybakov, Alexandre B},
	year={2008},
	publisher={Springer Science \& Business Media}
}

@book{van2003unified,
	title={Unified methods for censored longitudinal data and causality},
	author={Van der Laan, Mark J and Laan, MJ and Robins, James M},
	year={2003},
	publisher={Springer Science \& Business Media}
}

@article{kennedy2024minimax,
	title={Minimax rates for heterogeneous causal effect estimation},
	author={Kennedy, Edward H and Balakrishnan, Sivaraman and Robins, James M and Wasserman, Larry},
	journal={The Annals of Statistics},
	volume={52},
	number={2},
	pages={793--816},
	year={2024},
	publisher={Institute of Mathematical Statistics}
}

@article{robins2009quadratic,
	title={Quadratic semiparametric von mises calculus},
	author={Robins, James and Li, Lingling and Tchetgen, Eric and van der Vaart, Aad W},
	journal={Metrika},
	volume={69},
	number={2},
	pages={227--247},
	year={2009},
	publisher={Springer}
}

@article{mukherjee2017semiparametric,
	title={Semiparametric efficient empirical higher order influence function estimators},
	author={Liu, Lin and Mukherjee, Rajarshi and Newey, Whitney K and Robins, James M},
	journal={arXiv preprint arXiv:1705.07577},
	year={2017}
}

@article{kennedy2017nonparametric,
	title={Nonparametric methods for doubly robust estimation of continuous treatment effects},
	author={Kennedy, Edward H and Ma, Zongming and McHugh, Matthew D and Small, Dylan S},
	journal={Journal of the Royal Statistical Society. Series B, Statistical Methodology},
	volume={79},
	number={4},
	pages={1229},
	year={2017},
	publisher={NIH Public Access}
}

@article{colangelo2020double,
	title={Double debiased machine learning nonparametric inference with continuous treatments},
	author={Colangelo, Kyle and Lee, Ying-Ying},
	journal={Journal of Business \& Economic Statistics},
	volume={44},
	number={1},
	pages={67--79},
	year={2026},
	publisher={Taylor \& Francis}
}

@article{semenova2017debiased,
	title={Debiased machine learning of conditional average treatment effects and other causal functions},
	author={Semenova, Vira and Chernozhukov, Victor},
	journal={The Econometrics Journal},
	volume={24},
	number={2},
	pages={264--289},
	year={2021},
	publisher={Oxford University Press}
}

@article{flores2007estimation,
	title={Estimation of dose-response functions and optimal doses with a continuous treatment},
	author={Flores, Carlos A},
	journal={University of Miami, Department of Economics, November},
	year={2007},
	publisher={Citeseer}
}

@article{galvao2015uniformly,
	title={Uniformly semiparametric efficient estimation of treatment effects with a continuous treatment},
	author={Galvao, Antonio F and Wang, Liang},
	journal={Journal of the American Statistical Association},
	volume={110},
	number={512},
	pages={1528--1542},
	year={2015},
	publisher={Taylor \& Francis}
}

@article{ai2018unified,
	title={A unified framework for efficient estimation of general treatment models},
	author={Ai, Chunrong and Linton, Oliver and Motegi, Kaiji and Zhang, Zheng},
	journal={Quantitative Economics},
	volume={12},
	number={3},
	pages={779--816},
	year={2021},
	publisher={Wiley Online Library}
}

@article{newey1994kernel,
	title={Kernel estimation of partial means and a general variance estimator},
	author={Newey, Whitney K},
	journal={Econometric Theory},
	pages={233--253},
	year={1994},
	publisher={JSTOR}
}

@book{wasserman2006all,
	title={All of nonparametric statistics},
	author={Wasserman, Larry},
	year={2006},
	publisher={Springer Science \& Business Media}
}

@article{neugebauer2007nonparametric,
	title={Nonparametric causal effects based on marginal structural models},
	author={Neugebauer, Romain and van der Laan, Mark},
	journal={Journal of Statistical Planning and Inference},
	volume={137},
	number={2},
	pages={419--434},
	year={2007},
	publisher={Elsevier}
}

@incollection{robins2000marginal,
	title={Marginal structural models versus structural nested models as tools for causal inference},
	author={Robins, James M},
	booktitle={Statistical models in epidemiology, the environment, and clinical trials},
	pages={95--133},
	year={2000},
	publisher={Springer}
}

@article{diaz2013targeted,
	title={Targeted data adaptive estimation of the causal dose--response curve},
	author={D{\'\i}az, Iv{\'a}n and van der Laan, Mark J},
	journal={Journal of Causal Inference},
	volume={1},
	number={2},
	pages={171--192},
	year={2013},
	publisher={De Gruyter}
}

@book{wainwright2019high,
	title={High-dimensional statistics: A non-asymptotic viewpoint},
	author={Wainwright, Martin J},
	volume={48},
	year={2019},
	publisher={Cambridge University Press}
}

@article{foster2019orthogonal,
	title={Orthogonal statistical learning},
	author={Foster, Dylan J and Syrgkanis, Vasilis},
	journal={The Annals of Statistics},
	volume={51},
	number={3},
	pages={879--908},
	year={2023},
	publisher={Institute of Mathematical Statistics}
}

@article{robins2017higher,
	author = {James M. Robins and Lingling Li and Rajarshi Mukherjee and Eric Tchetgen Tchetgen and Aad van der Vaart},
	title = {{Minimax estimation of a functional on a structured high-dimensional model}},
	volume = {45},
	journal = {The Annals of Statistics},
	number = {5},
	publisher = {Institute of Mathematical Statistics},
	pages = {1951 -- 1987},
	year = {2017},
}

@article{kennedy2022semiparametric,
  title={Semiparametric doubly robust targeted double machine learning: a review},
  author={Kennedy, Edward H},
  journal={arXiv preprint arXiv:2203.06469},
  year={2022}
}

@article{bertin2004asymptotically,
  title={Asymptotically exact minimax estimation in sup-norm for anisotropic H{\"o}lder classes},
  author={Bertin, Karine},
  journal={Bernoulli},
  volume={10},
  number={5},
  pages={873--888},
  year={2004},
  publisher={Bernoulli Society for Mathematical Statistics and Probability}
}

@article{hoffman2002random,
  title={Random rates in anisotropic regression (with a discussion and a rejoinder by the authors)},
  author={Hoffman, M and Lepski, Oleg},
  journal={The Annals of Statistics},
  volume={30},
  number={2},
  pages={325--396},
  year={2002},
  publisher={Institute of Mathematical Statistics}
}

@article{efromovich2007conditional,
  title={Conditional density estimation in a regression setting},
  author={Efromovich, Sam},
  journal={The Annals of Statistics},
  volume={35},
  number={6},
  pages={2504--2535},
  year={2007},
  publisher={Institute of Mathematical Statistics}
}

@inproceedings{li2005robust,
  title={Robust inference with higher order influence functions: Parts I and II},
  author={Li, Lingling and Tchetgen, Eric and Robins, J and van der Vaart, A},
  booktitle={Joint Statistical Meetings, Minneapolis, Minnesota},
  year={2005}
}

@article{van2006statistical,
  title={Statistical inference for variable importance},
  author={Van der Laan, Mark J},
  journal={The International Journal of Biostatistics},
  volume={2},
  number={1},
  year={2006},
  publisher={De Gruyter}
}

@article{westling2020causal,
  title={Causal isotonic regression},
  author={Westling, Ted and Gilbert, Peter and Carone, Marco},
  journal={Journal of the Royal Statistical Society: Series B (Statistical Methodology)},
  volume={82},
  number={3},
  pages={719--747},
  year={2020},
  publisher={Wiley Online Library}
}

@article{westling2020unified,
  title={A unified study of nonparametric inference for monotone functions},
  author={Westling, Ted and Carone, Marco},
  journal={Annals of statistics},
  volume={48},
  number={2},
  pages={1001},
  year={2020},
  publisher={NIH Public Access}
}

@article{bonvini2022MSM,
	title={Sensitivity analysis for marginal structural models},
	author={Bonvini, Matteo and Kennedy, Edward and Ventura, Valerie and Wasserman, Larry},
	journal={arXiv preprint arXiv:2210.04681},
	year={2022}
}

@article{dorn2021doubly,
	title={Doubly-valid/doubly-sharp sensitivity analysis for causal inference with unmeasured confounding},
	author={Dorn, Jacob and Guo, Kevin and Kallus, Nathan},
	journal={Journal of the American Statistical Association},
	pages={1--12},
	year={2024},
	publisher={Taylor \& Francis}
}

@book{fan2018local,
  title={Local polynomial modelling and its applications},
  author={Fan, Jianqing and Gijbels, Irene},
  year={2018},
  publisher={Routledge}
}

@article{singh2020reproducing,
  title={Kernel Methods for Causal Functions: Dose, Heterogeneous, and Incremental Response Curves},
  author={Singh, Rahul and Xu, Liyuan and Gretton, Arthur},
  journal={arXiv preprint arXiv:2010.04855},
  year={2020}
}

@article{van2014higher,
  title={Higher order tangent spaces and influence functions},
  author={van der Vaart, Aad},
  journal={Statistical Science},
  pages={679--686},
  year={2014},
  publisher={JSTOR}
}

@article{belloni2015some,
  title={Some new asymptotic theory for least squares series: Pointwise and uniform results},
  author={Belloni, Alexandre and Chernozhukov, Victor and Chetverikov, Denis and Kato, Kengo},
  journal={Journal of Econometrics},
  volume={186},
  number={2},
  pages={345--366},
  year={2015},
  publisher={Elsevier}
}
\end{document}